\def\BibTeX{{\rm B\kern-.05em{\sc i\kern-.025em b}\kern-.08em
    T\kern-.1667em\lower.7ex\hbox{E}\kern-.125emX}}
\newcommand{\lb}{\left(}
\newcommand{\rb}{\right)}
\newtheorem{theorem}{{Theorem}}
\newtheorem{lemma}{{Lemma}}
\newtheorem{proposition}[theorem]{Proposition}
\newtheorem{corollary}{{\bf Corollary}}
\newtheorem{defn}{Definition}
\newcommand{\diag}{\mathrm{diag}}
\newcommand{\Real}{\mathbb{R}}
\newcommand{\Complex}{\mathbb{C}}
\newcommand{\ls}{\left[}
\newcommand{\rs}{\right]}
\newcommand{\lc}{\left\{}
\newcommand{\rc}{\right\}}
\newcommand{\lv}{\left|}
\newcommand{\rv}{\right|}
\newcommand{\lvv}{\left| \left|}
\newcommand{\rvv}{\right| \right|}
\newcommand{\lvvv}{\left| \left| \left|}
\newcommand{\rvvv}{\right| \right| \right|}
\newcommand{\veca}{\mathbf{a}}
\newcommand{\vecc}{\mathbf{c}}
\newcommand{\vecp}{\mathbf{p}}
\newcommand{\vecn}{\mathbf{n}}
\newcommand{\vecx}{\mathbf{x}}
\newcommand{\vecz}{\mathbf{z}}
\newcommand{\vecy}{\mathbf{y}}
\newcommand{\vecv}{\mathbf{v}}
\newcommand{\vecu}{\mathbf{u}}
\newcommand{\vecb}{\mathbf{b}}
\newcommand{\vecd}{\mathbf{d}}
\newcommand{\vecw}{\mathbf{w}}
\newcommand{\matA}{\mathbf{A}}
\newcommand{\matB}{\mathbf{B}}
\newcommand{\matI}{\mathbf{I}}
\newcommand{\matC}{\mathbf{C}}
\newcommand{\matD}{\mathbf{D}}
\newcommand{\matH}{\mathbf{H}}
\newcommand{\matJ}{\mathbf{J}}
\newcommand{\matT}{\mathbf{T}}
\newcommand{\matP}{\mathbf{P}}
\newcommand{\matV}{\mathbf{V}}
\newcommand{\matU}{\mathbf{U}}
\newcommand{\matR}{\mathbf{R}}
\newcommand{\matX}{\mathbf{X}}
\newcommand{\matLambda}{\mathbf{\Lambda}}
\newcommand{\matPhi}{\mathbf{\Phi}}
\newcommand{\vgamma}{\boldsymbol{\boldsymbol{\boldsymbol{\gamma}}}}
\newcommand{\vecyall}{\vecy_{1}, \vecy_{2}, \dots, \vecy_{L}}
\newcommand{\vecxall}{\vecx_{1}, \vecx_{2}, \dots, \vecx_{L}}
\newcommand{\EE}{\mathbb{E}}
\newcommand{\PP}{\mathbb{P}}
\newcommand{\setS}{\mathcal{S}}
\newcommand{\setU}{\mathcal{U}}
\newcommand{\AhB}{\mathbf{A} \circ \mathbf{B}}
\newcommand{\AkB}{\mathbf{A} \otimes \mathbf{B}}
\newcommand{\AkrB}{\mathbf{A} \odot \mathbf{B}}
\begin{document}

\title{On the Restricted Isometry of the Columnwise Khatri-Rao Product}
\author{\authorblockN{Saurabh Khanna, \emph{Student Member, IEEE} and Chandra R. Murthy, \emph{Senior Member, IEEE}}\\
\authorblockA{\begin{tabular}{cc}
%\authorrefmark{1}\authorrefmark{2}
Dept. of ECE, Indian Institute of Science,
Bangalore, India \\
\{saurabh,~cmurthy\}@\textcolor{black}{iisc.ac.in} \\
\end{tabular}}
\vspace{-5mm}
}

%% To balance the two columns, you should reduce the text-height of
%% the last page using the following command:
%%%%%%%%%%%%%%%%%%%%%%%%%%%%%%%%%%%%%%%%%%%%%%%%%%%%%%%%%%%%%%%%%%%%%
%\addtolength{\textheight}{-9.35cm}
%%%%%%%%%%%%%%%%%%%%%%%%%%%%%%%%%%%%%%%%%%%%%%%%%%%%%%%%%%%%%%%%%%%%%
%% with an appropriate value. This command must be place on the second
%% last page, i.e., for a one-page abstract here, for a two-page
%% abstract right after the \maketitle command.

%% Create the title:
\maketitle

%\addtolength{\textheight}{-9.35cm}
\begin{abstract}
The columnwise Khatri-Rao product of two matrices is an important matrix type, 
reprising \textcolor{black}{its role} as a structured sensing matrix in many fundamental linear inverse problems.
Robust signal recovery in such inverse problems  
is often contingent on proving the restricted isometry property (RIP) of a \textcolor{black}{certain system} matrix 
expressible as a Khatri-Rao product of two matrices. 
In this work, we \textcolor{black}{analyze} the RIP of a generic columnwise 
Khatri-Rao product matrix by deriving two upper bounds for 
its $k^{\text{th}}$ order Restricted Isometry Constant ($k$-RIC) 
for different values of $k$.
The first RIC bound is computed in terms of the individual RICs of the input matrices participating in the Khatri-Rao product. 
The second RIC bound is probabilistic, and is specified in terms of the input matrix dimensions.
We show that the Khatri-Rao product of a pair of $m \times n$ sized random matrices 
comprising independent and identically 
distributed subgaussian entries satisfies $k$-RIP with arbitrarily high probability, 
provided $m$ exceeds $O(k \log n)$. 
%This is a substantially milder condition compared to $O(k\log{n})$ rows needed to guarantee $k$-RIP of the input subgaussian random matrices participating in the Khatri-Rao product. 
Our RIC bounds confirm that the Khatri-Rao product exhibits stronger restricted isometry compared to its constituent matrices 
for the same RIP order. The proposed RIC bounds are potentially useful in the sample complexity analysis of several sparse recovery problems. 

\end{abstract}
\begin{keywords}
Khatri-Rao product, Kronecker product, compressive sensing, 
Restricted isometry property, covariance matrix estimation, 
\textcolor{black}{multiple measurement vectors, PARAFAC, CANDECOMP, direction of arrival estimation.}
\end{keywords}

\section{Introduction}
The Khatri-Rao product, denoted by the symbol $\odot$, is a columnwise Kronecker product, which was originally introduced by 
Khatri and Rao in \cite{KhatriRao68}. For any two matrices $\matA = \ls \veca_{1}, \veca_{2} \dots, \veca_{p} \rs$ and 
$\matB = \ls \vecb_{1}, \vecb_{2} \dots, \vecb_{p} \rs$ of sizes 
$m \times p$ and $n \times p$, respectively, the columnwise Khatri-Rao product $\matA \odot \matB$ is a matrix of dimension $mn \times p$ defined as
\begin{equation} \label{defn_colwise_krprod}
\AkrB = \ls \veca_{1} \otimes \vecb_{1}  \;\; 
\veca_{2} \otimes \vecb_{2} \;\;\;\;
\dots \;\;\;\;
\veca_{p} \otimes \vecb_{p}
\rs,
\end{equation}
where $\veca \otimes \vecb$ denotes the Kronecker product \cite{Bernstein09MatrixMath} between vectors $\veca$ and $\vecb$. 
That is, each column of $\AkrB$ is the Kronecker product between the respective columns of the two input matrices $\matA$ and $\matB$.  
In this article, we shall refer to the columnwise Khatri-Rao product as simply the Khatri-Rao product 
or the KR product. 
Since the Kronecker product $\AkB$ comprises all pairwise Kronecker product combinations of the columns 
of the input matrices, it can be shown that $\AkrB = (\AkB) \matJ$, where $\matJ$ is a $p^{2} \times p$ selection matrix 
with columns as a subset of the standard basis in $\Real^{p^2}$ \cite{LiuTrenkler}. 
%In this work, KR product always implies the columwise Khatri-Rao product.

Khatri-Rao product matrices are encountered in several linear inverse problems of fundamental importance. 
Recent examples include compressive sensing \cite{Duarte12KroneckerCS,  PPal15SrcLocalization}, covariance matrix estimation \cite{Romero16CovSense, Dasarathy15CovMatEst}, 
direction of arrival estimation \cite{Ma10DoaEstViaKR} and tensor decomposition \cite{Sidiropoulos12CSTensors}. 
In these examples, the KR product $\matA \odot \matB$, for certain $m \times n$ sized system matrices $\matA$ and $\matB$, plays the role of the sensing matrix used to generate linear measurements $\vecy$ of an unknown signal vector $\vecx$ according to 
\begin{equation} \label{linear_measurement_model}
\vecy = \lb \AkrB \rb \vecx + \vecw, 
\end{equation}
where $\vecw$ represents the additive measurement noise. 
%Here, the measurement matrix $\matPhi$ is assumed to be a column-wise Khatri-Rao product, of certain matrices $\matA$ and $\matB$, i.e. $\matPhi = \AkrB$. 
It is now well established in the sparse signal recovery 
literature~\cite{Donoho06StableRecovery, FoucartRauhut13CSBook, CandesTaoRomberg06StableRecovery} 
that, if the signal of interest, $\vecx$, is a $k$-sparse\footnote{A vector is said to be $k$-sparse 
if at most $k$ of its entries are nonzero.} vector in $\Real^{n}$, 
it can be stably recovered from its noisy underdetermined linear observations $\vecy \in \Real^{m^{2}} (m^2 < n)$ in a computationally efficient manner 
provided that the sensing matrix (here, $\AkrB$) satisfies the restricted isometry property defined next. 

A matrix $\matPhi \in \Real^{m \times n}$ is said to satisfy the \textit{Restricted Isometry Property} (RIP) \cite{CandesTao05} of order $k$, 
if there exists a constant $\delta_{k}(\matPhi) \in \lb 0, 1 \rb$, 
such that for all $k$-sparse vectors $\vecz \in \Real^{n}$, %the following is true. 
\begin{equation} \label{defn_rip}
(1 - \delta_{k}(\matPhi))|| \vecz ||_{2}^{2}
\le ||\matPhi \vecz ||^{2}_{2} \le 
(1 + \delta_{k}(\matPhi))|| \vecz ||_{2}^{2}.
\end{equation}
The smallest constant $\delta_{k}(\matPhi)$ for which \eqref{defn_rip} holds for all $k$-sparse $\vecz$ is called the 
$k^{\text{th}}$ order restricted isometry constant or the $k$-RIC of $\matPhi$. 
Matrices with small $k$-RICs are good encoders for storing/sketching high dimensional vectors with $k$ or fewer nonzero entries 
\cite{Baraniuk08ASimpleProof}. 
For example, $\delta_{k}(\AkrB) < 0.307$ is a sufficient condition 
for a unique $k$-sparse solution to \eqref{linear_measurement_model} in the noiseless case, and its perfect recovery via the $\ell_{1}$ minimization technique \cite{Cai10RICbound}. 
As pointed out earlier, in many structured signal recovery problems, the main sensing matrix 
can be expressed as a columnwise Khatri-Rao product between two matrices. 
Thus, from a practitioner's viewpoint, it is pertinent to study the 
restricted isometry property of a columnwise Khatri-Rao product matrix, 
which is the focus of this work.

%As a concrete motivating example, Co-LASSO \cite{PPal15CovMMV} is 
%a covariance matching based algorithm to find the common nonzero support of multiple 
%joint sparse vectors from their compressive linear measurements. In Co-LASSO, 
%the effective sensing matrix takes the form of columnwise Khatri-Rao product $\matA \odot \matA$ 
%of the sensing matrix $\matA$ with itself (also called the self Khatri-Rao product), 
%and its Kruskal rank\footnote{The Kruskal rank of any matrix $\matA$ is the largest integer $r$ such that any $r$ columns of $\matA$ are 
%linearly independent.} dictates the conditions for the uniqueness of a $k$-sparse solution. 
%Further, the restricted isometry property of the self Khatri-Rao product, $\matA \odot \matA$, 
%determines the stability of the sparse solution of Co-LASSO in the presence of measurement noise. 

\subsection{Applications involving Khatri-Rao matrices}
We briefly discuss some examples where it is required to 
show the restricted isometry property of a KR product matrix.

\subsubsection{\textbf{\textcolor{black}{Support recovery of joint sparse vectors from underdetermined linear measurements}}} 

Suppose $\vecxall$ are unknown joint sparse signals in $\Real^{n}$ with a common 
$k$-sized support denoted by an index set $\setS$. A canonical problem in multi-sensor 
signal processing is concerned with the recovery of the common support $\setS$ of the unknown 
signals from their noisy underdetermined linear measurements $\vecyall \in \Real^{m}$ 
generated according to 
\begin{equation} \label{mmv_model}
\vecy_{j} = \matA \vecx_{j} + \vecw_{j}, \;\;\;\; 1 \le j \le L,
\end{equation}
where $\matA \in \Real^{m \times n} (m < n)$ is a known measurement matrix, 
and $\vecw_{j} \in \Real^{n}$ models the noise in the measurements.
This problem arises in many practical applications such as 
MIMO channel estimation, cooperative wideband spectrum sensing in cognitive radio networks, 
target localization, and direction of arrival estimation. 
In \cite{PPal15CovMMV}, the support set $\setS$ is recovered as the 
support of $\hat{\vgamma}$, the solution to the Co-LASSO problem: 
\begin{equation} \label{colasso_problem}
\text{Co-LASSO: } \min_{\vgamma \succeq 0} \lvv \text{vec} (\hat{\matR}_{\vecy\vecy}) - (\matA \odot \matA) \vgamma \rvv_{2}^{2} + \lambda \lvv \vgamma \rvv_{1}, 
\end{equation}
where $\hat{\matR}_{\vecy\vecy} \triangleq \frac{1}{L} \sum_{j=1}^{L}\vecy_{j}\vecy_{j}^{T}$. 
From compressive sensing theory \cite{FoucartRauhut13CSBook}, 
the RIP of $\matA \odot \matA$ (also called the self Khatri-Rao product of $\matA$) 
determines the stability of the sparse solution in the Co-LASSO problem. 
%when the measurements are noisy.

In M-SBL \cite{WipfRao07MSBL}, a different support recovery algorithm, $\matA \odot \matA$ satisfying 
$2k$-RIP can guarantee exact recovery of $\setS$ 
from multiple measurements~\cite{KhannaCRM17MSBLSuffCond}.

\subsubsection{\textbf{Sparse sampling of stationary graph signals}}
Let $\vecx = \ls \vecx_{1}, \vecx_{2}, \ldots, \vecx_{n}\rs^{T} \in \Complex^{n}$ be a stochastic, 
zero mean and second-order stationary graph signal defined on $n$ vertices of a graph $\mathcal{G}$. 
This implies that the graph signal $\vecx$ can be modeled as $\vecx = \matH \vecn$, where $\matH$ is 
any valid graph filter \cite{Chepuri17GraphSamp}, and $n \sim \mathcal{N}(0, \matI_{n})$. The covariance matrix $\matR_{\vecx\vecx} = \mathbb{E}[\vecx \vecx^{H}]$ can then be expressed as 
\begin{eqnarray}
\matR_{\vecx\vecx} &=& \matH \mathbb{E}[\vecn \vecn^{H}] \matH^{H} 
= \matH \matH^{H}
\nonumber \\
%--------------------
&=& \matU \diag{(\vecp)} \matU^{H},
\end{eqnarray}  
where the nonnegative vector $\vecp$ refers to the graph power spectral density of the stationary graph signal $\vecx$ and the columns of $\matU$ serve as Fourier like orthonormal basis for the graph signal.

As motivated in \cite{Chepuri17GraphSamp}, in many applicaions, we are interested in reconstructing the sparse graph power spectral density $\vecp$ by observing a small subset of the graph vertices. Let $\vecyall$ denote the $L$ independent obervations of the subsampled graph signal $\vecx$, i.e.,
\begin{equation}
\vecy_{j} = \matPhi \vecx_{j},  \;\;\;\; 1 \le j \le L,
\end{equation} 
where $\matPhi \in \lc 0,1 \rc^{m \times n}$ is referred to as a binary subsampling matrix with $m (\ll n)$ rows, each row containing exactly one nonzero unity element. To recover the graph power spectral density $\vecp$ from the subsampled observations, we note that 
\begin{eqnarray}
\frac{1}{L}\sum_{j = 1}^{L} \vecy_{j}\vecy_{j}^{H} \!\!\!&\approx& \!\!\! \matPhi \matR_{\vecx \vecx}\matPhi^{T} 
= \matPhi \matU \diag{(\vecp)} \matU^{H} \matPhi^{T} 
\nonumber \\
%------------------
\text{or, } \text{vec}\lb \frac{1}{L}\sum_{j = 1}^{L} \vecy_{j}\vecy_{j}^{H} \rb 
\!\!\!&\approx & \!\!\! (\matPhi \matU^{*} \odot \matPhi \matU) \vecp,
\label{eqn_stationary_graph_signal_example}
\end{eqnarray} 
the superscript $*$ denoting conjugation without transpose.
Here again, the Khatri-Rao product $\matPhi \matU^{*} \odot \matPhi \matU$ satisfying RIP of order~$2r$ 
guarantees a unique $r$-sparse solution for $\vecp$ in~\eqref{eqn_stationary_graph_signal_example}.

\ifdefined \SKIPTEMP
\subsubsection{\textbf{\textcolor{black}{Vandermonde decomposition of Toeplitz matrices}}} 
According to a classical result by Carath\'{e}odory and Fej\'{e}r \cite{caratheodory1911uber}, 
any $n \times n$ positive semidefinite Toeplitz matrix $\matT$ of rank $r < n$ admits 
the following decomposition 
\begin{equation} \label{Toeplitz_decomposition}
\matT = \matA \matP \matA^{H},
\end{equation}
where $\matP$ is an $n \times n$ positive semidefinite diagonal matrix with an 
$r$-sparse diagonal, and $\matA$ is an $n \times n$ Vandermonde 
matrix with uniformly sampled complex sinusoids of different frequencies as its columns. 
This Toeplitz decomposition underpins subspace based 
spectrum estimation methods such as MUltiple SIgnal Classification (MUSIC) and 
EStimation of Parameters by Rotationally Invariant Techniques (ESPRIT) 
\cite{stoica2005spectral, YangStoica16Toeplitz}.
By replacing $\matT$ with a data covariance matrix, the $r$-sparse support of 
$\diag{(\matP)}$ in \eqref{Toeplitz_decomposition} 
corresponds to the $r$-dimensional signal subspace of the data. 
Estimation of $\vecp \triangleq \diag{(\matP)}$ is 
tantamount to finding an $r$-sparse solution to the vectorized form 
of \eqref{Toeplitz_decomposition}, i.e.,
$\text{vec}(\matT) = (\matA^{*} \odot \matA) \vecp$, where superscript $*$ denotes 
conjugation without transpose.
Here again, the recovery of a unique $r$-sparse solution for $\vecp$ 
can be guaranteed if $\matA^{*} \odot \matA$ satisfies the RIP of order~$2r$.
\fi

\subsubsection{\textbf{\textcolor{black}{PARAFAC model for low-rank three-way arrays}}}
\textcolor{black}{
Consider an $I \times J \times K$ tensor $\underline{\matX}$ of rank $r$. 
We can express $\underline{\matX}$ as the sum of $r$ rank-one three 
way arrays as $\underline{\matX} = \sum_{i=1}^{r} \veca_{i} \circ \vecb_{i} \circ \vecc_{i}$, 
where $\veca_{i}, \vecb_{i}, \vecc_{i}$ are  loading vectors of dimension $I, J, K$, respectively, 
and $\circ$ denotes the vector outer product. The tensor $\underline{\matX}$ itself can be 
arranged into a matrix as 
$\matX = \ls \text{vec}(\matX_{1}), \text{vec}(\matX_{2}), \ldots, \text{vec}(\matX_{K})\rs$.  
In the parallel factor analysis (PARAFAC) model \cite{Sidiropoulous12PARAFAC}, 
the matrix $\matX$ can be approximated as 
\begin{equation} \label{PARAFAC}
\matX \approx (\matA \odot \matB) \matC^{T},
\end{equation}
where $\matA, \matB$ and $\matC$ are the loading matrices with columns as the loading vectors 
$\veca_{i}$, $\vecb_{i}$ and $\vecc_{i}$, respectively. In many problems such as 
direction of arrival estimation using a 2D-antenna array, the loading matrix $\matC$ 
turns out to be row-sparse matrix \cite{Sidiropolos09MIMORadar}. In such cases, 
the uniqueness of the PARAFAC model shown in \eqref{PARAFAC} depends on the 
restricted isometry property of the Khatri-Rao product $\matA \odot \matB$. 
}

%\textcolor{black}{
Finding the exact $k{\text{th}}$ order RIC of a given matrix $\matX$ entails computation of extreme singular values of all possible $k$-column submatrices of $\matX$, which is an NP hard task \cite{Pfetsch14RIPCalcNPHard}. 
%Finding the $k$-RIC via  mixed-integer semidefinite programming was considered in~\cite{Pfetsch14RIPCharSDP}. 
In this work, we follow an alternative approach to analyzing the RIP of a KR product matrix. We seek to derive tight upper bounds for its restricted isometry constants.%}

\subsection{Related Work}

Perhaps the most direct approach for analyzing the RICs of the KR product matrix
is to use the eigenvalue interlacing theorem \cite{HornAndJohnson}, which 
relates the singular values of any $k$-column submatrix of the KR product between two matrices to the singular values of their Kronecker product. This is possible because any $k$ 
columns of the KR product can together be interpreted as a submatrix of the Kronecker product. 
However, barring the maximum and minimum singular values of the Kronecker product, there is no available explicit characterization of its non-extremal singular values that can be used to obtain tight bounds 
for the $k$-RIC of the KR product. Bounding the RIC using the extreme singular values of the Kronecker product matrix turns out to be too loose to be useful. In this context, it is noteworthy to mention that an upper bound for the $k$-RIC of the Kronecker product is derived in terms of the $k$-RICs of the input matrices in 
\cite{Duarte12KroneckerCS, SadeghJokar10KCS}. However, the $k$-RIC of the Khatri-Rao product is yet to be analyzed.

Recently, \cite{BhaskaraMoitra14SmoothedAna, AndersonBelinGoyal14} gave probabilistic 
lower bounds for the minimum singular value of the columnwise KR product between two or 
more matrices. These bounds are limited to randomly constructed input matrices, and are 
polynomial in the matrix size. 
%\st{Unfortunately, the anti-concentration inequalities used while deriving these probabilistic 
%bounds are quite weak, and hence do not yield attractive $k$-RIC bounds, 
%the kind we have derived in this work.}
%
In \cite{Sidiropoulos2000Kruskal}, it is shown that for any two matrices $\matA$ and $\matB$, 
the Kruskal-rank\footnote{The Kruskal rank or K-rank of any matrix $\matA$ is the largest integer $r$ such that any $r$ columns of $\matA$ are linearly independent.} of $\AkrB$ has a lower bound in terms of K-rank($\matA$) and K-rank$(\matB)$. 
In fact, K-rank($\AkrB$) is at least as high as $\max{\lb \text{K-rank}(\matA), \text{K-rank}(\matB) \rb}$, 
thereby suggesting that $\AkrB$ exhibits a stronger restricted isometry property than both $\matA$ and $\matB$. %\st{The RIC bounds presented in this work ratify this fact.}

A closely related yet weaker notion of restricted isometry constant is the $\tau$-robust K-rank, 
denoted by $\text{K-rank}_{\tau}$. For a given matrix $\matPhi$, the $\text{K-rank}_{\tau}(\matPhi)$ is defined as the largest $k$ for which every $n \times k$ submatrix of $\matPhi$ has its smallest 
singular value larger than $1/\tau$. In \cite{BhaskaraMoitra14SmoothedAna}, it is 
shown that the $\tau$-robust K-rank is super-additive, implying that the $\text{K-rank}_{\tau}$ 
of the Khatri-Rao product is strictly larger than individual $\text{K-rank}_{\tau}$s of the  
input matrices. 
%We show a similar result for the restricted isometry constants of the KR product matrix.

In \cite{Dasarathy15CovMatEst}, the isometry property of the Kronecker product $\matA \otimes \matB$ is analyzed in the $\ell_{1}$-norm sense for the restricted class of input vectors expressible as vectorized $d$-distributed sparse matrices, wherein, $\matA$ and $\matB$ are the adjacency matrices of two independent uniformly random $\delta$-left regular bipartite graphs. In this paper, we instead analyze the restricted isometry of the columnwise Khatri-Rao product $\AkrB$, which is equivalent to the RIP of the Kronecker product $\matA \otimes \matB$ with respect to vectorized sparse 
diagonal matrices. In our work, we assume that input matrices $\matA$ and $\matB$ are random 
with independent subgaussian elements.

\subsection{Main Contributions}
%In the context of sparse support recovery problem, \cite{PPal15CovMMV} characterizes 
%the mutual coherence of self Khatri Rao product $\matA = \matX \odot \matX$ in terms of 
%the mutual coherence of $\matX$.
%The contributions of this article are two-fold. 
We derive two kinds of upper bound on the $k$-RIC 
of the columnwise Khatri-Rao product of $m \times n$ sized matrices $\matA$ and $\matB$. 
The bounds are briefly described below.
\begin{enumerate}
	\item[1)] A deterministic upper bound for the $k$-RIC of $\matA \odot \matB$ in terms of the 
	      $k$-RICs of the input matrices $\matA$ and $\matB$. The bound is valid 
	      for $k \le m$, and for input matrices with unit $\ell_{2}$-norm columns.
	\item[2a)] A probabilistic upper bound for the $k$-RIC of $\matA \odot \matB$ in terms of 
	      $m, n$ and $k$, for real valued random matrices $\matA, \matB$ containing 
	      i.i.d. subgaussian elements. The probabilistic bound is polynomially tight with respect 
	      to the input matrix dimension $n$. %The bound is valid for $k \le n$.
	\item[2b)] A probabilistic upper bound for the $k$-RIC of the self KR product 
	$\matA \odot \matA$ in terms of $m, n$, and $k$, where $\matA$ is a real valued matrix containing i.i.d. subgaussian elements. 
	%Although the RIC bounds for the self KR product and the general KR product with distinct input matrices are of similar form, the derivation of former RIC bound is slightly more intricate as it involves showing sharp concentration inequalities for functions of dependent random variables.  
\end{enumerate}

A key idea used in our RIC analysis is the fact (stated formally as Proposition \ref{prop_gramian_kr_as_hadamard}) that given any two matrices $\matA$ and $\matB$, the Gram matrix of their Khatri-Rao product $(\matA \odot \matB)^{H}(\matA \odot \matB)$ can be interpreted as the Hadamard product (element wise multiplication) between Gram matrices $\matA^{H} \matA$ and $\matB^{H} \matB$. The Hadamard product form turns out to be more analytically tractable than columnwise Kronecker product form of the KR matrix. 

Lately, in several machine learning problems,  
the necessary and sufficient conditions for successful signal recovery have been reported in terms 
of the RICs of a certain Khatri-Rao product matrix serving as \textcolor{black}{a \emph{pseudo}} sensing matrix~\cite{Ma10DoaEstViaKR, PPal15CovMMV}. 
In light of this, our proposed RIC bounds are quite timely, and pave the way towards obtaining order-wise 
%tight 
sample complexity bounds in several fundamental learning problems.
%Other interesting applications of our bounds are .
%The proposed bounds provides a sound theoretical justification for superior performance of covariance matching 
%algorithms like Co-LASSO \cite{PPal15CovMMV} and M-SBL \citecompared to conventional sparse signal recovery algorithms 
%like LASSO \cite{CandesTaoRomberg06StableRecovery}.

%\subsection{Organization}
The rest of this article is organized as follows.
In Secs. \ref{sec:kr_k_ric_bound_deterministic} and \ref{sec:kr_k_ric_bound_probabilistic}, 
we present our main results: deterministic and probabilistic RIC bounds, respectively, for 
a generic columnwise KR product matrix. %The major implications of the proposed bounds are highlighted. 
Sec. \ref{sec:kr_k_ric_bound_probabilistic} also discusses about the RIP of the 
\emph{self} Khatri-Rao product of a matrix with itself, an important matrix type encountered in the sparse diagonal covariance matrix estimation 
problem. 
In Secs. \ref{sec:bkgd_deterministic_bnd} and \ref{sec:bkgd_probabilistic_bnd}, we present some 
background concepts required for proving the proposed RIC bounds. 
Secs. \ref{sec:proof_deterministic} and \ref{sec:proof_probabilistic} provide the detailed proofs of the 
deterministic and probabilistic RIC bounds, respectively.
Final conclusions are presented in Sec. \ref{sec:concluding_remarks}.

%\subsection{Notation} 
%For convenience of the reader, we summarize the notations followed throughout this paper. 
\emph{Notation:} In this work, bold lowercase letters are used for representing both scalar random variables 
as well as vectors. 
Bold uppercase letters are reserved for matrices. The $\ell_2$-norm of a vector $\vecx$ is denoted by $\lvv \vecx \rvv_{2}$.
For an $m \times n$ matrix $\matA$, $\lvv \matA \rvv$ denotes its operator norm, 
$\lvv \matA \rvv \triangleq \sup_{\vecx \in \Real^{n}, \vecx \neq 0} \frac{\lvv \matA \vecx \rvv_{2}}{\lvv \vecx \rvv_{2}} $. 
The Hilbert-Schmidt (or Frobenius) norm of $\matA$ is defined as $\lvv \matA \rvv_{HS} = \sum_{i = 1}^{m} \sum_{j = 1}^{n} \lv A_{i, j} \rv^{2}$. The symbol $[n]$ denotes the index set $\lc 1, 2, \dots, n \rc$. 
For any index set $\setS \subseteq [n]$, $\matA_{\setS}$ denotes the submatrix comprising the columns 
of $\matA$ indexed by $\setS$. 
The matrices 
$\matA \otimes \matB$, $\matA \circ \matB$ and $\matA \odot \matB$ denote the Kronecker product, 
Hadamard product and columnwise Khatri-Rao product of $\matA$ and $\matB$, respectively. 
$\matA \le \matB$ implies that $\matB - \matA$ is a positive semidefinite matrix. $\matA^{T}$, $\matA^{H}$, $\matA^{-1}$, and $\matA^{\dagger}$ denote the 
transpose, conjugate-transpose, inverse and generalized matrix inverse operations of $\matA$, respectively. 
$\mathbb{E}(\vecx)$ denotes the expectation of the random variable $\vecx$. 
$\mathbb{P}(\mathcal{E})$ denotes the probability of event $\mathcal{E}$. $\mathbb{P}(\mathcal{E} \vert \mathcal{E}^{\prime})$ denotes the conditional probability of event $\mathcal{E}$ given event $\mathcal{E}^{\prime}$ has occurred.

\section{Deterministic $k$-RIC Bound} \label{sec:kr_k_ric_bound_deterministic}

In this section, we present our first upper bound on the $k$-RIC of a generic columnwise KR product $\AkrB$, for any 
two same-sized matrices $\matA$ and $\matB$ with normalized columns. The bound is given in terms of the 
$k$-RICs of $\matA$ and $\matB$.

\begin{theorem} \label{thm_ric_bound_for_kr}
 Let $\matA$ and $\matB$ be $m \times n$ sized real-valued matrices with unit $\ell_{2}$-norm columns 
 and satisfying the $k^{\text{th}}$ order restricted isometry property with constants $\delta^{\matA}_{k}$ and $\delta^{\matB}_{k}$, respectively.
 Then, their columnwise Khatri-Rao product $\AkrB$ satisfies the restricted isometry property with 
 $k$-RIC at most $\delta^{2}$, where $\delta \triangleq 
 \text{max} \lb \delta^{\matA}_{k}, \delta^{\matB}_{k} \rb$, i.e.,
 \begin{equation} \label{rip_of_kr}
  (1 - \delta^{2}) ||\vecz||_{2}^{2}
  \le
  || (\AkrB) \vecz||_{2}^{2}
  \le
  (1 + \delta^{2}) ||\vecz||_{2}^{2}
 \end{equation}
 holds for all $k$-sparse vectors $\vecz \in \Real^{n}$.
\end{theorem}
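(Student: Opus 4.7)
The plan is to recast \eqref{rip_of_kr} as an operator-norm bound on a Gram matrix, apply Proposition~\ref{prop_gramian_kr_as_hadamard} to convert this Gram into a Hadamard product of smaller Grams, and exploit the zero-diagonal structure forced by the unit-norm hypothesis to obtain the squared-$\delta$ bound. Every $k$-sparse $\vecz \in \Real^{n}$ is supported on some set $\setS \subseteq [n]$ with $\lv \setS \rv \le k$, so \eqref{rip_of_kr} is equivalent to the claim that for each such $\setS$, the Gram matrix $\matG_{\setS} := (\matA \odot \matB)_{\setS}^{T} (\matA \odot \matB)_{\setS}$ satisfies $\lvv \matG_{\setS} - \matI \rvv \le \delta^{2}$. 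Proposition~\ref{prop_gramian_kr_as_hadamard} rewrites this as $\matG_{\setS} = (\matA_{\setS}^{T} \matA_{\setS}) \circ (\matB_{\setS}^{T} \matB_{\setS})$.

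Next I peel off the identity. Set $\matE_{A} := \matA_{\setS}^{T} \matA_{\setS} - \matI$ and $\matE_{B} := \matB_{\setS}^{T} \matB_{\setS} - \matI$. The RIP hypothesis gives $\lvv \matE_{A} \rvv \le \delta_{k}^{\matA}$ and $\lvv \matE_{B} \rvv \le \delta_{k}^{\matB}$, while the unit-$\ell_{2}$-norm column assumption forces the diagonals of both $\matE_{A}$ and $\matE_{B}$ to vanish. The vanishing diagonals are essential: they kill the cross terms $\matI \circ \matE_{B}$ and $\matE_{A} \circ \matI$ in the expansion of $(\matI + \matE_{A}) \circ (\matI + \matE_{B})$, collapsing $\matG_{\setS}$ to $\matI + \matE_{A} \circ \matE_{B}$. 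The task reduces to showing $\lvv \matE_{A} \circ \matE_{B} \rvv \le \delta_{k}^{\matA} \delta_{k}^{\matB} \le \delta^{2}$.

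This final step is the crux of the argument. I will spectrally decompose $\matE_{A} = \sum_{i} \lambda_{i} \vecu_{i} \vecu_{i}^{T}$ and $\matE_{B} = \sum_{j} \mu_{j} \vecv_{j} \vecv_{j}^{T}$ with orthonormal bases $\lc \vecu_{i} \rc, \lc \vecv_{j} \rc \subset \Real^{k}$, and invoke the elementary rank-one Hadamard identity $(\vecu \vecu^{T}) \circ (\vecv \vecv^{T}) = (\vecu \circ \vecv)(\vecu \circ \vecv)^{T}$ to obtain the expansion
\begin{equation*}
\matE_{A} \circ \matE_{B} = \sum_{i,j} \lambda_{i} \mu_{j} \, \vecw_{ij} \vecw_{ij}^{T}, \qquad \vecw_{ij} := \vecu_{i} \circ \vecv_{j}.
\end{equation*}
For any unit $\vecz \in \Real^{k}$, the triangle inequality together with $\lv \lambda_{i} \rv \le \delta_{k}^{\matA}$ and $\lv \mu_{j} \rv \le \delta_{k}^{\matB}$ yields $\lv \vecz^{T} (\matE_{A} \circ \matE_{B}) \vecz \rv \le \delta_{k}^{\matA} \delta_{k}^{\matB} \sum_{i,j} (\vecw_{ij}^{T} \vecz)^{2}$. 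Interchanging the order of summation on the right-hand sum and invoking the orthonormality of $\lc \vecu_{i} \rc$ and $\lc \vecv_{j} \rc$ collapses it to $\lvv \vecz \rvv_{2}^{2} = 1$, delivering the desired estimate.

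The main obstacle is avoiding a loose Hadamard bound. A naive invocation of the Schur product theorem on $\matG_{\setS}$, without first eliminating the linear cross terms, would only give $\lvv \matG_{\setS} - \matI \rvv \le \delta_{k}^{\matA} + \delta_{k}^{\matB} + \delta_{k}^{\matA} \delta_{k}^{\matB}$, losing a full order in $\delta$. The improvement to the clean $\delta^{2}$ bound hinges on two structural facts available in the unit-norm setting: the diagonals of $\matE_{A}, \matE_{B}$ vanish (killing the linear-in-$\delta$ contributions), and the rank-one Hadamard identity combined with eigenvector orthonormality yields the sharp inequality $\lvv \matE_{A} \circ \matE_{B} \rvv \le \lvv \matE_{A} \rvv \, \lvv \matE_{B} \rvv$. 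Recognizing that the Hadamard of two spectral rank-one summands admits a rank-one (outer-product) form, so that the eigenvector orthonormality does the heavy lifting, is the main creative step.
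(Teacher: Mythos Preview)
Your argument is correct and genuinely different from the paper's. Both proofs start from Proposition~\ref{prop_gramian_kr_as_hadamard} to write $\matG_{\setS} = (\matA_{\setS}^{T}\matA_{\setS}) \circ (\matB_{\setS}^{T}\matB_{\setS})$, but then diverge. The paper works with the square roots $(\matA_{\setS}^{T}\matA_{\setS})^{1/2}$ and $(\matB_{\setS}^{T}\matB_{\setS})^{1/2}$ and invokes a chain of Kantorovich-type matrix inequalities (Lemmas~\ref{lem_matrix_kantorovich_ineq_2_hadamard_ver} and~\ref{lem_reverse_kantorovich_hadamard_ver}, supported by Propositions~\ref{prop_hadamard_prod_pow_ineq}--\ref{prop_hadamard_corr_matrix_less_than_I}), ultimately bounding the eigenvalues of $\matG_{\setS}$ between $1-\delta^{2}$ and $1+\delta^{2}$. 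Your route is more direct: the zero-diagonal decomposition $\matG_{\setS} = \matI + \matE_{A}\circ\matE_{B}$ removes the linear-in-$\delta$ terms in one stroke, and the spectral expansion together with $\sum_{i,j}(\vecw_{ij}^{T}\vecz)^{2}=\lvv \vecz\rvv_{2}^{2}$ yields $\lvv \matE_{A}\circ\matE_{B}\rvv \le \lvv \matE_{A}\rvv\,\lvv \matE_{B}\rvv$. (Equivalently, this last inequality follows because $\matE_{A}\circ\matE_{B}$ is a principal submatrix of the Hermitian matrix $\matE_{A}\otimes\matE_{B}$, whose eigenvalues are the products $\lambda_{i}\mu_{j}$.) Your approach is shorter, avoids the Kantorovich machinery entirely, and in fact delivers the slightly sharper bound $\delta_{k}(\AkrB)\le \delta_{k}^{\matA}\delta_{k}^{\matB}$ rather than $\max(\delta_{k}^{\matA},\delta_{k}^{\matB})^{2}$. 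The paper's approach, on the other hand, situates the result within a broader family of Hadamard--Kantorovich inequalities that may be of independent interest.
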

\begin{proof}
The proof is provided in Section \ref{sec:proof_deterministic}.
\end{proof}

\textit{Remark 1:} The RIC bound for $\AkrB$ in Theorem~\ref{thm_ric_bound_for_kr} is relevant only when 
$\delta_{k}(\matA)$ and $\delta_{k}(\matB)$ lie in $\lb 0, 1 \rb$, which is true only for $k \le m$. 
In other words, the above $k$-RIC characterization for $\AkrB$ requires the input matrices $\matA$ and $\matB$ 
to be $k$-RIP compliant.

\textit{Remark 2:} Since the input matrices $\matA$ and $\matB$ satisfy $k$-RIP with 
$\delta_{k}(\matA), \delta_{k}(\matB) \in (0, 1)$, it follows from Theorem~\ref{thm_ric_bound_for_kr} that 
$\delta_{k}(\AkrB)$ is strictly smaller than $\max \lb \delta_{k}(\matA), \delta_{k}(\matB) \rb$. 
If $\matB = \matA$, the special case of self Khatri-Rao product $\matA \odot \matA$ arises, for which 
\begin{equation} \label{self_kr_ric_bnd}
\delta_{k}(\matA \odot \matA) \;\textcolor{black}{<} \;  \delta_{k}^{2}(\matA).   
\end{equation}
Above implies that the self Khatri-Rao product $\matA \odot \matA$ is a better restricted isometry 
compared to $\matA$ itself. This observation is in alignment with the 
expanding Kruskal rank and shrinking mutual coherence of the self Khatri-Rao product reported in \cite{PPal15CovMMV}.
In fact, for $k = 2$, the $2$-RIC bound \eqref{self_kr_ric_bnd} exactly matches the mutual coherence bound shown in~\cite{PPal15CovMMV}.
%The improvement in the restricted isometry of the Khatri-Rao product is also in alignment 
%with the Kruskal rank expansion and mutual coherence shrinkage independently reported in \cite{PPal15CovMMV}.

%Interestingly, Theorem~\ref{thm_ric_bound_for_kr} also suggests a recursive procedure for the construction of large dimensional 
%matrices with sufficiently small RIC. One can start with a small sized seed matrix with RIC of desired order less than unity. 
%Then, by taking successive self Khatri-Rao product recursively, the RIC is guaranteed to improve in every step, according to 
%Theorem~\ref{thm_ric_bound_for_kr}.

% \emph{On deterministic construction of RIP satisfying matrices:} \\
% In Theorem-1, it is easy to check that for $\delta_{k}^{\matA}, \delta_{k}^{\matB} \in (0,1)$, the Khatri-Rao product $\AkrB$ 
% has a $k-$RIC constant which is significantly smaller than both $\delta_{k}^{\matA}$ and $\delta_{k}^{\matB}$. This observation can arguably be exploited 
% to devise an iterative procedure to construct large dimensional matrices with sufficiently small RICs. Such an iterative procedure 
% would involve taking successive self Khatri-Rao products, thereby improving the RIC of resulting matrix in each step.

%Theorem~\ref{thm_ric_bound_for_kr} holds only for RIP order $k \le m$. 
For $k \in (m, m^2]$, using (\cite{KoiranZouzias14RIPCert}, Theorem 1), one can show that $\delta_{k}(\AkrB) \le \lb \sqrt{k} + 1 \rb \delta_{\sqrt{k}} $, 
where $\delta_{\sqrt{k}} = \text{max}\lb\delta_{\sqrt{k}}(\matA), \delta_{\sqrt{k}}(\matB) \rb$. This bound, however, 
loses its tightness and quickly becomes unattractive for larger values of $k$. Finding a tighter $k$-RIC upper bound 
for the $k > m$ case remains an open problem.

%\ifdefined \KRICSIMULATIONS
To gauge the tightness of the proposed $k$-RIC bound for $\AkrB$, we present its 
simulation-based quantification for the case when the input matrices $\matA$ and $\matB$ 
are random Gaussian matrices with i.i.d. $\mathcal{N}(0, 1/m)$ entries. Fig.~\ref{fig_kr_ric_char} plots 
$\delta_{k}(\matA)$, $\delta_{k}(\matB)$, $\delta_{k}(\AkrB)$ and the upper bound  
$\overline{\delta_{k}(\AkrB)} = \lb \max{ \lb \delta_{k}(\matA), \delta_{k}(\matB) \rb } \rb^{2}$ for a 
range of input matrix dimension $m$. 
The aspect ratio $m/n$ of the input matrices is fixed to $0.5$.\footnote{
\textcolor{black}{
While the $m \times n$ matrices $\matA$ and $\matB$ may represent 
highly underdetermined linear systems (when $m \ll n$), their $m^2 \times n$ 
sized Khatri-Rao product $\AkrB$ can become an overdetermined system. 
In fact, many covariance matching based sparse support recovery algorithms \cite{PPal15CovMMV,WipfRao07MSBL, Khanna17RDCMP} 
exploit this fact to offer significantly better support reconstruction performance.}} 
For computational tractability, we restrict our analysis to the cases $k = 2$ and $3$.
The RICs: $\delta_{k}(\matA)$, $\delta_{k}(\matB)$ and 
$\delta_{k}(\AkrB)$ are computed by exhaustively searching for the worst conditioned 
submatrix comprising $k$ columns of $\matA$, $\matB$ and $\AkrB$, respectively. 
From Fig.~\ref{fig_kr_ric_char}, we observe that the proposed $k$-RIC upper bound becomes tighter 
as the input matrices grow in size.  
Interestingly, the experiments suggest that the KR product $\AkrB$ satisfies 
$k$-RIP in spite of the input matrices $\matA$ and $\matB$ failing to do so. A theoretical confirmation 
of this empirical observation is an interesting open problem. 
\begin{figure}[t]
\centering
\includegraphics[width=0.84\columnwidth]{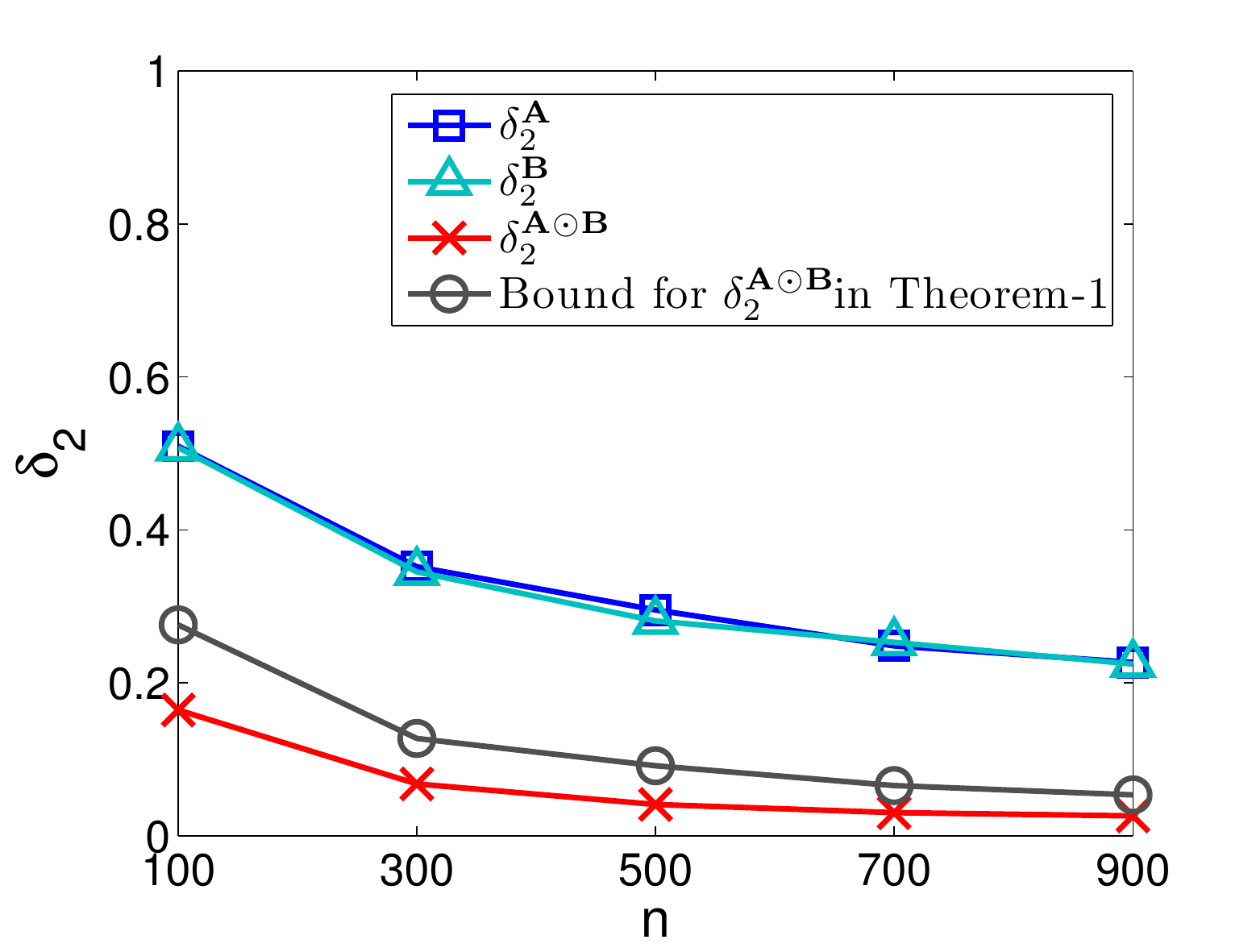}
\includegraphics[width=0.84\columnwidth]{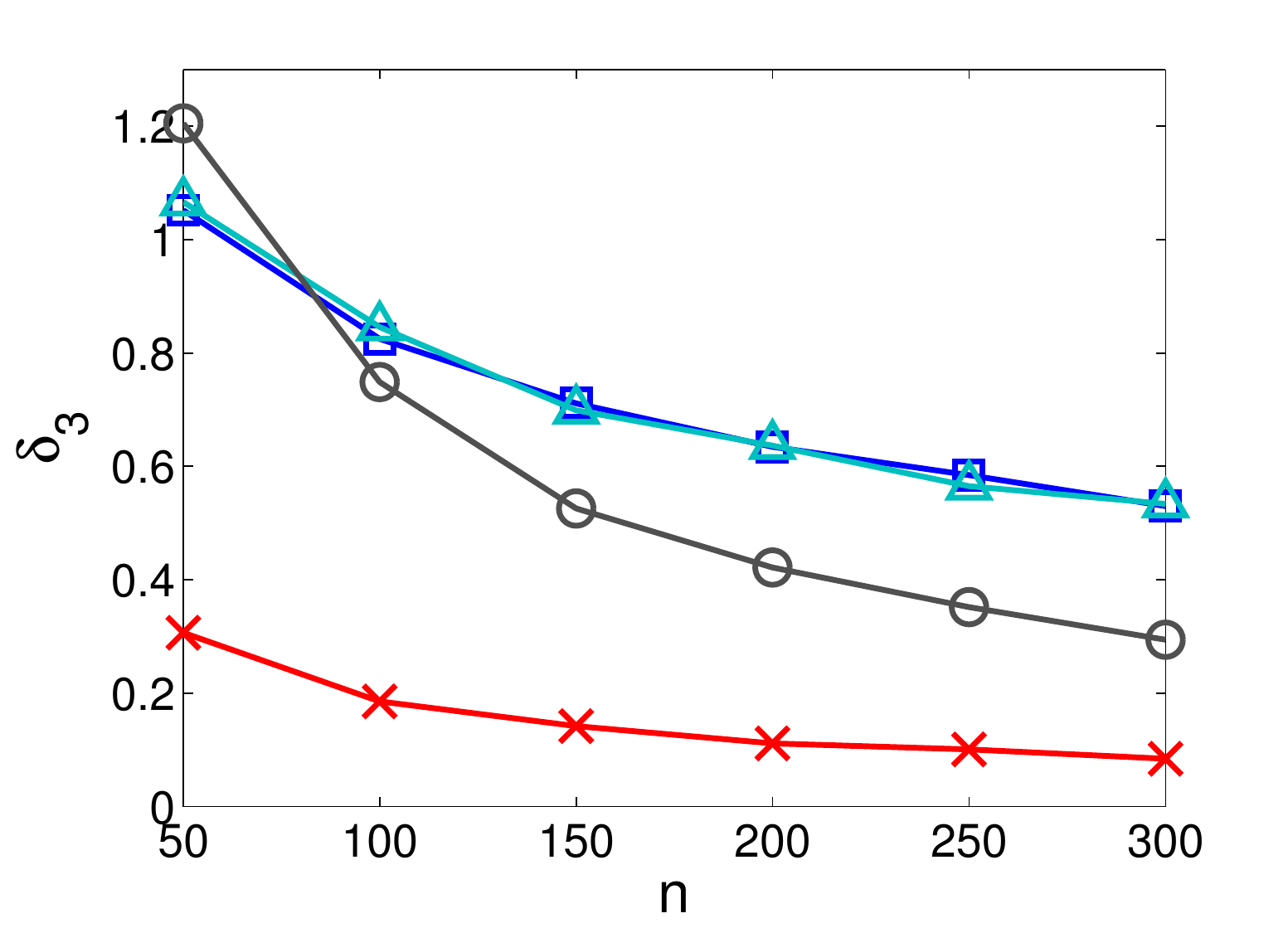}
\caption{Variation of $k$-RICs of $\matA$, $\matB$, $\AkrB$ and the proposed upper bound with increasing input matrix dimensions.
The top and the bottom plots are for $k = 2$ and $3$, respectively. Each data point is averaged over $10$ trials.}
\label{fig_kr_ric_char}
\end{figure}
%\fi

\section{Probabilistic $k$-RIC Bound} \label{sec:kr_k_ric_bound_probabilistic}
The deterministic RIC bound for the columnwise KR product discussed in the Section~\ref{sec:kr_k_ric_bound_deterministic} 
is useful only when the input matrices have unit norm columns. 
%\st{It also fails to explain the empirical observation that the KR product $\AkrB$ often satisfies $k$-RIP in spite of the input matrices $\matA$ and $\matB$ failing to do so.}
While the use of column normalized sensing matrices is commonplace in compressive sensing, we are often interested in showing the restricted isometry of the Khatri-Rao product of randomly constructed input matrices with columns being normalized only in the average sense. 
This concern is addressed by our second RIC bound which is of probabilistic type and is applicable to the columnwise KR product of random input
matrices containing i.i.d. subgaussian entries. Below, we define a subgaussian random variable 
and state some of its properties.

\begin{defn}{(Subgaussian Random Variable)}: A zero mean random variable $\vecx$ is called subgaussian, if its tail probability is dominated by that of a Gaussian random variable. 
In other words, there exist constants~$C, K > 0$ such that 
 $\PP(|\vecx| \ge t) \le Ce^{-t^2/K^2}$ for $t > 0$.
\end{defn}
Gaussian, Bernoulli and all bounded random variables are subgaussian 
random variables. 
For a subgaussian random variable, its $p^{\text{th}}$ order moment grows no faster than 
$O(p^{p/2})$ \cite{Vershynin10RMT}. In other words, 
there exists $K_{1} > 0$ such that
\begin{equation}
\lb \EE \lv \vecx \rv^{p} \rb^{\frac{1}{p}} \le K_{1} \sqrt{p}, \;\; p \ge 1. 
\end{equation}
The minimum such $K_{1}$ is called the \textit{subgaussian or $\psi_{2}$ norm} of the random variable $\vecx$, i.e.,
\begin{equation}
 \lvv \vecx \rvv_{\psi_{2}} = \sup_{p \ge 1} p^{-1/2} \lb \EE \lv \vecx \rv^{p} \rb^{\frac{1}{p}}.
\end{equation}
%}

%We now discuss a probabilistic $k$-RIC upper bound for the columnwise Khatri-Rao 
%product of a pair of $m \times n$ sized input matrices, which rectifies these two 
%limitations of the deterministic bound. 
Given a pair of random input matrices with i.i.d.\ subgaussian entries, 
Theorem~\ref{thm_kric_bound_X_neq_Y} presents an upper bound on the $k$-RIC of their columnwise KR product.
\begin{theorem} \label{thm_kric_bound_X_neq_Y}
 Suppose $\matA$ and $\matB$ are $m \times n$ random matrices with real i.i.d.\ subgaussian entries, 
 such that $\mathbb{E}\matA_{ij} = 0$, $\mathbb{E}\matA_{ij}^{2} = 1$, and 
 $\lvv \matA_{ij}\rvv_{\psi_{2}} \le \kappa$, and similarly for $\matB$. 
Then, the $k^{\text{th}}$ order restricted isometry constant of $\frac{\matA}{\sqrt{m}} \odot \frac{\matB}{\sqrt{m}}$, 
denoted by $\delta_{k}$, satisfies $\delta_{k} \le \delta$ with probability at least 
$1 - 10 n^{-2(\gamma-1)}$ for any $\gamma > 1$, provided that
\begin{equation}
 m \ge 4 c \gamma \kappa_{o}^{4} \lb  \frac{k \log{n}}{\delta} \rb,
 \nonumber 
\end{equation}
where $\kappa_{o} = \max \lb \kappa, 1 \rb$ and $c$ a universal positive constant.
\end{theorem}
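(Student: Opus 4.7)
The plan is to exploit the Gram-matrix Hadamard identity (Proposition~\ref{prop_gramian_kr_as_hadamard}), which reduces the $k$-RIP claim to a uniform operator-norm estimate $\lvv \mathbf{Q}_{\setS} - \matI_{k}\rvv \leq \delta$ over all size-$k$ index sets $\setS \subseteq [n]$, where $\mathbf{Q}_{\setS} \triangleq \tfrac{1}{m^{2}}(\matA_{\setS}^{T}\matA_{\setS}) \circ (\matB_{\setS}^{T}\matB_{\setS})$. Setting $\matG_{A} \triangleq \matA^{T}\matA/m$ and $\matG_{B} \triangleq \matB^{T}\matB/m$, the identity $\matI \circ \mathbf{M} = \diag(\mathbf{M})$ together with the fact that diagonal Hadamard off-diagonal vanishes yields the clean decomposition
\[
\matG_{A} \circ \matG_{B} - \matI_{n} \;=\; \diag(\matG_{A} - \matI_{n}) \;+\; \diag(\matG_{B} - \matI_{n}) \;+\; (\matG_{A} - \matI_{n}) \circ (\matG_{B} - \matI_{n}),
\]
and I would bound each of the three pieces separately, uniformly over all $\setS$.

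The two diagonal pieces reduce to controlling $\max_{j \in [n]}\lv \lvv \matA_j\rvv_{2}^{2}/m - 1\rv$ and its $\matB$-analogue. Each $\lvv \matA_j\rvv_{2}^{2}/m - 1$ is a centered sum of $m$ i.i.d.\ centered squared subgaussians, hence subexponential with norm $O(\kappa^{2})$; Bernstein's inequality combined with a union bound over $j \in [n]$ delivers $\max_j \lv \lvv \matA_j\rvv_{2}^{2}/m - 1\rv \leq \delta/4$ with probability at least $1 - O(n^{-2\gamma})$ under the stated sample complexity. For the Hadamard cross term, I would route the operator norm through the entrywise maximum: each off-diagonal entry $(\matG_{A})_{jk} = \langle \matA_j, \matA_k\rangle/m$ ($j \neq k$) is centered subexponential of scale $O(\kappa^{2}/\sqrt{m})$, so a union bound over the $\binom{n}{2}$ pairs (again via Bernstein) gives $\max_{j \neq k}\lv (\matG_{A})_{jk}\rv,\, \max_{j \neq k}\lv (\matG_{B})_{jk}\rv \leq C\kappa^{2}\sqrt{\gamma\log n/m}$ with probability $\geq 1-O(n^{-2\gamma})$. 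Multiplying the two maxima and invoking $\lvv \mathbf{M}\rvv \leq k \lvv \mathbf{M}\rvv_{\max}$ for any $k \times k$ matrix $\mathbf{M}$, the operator norm of the cross term restricted to any $\setS$ is at most $k \cdot C^{2}\kappa^{4}\gamma\log n/m$, which is $\leq \delta/2$ precisely when $m \geq 4c\gamma\kappa_o^{4}(k\log n/\delta)$. Triangle inequality then combines the three bounds to give $\lvv \mathbf{Q}_{\setS} - \matI_k\rvv \leq \delta$ for every $\setS$.

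The principal obstacle is that a direct Schur-type operator-norm bound of the form $\lvv \mathbf{N}_{A} \circ \mathbf{N}_{B}\rvv \leq \lvv \mathbf{N}_{A}\rvv\lvv \mathbf{N}_{B}\rvv$ is unavailable when $\mathbf{N}_A,\mathbf{N}_B$ are symmetric but indefinite, so the argument must route through the entrywise maximum and absorb a factor of $k$ from $\lvv \cdot \rvv \leq k\lvv \cdot\rvv_{\max}$. The compensating advantage is that this entrywise bound is uniform over all $(j,k) \in [n]^{2}$, so the $\binom{n}{k}$ union bound over supports $\setS$ is rendered unnecessary: the $k$-dependence enters only linearly through the submatrix size, which is exactly what yields the stated $m \propto k\log n/\delta$ scaling rather than a worse polynomial dependence. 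Combining the two Bernstein events (one per input matrix) and aggregating the failure probabilities across the diagonal and off-diagonal bounds produces the $10n^{-2(\gamma-1)}$ claimed in the theorem.
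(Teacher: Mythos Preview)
Your proposal is correct and follows essentially the same route as the paper. Both arguments invoke Proposition~\ref{prop_gramian_kr_as_hadamard}, separate the diagonal and off-diagonal contributions, control the diagonal via concentration of $\lvv \veca_j \rvv_2^2/m$ (Hanson--Wright/Bernstein plus a union bound over $n$ columns), control the off-diagonal via $\max_{i\neq j}\lv \veca_i^T\veca_j\rv\lv \vecb_i^T\vecb_j\rv$ (union bound over $\binom{n}{2}$ pairs) together with the $\lvv \cdot\rvv \le k\lvv \cdot\rvv_{\max}$ / $\lvv \vecz\rvv_1^2 \le k$ device, and thereby avoid the $\binom{n}{k}$ union bound over supports. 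Your matrix identity $\matG_A\circ\matG_B-\matI=\diag(\matG_A-\matI)+\diag(\matG_B-\matI)+(\matG_A-\matI)\circ(\matG_B-\matI)$ is simply a one-line repackaging of the paper's two-stage triangle-inequality split (diagonal/off-diagonal, then the further decomposition of $\lv \lvv\veca\rvv_2^2\lvv\vecb\rvv_2^2-m^2\rv$ in step~(e) of \eqref{eqn_intr1}); the underlying estimates and constants line up term-for-term. One minor point: when you bound the cross term via $k\lvv\,\cdot\,\rvv_{\max}$, remember that its diagonal entries $((\matG_A)_{jj}-1)((\matG_B)_{jj}-1)$ are also present; they are $O(\delta^2)$ and harmless, but should be accounted for (or, cleaner, split off the diagonal of the cross term first so its operator norm is just the max entry).
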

\begin{proof}
The proof is provided in Section \ref{sec:proof_probabilistic}.
\end{proof}

The normalization constant $\sqrt{m}$ used while computing the KR product 
$\frac{\matA}{\sqrt{m}} \odot \frac{\matB}{\sqrt{m}}$ 
ensures that the columns of the input matrices 
$\frac{\matA}{\sqrt{m}}$, $\frac{\matB}{\sqrt{m}}$ have unit average energy, i.e. 
$\EE \lvv \veca_{i}/\sqrt{m} \rvv_{2}^{2} = \EE \lvv \vecb_{i} / \sqrt{m} \rvv_{2}^{2} = 1 \text{ for } 1 \le i \le n$. 
Column normalization is a key assumption towards correct modelling of the isotropic, 
norm-preserving nature of the effective sensing matrix $\frac{1}{m}(\AkrB)$, 
an attribute desired in most sensing matrices employed in practice. 

Theorem~\ref{thm_kric_bound_X_neq_Y} implies that 
\begin{equation} \label{kric_ub_high_prob}
 \delta_{k}\lb \frac{\matA}{\sqrt{m}} \odot \frac{\matB}{\sqrt{m}} \rb \le
 O \lb \frac{ k \log{n}}{m} \rb
\end{equation}
with arbitrarily high probability.
Thus, the above $k$-RIC bound decreases as $m$ increases, which is intuitively appealing. 
Interestingly, for fixed $k$ and $n$, the above $k$-RIC upper bound 
for $\frac{\matA}{\sqrt{m}} \odot \frac{\matB}{\sqrt{m}}$ decays as $O(\frac{1}{m})$. This is a significant improvement over the 
$O(\frac{1}{\sqrt{m}})$ decay rate \cite{FoucartRauhut13CSBook} already known for the individual $k$-RICs of the input subgaussian matrices $\frac{\matA}{\sqrt{m}}$ and $\frac{\matB}{\sqrt{m}}$. 
Thus, for any $m$, the Khatri-Rao product exhibits stronger restricted isometry property, with significantly smaller $k$-RICs compared to the $k$-RICs for the input matrices.

% known 
% 
% is tighter than the $O \lb \sqrt{\frac{k \log{(en/k)}}{m}} \rb$ 
% bound in \cite{FoucartRauhut13CSBook} on the individual $k$-RICs of the input subgaussian matrices $\matA$ and $\matB$, 
% for practical values of $m, n$ and $k$.

In some applications, the effective sensing matrix is expressible as the self-Khatri Rao product $\matX \odot \matX$ of a certain system matrix $\matX$ with itself~\cite{PPal15CovMMV}. In Theorem~\ref{thm_kric_bound_X_eq_Y} below, we present a $k$-RIC bound for  self-Khatri-Rao product matrices.
\begin{theorem} \label{thm_kric_bound_X_eq_Y}
 Let $\matA$ be an $m \times n$ random matrix with real i.i.d.\ subgaussian entries, 
 such that $\mathbb{E}\matA_{ij} = 0$, $\mathbb{E}\matA_{ij}^{2} = 1$, and 
 $\lvv \matA_{ij}\rvv_{\psi_{2}} \le \kappa$. 
 Then, the $k^{\text{th}}$ order restricted isometry constant 
 of the column normalized self Khatri-Rao product $\frac{\matA}{\sqrt{m}} \odot \frac{\matA}{\sqrt{m}}$ 
 satisfies $\delta_{k} \le \delta$ with probability at least $1 - 5 n^{-2(\gamma-1)}$ for any $\gamma \ge 1$, provided 
\begin{equation} 
 m \ge 4 c^{\prime} \gamma \kappa_{o}^{4}  \lb \frac{k \log{n}}{\delta} \rb.
 \nonumber 
\end{equation}
Here, $\kappa_{o} = \max \lb \kappa, 1 \rb$ and $c^{\prime}>0$ is a universal constant.
\end{theorem}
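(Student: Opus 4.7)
The plan is to follow the structure of the proof of Theorem~\ref{thm_kric_bound_X_neq_Y}, specialised to the case $\matB = \matA$. The starting point is the Hadamard--Gram identity $(\matA \odot \matA)^{H}(\matA \odot \matA) = (\matA^{H}\matA) \circ (\matA^{H}\matA)$ (i.e., Proposition~\ref{prop_gramian_kr_as_hadamard}). Writing $\matG \triangleq \frac{1}{m}\matA^{H}\matA$, bounding $\delta_{k}\lb \frac{\matA}{\sqrt{m}} \odot \frac{\matA}{\sqrt{m}}\rb$ is equivalent to controlling $\lvv \matG_{\setS} \circ \matG_{\setS} - \matI \rvv$ uniformly over all $\setS \subseteq [n]$ with $|\setS| \le k$. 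Since $\matG_{\setS} \circ \matG_{\setS} - \matI$ is Hermitian, a Gershgorin-type row-sum bound gives
\begin{equation}
 \lvv \matG_{\setS} \circ \matG_{\setS} - \matI \rvv \;\le\; \max_{i \in \setS} |G_{ii}^{2} - 1| \;+\; (k-1) \max_{\substack{i,j \in \setS \\ i \ne j}} G_{ij}^{2}, \nonumber
\end{equation}
so it suffices to force each of the two terms below $\delta/2$ on a high-probability event.

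The two entry-wise concentrations I would establish proceed as follows. First, $G_{ii} = \frac{1}{m}\sum_{\ell = 1}^{m} \matA_{\ell i}^{2}$ is the average of $m$ centred subexponential variables with $\psi_{1}$-norm $O(\kappa^{2})$ and mean $1$, so Bernstein's inequality yields $\PP(|G_{ii} - 1| > \delta/8) \le 2 \exp(-c_{1} m \delta^{2}/\kappa^{4})$, which already implies $|G_{ii}^{2} - 1| \le \delta/2$ on this event. Second, for $i \ne j$ the off-diagonal entry $G_{ij} = \frac{1}{m}\sum_{\ell = 1}^{m} \matA_{\ell i}\matA_{\ell j}$ is an average of products of independent zero-mean subgaussians (distinct columns of $\matA$ being independent), and these products are subexponential with $\psi_{1}$-norm $O(\kappa^{2})$. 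A second application of Bernstein yields $\PP(|G_{ij}| > \sqrt{\delta/(2k)}) \le 2\exp(-c_{2} m \delta/(k \kappa^{4}))$, so $(k-1)G_{ij}^{2} \le \delta/2$ on this event. A union bound over the $n$ diagonal and $\binom{n}{2}$ off-diagonal entries, under the sample-complexity condition $m \ge 4 c^{\prime} \gamma \kappa_{o}^{4} k \log n / \delta$ (with $c^\prime$ chosen to absorb $c_1,c_2$), makes the off-diagonal term dominant and drives the overall failure probability below $5 n^{-2(\gamma - 1)}$, as asserted.

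The main departure from Theorem~\ref{thm_kric_bound_X_neq_Y} is that the off-diagonal entries of the effective Gram matrix are now $G_{ij}^{2}$ rather than the product $G^{\matA}_{ij}\, G^{\matB}_{ij}$ of entries from two independent Gram matrices. One might worry that losing the between-matrix independence weakens the tail, but it does not: $G_{ij}^{2}$ is bounded directly through the marginal concentration of $G_{ij}$, and the union bound across entries requires only marginal (not joint) tails. Hence the argument carries through essentially verbatim, and the factor $10$ appearing in the failure probability of Theorem~\ref{thm_kric_bound_X_neq_Y} shrinks to $5$ here because only a single random matrix needs to be controlled. I expect the only mildly delicate step to be tracking the $\psi_1$ parameters carefully through Bernstein so that the final sample-complexity scaling comes out as $k \log n / \delta$ (rather than $k \log n / \delta^2$, which would correspond to using only diagonal-style tails for the off-diagonal entries).
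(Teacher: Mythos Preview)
Your proposal is correct and follows essentially the same route as the paper: Hadamard--Gram identity, split into diagonal and off-diagonal contributions, entrywise concentration for $\lvv \veca_i \rvv_2^2$ and $\veca_i^T\veca_j$, then a union bound over all $n + \binom{n}{2}$ entries. The only cosmetic differences are that the paper reaches the entrywise max via the variational form plus H\"older (rather than Gershgorin) and invokes Hanson--Wright (rather than Bernstein for subexponential sums); these yield the same tail exponents, and the paper likewise ends up with a $\log n/\delta^2$ diagonal requirement dominated by the $k\log n/\delta$ off-diagonal one.
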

\begin{proof}
See Appendix \ref{App:proof_thm_kric_bound_X_eq_Y}.
\end{proof}
Theorem \ref{thm_kric_bound_X_eq_Y} implies that 
\begin{equation} \label{eq:ric_bound_aodota}
 \delta_{k}\lb \frac{\matA}{\sqrt{m}} \odot \frac{\matA}{\sqrt{m}} \rb \le O \lb \frac{ k \log{n}} {m} \rb
\end{equation}
with arbitrarily high probability. The above $k$-RIC bound for the self Khatri-Rao product scales with $m, n$, and $k$ in a similar fashion as the asymmetric Khatri-Rao product. 

\textit{Remark 3:} From Theorem~\ref{thm_kric_bound_X_eq_Y}, the $k$-RIC of the columnwise Khatri-Rao product of two $m \times n$ sized subgaussian matrices can be guaranteed to be less than unity for $m = O\lb k \log n \rb$. We conjecture that the Khatri-Rao product continues to be $k$-RIP compliant even when $m < k$. However, proving such a result is beyond the scope of this work.

\section{Preliminaries for Deterministic $k$-RIC Bound} \label{sec:bkgd_deterministic_bnd}
%In this section, we discuss the proof of the deterministic $k$-RIC bound given in Theorem~\ref{thm_ric_bound_for_kr}. 
In this section, we present some preliminary concepts 
and results which are necessary for the derivation of the deterministic $k$-RIC bound in Theorem~\ref{thm_ric_bound_for_kr}.
For the sake of brevity, we provide proofs only for claims that have not 
been explicitly shown in their cited sources. 

\subsection{Properties of the Kronecker and Hadamard product}
For any two matrices $\matA$ and $\matB$ of dimensions $m \times n$ and $p \times q$, the kronecker product $\matA \otimes \matB$ is 
the $mp \times nq$ matrix
\begin{equation} \label{defn_kron_prod}
 \matA \otimes \matB = 
\begin{pmatrix}
a_{11} \matB & a_{12} \matB & \cdots & a_{1n} \matB \\
a_{21} \matB & a_{22} \matB & \cdots & a_{2n} \matB \\
\vdots  & \vdots  & \ddots & \vdots  \\
a_{m1} \matB & a_{m2} \matB & \cdots & a_{mn} \matB 
\end{pmatrix}.
\end{equation}
% Listed below are some of the known facts about kronecker products which 
% will be used frequently throughout this paper. Let $\matA, \matB, \matC$ and $\matD$ be any matrices of appropriate dimensions. 
% Then, the following holds.
% \begin{subequations}
% \begin{eqnarray} 
% &&(\AkB)^{T} = \matA^{T} \otimes \matB^{T}
% \label{kron_transpose_prop} \\
% && (\matA \otimes \matB)(\matC \otimes \matD) = \matA \matC \otimes \matB \matD
% \label{kron_distributive_prop} \\
% && (\matA \otimes \matB)^{-1} = \matA^{-1} \otimes \matB^{-1}
% \label{kron_inv_prop} \\
% && (\matA \otimes \matB)^{l} = \matA^{l} \otimes \matB^{l} \hspace{0.2 cm} \text{ for } l \in \mathbb{Z}^{+} \backslash \lc 0 \rc. 
% \label{kron_pow_prop}
% \end{eqnarray}
% \end{subequations}
% Property \eqref{kron_inv_prop} holds true only for nonsingular square matrices $\matA$ and $\matB$, and property \eqref{kron_pow_prop} 
% holds only for square matrices. Detailed proofs of properties \eqref{kron_distributive_prop} and \eqref{kron_inv_prop} can be 
% found in \cite{Bernstein09MatrixMath}, along with more interesting facts about Kronecker products. 

The following Proposition relates the spectral properties of the Kronecker product and its constituent matrices.
\begin{proposition}[$7.1.10$ in \cite{Bernstein09MatrixMath}] \label{kron_svd}
Let $\matA \in \Real^{n \times n}$ and $\matB \in \Real^{p \times p}$ admit eigenvalue decompositions 
$\matU_{A}\Lambda_{A}\matU_{A}^{T}$ and $\matU_{B}\Lambda_{B}\matU_{B}^{T}$, respectively. Then, 
\begin{equation}
(\matU_{A} \otimes \matU_{B}) (\Lambda_{A} \otimes \Lambda_{B}) (\matU_{A} \otimes \matU_{B})^{T}
\nonumber 
\end{equation}
yields the eigenvalue decomposition for $\AkB$.
\end{proposition}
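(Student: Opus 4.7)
The plan is to verify two separate claims: (i) the product $(\matU_{A} \otimes \matU_{B})(\Lambda_{A} \otimes \Lambda_{B})(\matU_{A} \otimes \matU_{B})^{T}$ equals $\AkB$, and (ii) this factorization is a genuine eigenvalue decomposition, meaning that $\matU_{A} \otimes \matU_{B}$ is orthogonal and $\Lambda_{A} \otimes \Lambda_{B}$ is diagonal. Both claims reduce to the \emph{mixed-product property} of the Kronecker product, so I would establish that first as a preparatory lemma.

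The mixed-product property states that for matrices of compatible dimensions, $(\matP \otimes \matQ)(\matR \otimes \matS) = (\matP \matR) \otimes (\matQ \matS)$. This can be verified directly from the block definition in \eqref{defn_kron_prod}: the $(i,j)$-block of the left-hand side is $\sum_{k} p_{ik} r_{kj} (\matQ \matS) = [\matP \matR]_{ij} \, (\matQ \matS)$, which is precisely the $(i,j)$-block of the right-hand side. The transpose identity $(\matU_{A} \otimes \matU_{B})^{T} = \matU_{A}^{T} \otimes \matU_{B}^{T}$ is similarly read off from the block definition.

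With these identities in hand, claim (i) follows by applying the mixed-product property twice:
\begin{equation}
(\matU_{A} \otimes \matU_{B})(\Lambda_{A} \otimes \Lambda_{B})(\matU_{A}^{T} \otimes \matU_{B}^{T}) = (\matU_{A}\Lambda_{A}\matU_{A}^{T}) \otimes (\matU_{B}\Lambda_{B}\matU_{B}^{T}) = \AkB.
\end{equation}
For claim (ii), orthogonality of $\matU_{A} \otimes \matU_{B}$ follows from another use of the mixed-product property: $(\matU_{A} \otimes \matU_{B})^{T}(\matU_{A} \otimes \matU_{B}) = (\matU_{A}^{T}\matU_{A}) \otimes (\matU_{B}^{T}\matU_{B}) = \matI_{n} \otimes \matI_{p} = \matI_{np}$. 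The diagonality of $\Lambda_{A} \otimes \Lambda_{B}$ is transparent from \eqref{defn_kron_prod}: off-diagonal blocks vanish because $\Lambda_{A}$ has zero off-diagonal entries, while each diagonal block is a scalar multiple of the diagonal matrix $\Lambda_{B}$.

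There is essentially no obstacle here; the statement is a standard textbook identity (cited as Proposition $7.1.10$ of \cite{Bernstein09MatrixMath}), and the entire argument rides on the mixed-product rule. The only thing one must be a little careful about is ensuring that the spectrum of $\AkB$ consists \emph{exactly} of the $np$ pairwise products $\lambda_{i}(\matA)\lambda_{j}(\matB)$ with multiplicities accounted for, but this is automatic once the eigenvalue decomposition above has been written down, since the diagonal of $\Lambda_{A} \otimes \Lambda_{B}$ enumerates precisely these products. In the paper itself I would simply cite \cite{Bernstein09MatrixMath} and, if space permits, include the two-line derivation above for the reader's convenience.
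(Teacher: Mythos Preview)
Your proposal is correct and is exactly the standard derivation via the mixed-product property; the paper itself provides no proof for this proposition, simply citing it as Proposition~$7.1.10$ in \cite{Bernstein09MatrixMath}. Your two-line argument is precisely what one would supply if expanding the citation, so there is nothing to add or contrast.
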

For any two matrices of matching dimensions, say $m \times n$, their Hadamard product $\matA \circ \matB$ is obtained by elementwise multiplication of the entries of 
the input matrices, i.e., 
\begin{equation}
 [\matA \circ \matB]_{i, j} = a_{ij} b_{ij}  \hspace{0.2cm} \text{ for } i \in [m], j \in [n]. 
\end{equation}
The Hadamard product $\matA \circ \matB$ is a principal submatrix of the Kronecker product
$\matA \otimes \matB$ \cite{LiuTrenkler, Visick2000HadamardAsSubMat}. For $n \times n$ sized square matrices $\matA$ and $\matB$, one can write,
\begin{equation} \label{hadamard_as_a_kron}
\AhB = \matJ^{T} (\AkB) \matJ,
\end{equation}
where $\matJ$ is an $n^2 \times n$ sized selection matrix constructed entirely from $0$'s and $1$'s which satisfies $\matJ^{T}\matJ = \matI_{n}$. 

In Proposition~\ref{prop_emax_rman_cmax}, we present an upper bound on the spectral radius of a generic Hadamard product.
\begin{proposition} \label{prop_emax_rman_cmax}
For every $\matA, \matB \in \Real^{m \times n}$, we have
\begin{equation} \label{emax_rman_cmax_ineq}
\sigma_{\text{max}} \lb \AhB \rb \le r_{\text{max}}(\matA) c_{\text{max}}(\matB)  
\end{equation}
where $\sigma_{\text{max}}(\cdot)$, $r_{\text{max}}(\cdot)$ and $c_{\text{max}}(\cdot)$ are the largest singular value, the largest row $\ell_{2}$-norm and the largest 
column $\ell_{2}$-norm of the input matrix, respectively. 
\end{proposition}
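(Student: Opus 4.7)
The plan is to bound $\sigma_{\max}(\AhB)$ through its variational characterization
\[
\sigma_{\max}(\AhB) \;=\; \sup_{\|\vecx\|_{2} = \|\vecy\|_{2} = 1} \vecy^{T}(\AhB)\vecx,
\]
so the whole task reduces to estimating the bilinear form on the right uniformly over unit test vectors $\vecx \in \Real^{n}$ and $\vecy \in \Real^{m}$. The right pairing of indices will do all the work: by writing the entry-wise product and grouping $A_{ij}$ with $y_{i}$ and $B_{ij}$ with $x_{j}$, the row-norms of $\matA$ and the column-norms of $\matB$ appear naturally.

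Concretely, I would first expand
\[
\vecy^{T}(\AhB)\vecx \;=\; \sum_{i=1}^{m}\sum_{j=1}^{n} (y_{i} A_{ij})(B_{ij} x_{j}),
\]
view this as an inner product in $\Real^{mn}$ between the vectors $(y_{i}A_{ij})_{i,j}$ and $(B_{ij}x_{j})_{i,j}$, and apply the Cauchy--Schwarz inequality. This yields
\[
|\vecy^{T}(\AhB)\vecx| \;\le\; \Bigl(\sum_{i,j} y_{i}^{2} A_{ij}^{2}\Bigr)^{1/2} \Bigl(\sum_{i,j} B_{ij}^{2} x_{j}^{2}\Bigr)^{1/2}.
\]

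Next, I would collapse the two factors along the ``free'' index. For the first factor, summing $A_{ij}^{2}$ over $j$ produces the squared $\ell_{2}$-norm of the $i$-th row of $\matA$, so
\[
\sum_{i,j} y_{i}^{2} A_{ij}^{2} \;=\; \sum_{i=1}^{m} y_{i}^{2}\,\|\matA_{i,:}\|_{2}^{2} \;\le\; r_{\max}(\matA)^{2}\,\|\vecy\|_{2}^{2}.
\]
Symmetrically, summing $B_{ij}^{2}$ over $i$ gives the squared column-norm of $\matB$, so
\[
\sum_{i,j} B_{ij}^{2} x_{j}^{2} \;=\; \sum_{j=1}^{n} x_{j}^{2}\,\|\matB_{:,j}\|_{2}^{2} \;\le\; c_{\max}(\matB)^{2}\,\|\vecx\|_{2}^{2}.
\]

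Combining these two estimates and taking the supremum over unit $\vecx, \vecy$ produces \eqref{emax_rman_cmax_ineq} directly. There is no real obstacle here: the only ``choice'' to make is how to split the summand in the Cauchy--Schwarz step, and pairing $y_{i}$ with $A_{ij}$ (rows of $\matA$) and $x_{j}$ with $B_{ij}$ (columns of $\matB$) is precisely what is needed to match the form of the claimed bound. Note also that a cruder route via the principal-submatrix representation $\AhB = \matJ^{T}(\AkB)\matJ$ from \eqref{hadamard_as_a_kron} would only give $\sigma_{\max}(\AhB)\le \sigma_{\max}(\matA)\,\sigma_{\max}(\matB)$, which is strictly weaker than the desired inequality in general.
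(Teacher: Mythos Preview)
Your argument is correct and complete: the Cauchy--Schwarz splitting $(y_iA_{ij})(B_{ij}x_j)$ is exactly the right pairing, and the subsequent row/column collapses are clean. The paper does not actually prove this proposition; it simply cites Theorem~5.5.3 in Horn and Johnson's \emph{Topics in Matrix Analysis}, whose proof is essentially the same variational-plus-Cauchy--Schwarz argument you wrote out. So your proposal supplies the standard proof that the paper chose to reference rather than reproduce.
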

\begin{proof}
See Theorem $5.5.3$ in \cite{horn94}. 
\end{proof}
% \begin{proof}
%  Consider 
%  \begin{eqnarray}
%   |\vecx^{T}(\AhB)\vecy| &=& \left|\sum_{i = 1}^{m} \sum_{j = 1}^{n} (\AhB)_{i,j}\vecx_{i} \vecy_{j} \right|
%  \nonumber \\
%  %-----------------------------
%  && \hspace{-2cm} =
%  \lv \sum_{i = 1}^{m} \sum_{j = 1}^{n} \matA_{ij} \matB_{ij}\vecx_{i} \vecy_{j} \rv
%  \nonumber \\
%  %---------------------------- 
%  && \hspace{-2cm} \le
%  \lv \sum_{i = 1}^{m}\sum_{j = 1}^{n} \matA_{ij}^{2} \vecx_{i}^{2} \rv^{1/2} \lv \sum_{i = 1}^{m}\sum_{j = 1}^{n} \matB_{ij}^{2} \vecy_{j}^{2}\rv^{1/2}
%  \nonumber \\
%  %---------------------------- 
%  && \hspace{-2cm} =
%   \lv \sum_{i = 1}^{m} \vecx_{i}^{2} \lb \sum_{j = 1}^{n} \matA_{ij}^{2} \rb \rv^{1/2} \lv \sum_{i = 1}^{m} \vecy_{j}^{2} \lb \sum_{j = 1}^{n} \matB_{ij}^{2} \rb \rv^{1/2}
%  \nonumber  \\
%  %---------------------------- 
%  && \hspace{-2cm} =
%   \lv \sum_{i = 1}^{m} \vecx_{i}^{2} r_{i}^{2} \rv^{1/2} \lv \sum_{i = 1}^{m} \vecy_{j}^{2} c_{j}^{2} \rv^{1/2}
%  \nonumber  \\
%  %---------------------------- 
%  && \hspace{-2cm} \le
%   r_{\text{max}}(\matA) c_{\text{max}}(\matB) ||\vecx||_{2} ||\vecy||_{2}
%  \nonumber  
% \end{eqnarray}
% By plugging $\vecx = \vecy$, and for $||\vecx||_{2} = 1$, we conclude
% \begin{equation}
%  \sigma_{\text{max}}(\AhB) \le r_{\text{max}}(\matA) c_{\text{max}}(\matB).
%  \nonumber 
% \end{equation}
% \end{proof}
We now state an important result about the Hadamard product of two positive semidefinite matrices. 
\begin{proposition} [Mond and Pe\u{c}ari\'{c} \cite{MondPecaric1998}] \label{prop_hadamard_prod_pow_ineq}
Let $\matA$ and $\matB$ be positive semidefinite $n \times n$ Hermitian matrices and let $r$ and $s$ be two 
nonzero integers such that $s > r$. Then,
\begin{equation}
\lb \matA^{s} \circ \matB^{s} \rb^{1/s} \ge \lb \matA^{r} \circ \matB^{r} \rb^{1/r}.   
\end{equation}
\end{proposition}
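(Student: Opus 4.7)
The plan is to combine the compression identity \eqref{hadamard_as_a_kron}, namely $\matA \circ \matB = \matJ^T(\matA \otimes \matB)\matJ$ with $\matJ^T\matJ = \matI_n$, with two standard operator-inequality tools: (i) the Choi--Davis--Jensen inequality, which asserts that $f(\Phi(\matX)) \le \Phi(f(\matX))$ for any unital positive linear map $\Phi$, any operator-convex $f:[0,\infty)\to\mathbb{R}$, and any PSD $\matX$; and (ii) the L\"owner--Heinz theorem, which states that $t\mapsto t^\alpha$ is operator monotone on $[0,\infty)$ for $\alpha\in[0,1]$ and operator convex on $[0,\infty)$ for $\alpha\in[1,2]$. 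Note that the compression $\Phi(\matX) := \matJ^T\matX\matJ$ is unital (by $\matJ^T\matJ = \matI_n$) and positive, so both tools apply to it.

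First I would establish a single-step version of the claim: for any reals $s' > r' \ge 1$ with $s'/r' \in [1,2]$, one has $(\matA^{r'} \circ \matB^{r'})^{1/r'} \le (\matA^{s'} \circ \matB^{s'})^{1/s'}$. Setting $\matX := \matA^{r'} \otimes \matB^{r'}$ (PSD) and $f(t) := t^{s'/r'}$ (operator convex, since $s'/r' \in [1,2]$) in the Choi--Davis--Jensen inequality, and using $(\matA^{r'} \otimes \matB^{r'})^{s'/r'} = \matA^{s'}\otimes \matB^{s'}$ (from Proposition \ref{kron_svd}) together with $\matJ^T(\matA^{k} \otimes \matB^{k})\matJ = \matA^{k} \circ \matB^{k}$, I obtain
\begin{equation*}
(\matA^{r'} \circ \matB^{r'})^{s'/r'} \;\le\; \matA^{s'} \circ \matB^{s'}.
\end{equation*}
Raising both sides to the power $1/s' \in (0,1]$, which is operator monotone by L\"owner--Heinz, produces the desired step inequality.

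Finally I would telescope to arbitrary integers $s > r \ge 1$ by selecting a finite chain $r = r_0 < r_1 < \cdots < r_k = s$ with consecutive ratios $r_{i+1}/r_i \in [1,2]$ (for instance $r_i := \min(2^i r, s)$); the intermediate $r_i$ need not be integers, but $\matA^{r_i}$ is still well defined for PSD $\matA$ and real $r_i > 0$. Applying the step inequality at each link and chaining by transitivity of the L\"owner order yields the theorem. The main obstacle is the step inequality itself, because the operator convexity of $t\mapsto t^p$ fails for $p>2$, which prevents any single application of Jensen from directly handling the range $s/r>2$; the telescoping device is introduced precisely to bypass this restriction. The remaining sign cases ($r<0$, or $r<0<s$) can be treated analogously by invoking the operator convexity of $t^{-1}$ on $(0,\infty)$ and the order-reversing behavior of matrix inversion.
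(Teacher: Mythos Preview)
The paper does not supply its own proof of this proposition; it is quoted as a known result from Mond and Pe\u{c}ari\'{c}. Your approach---compressing $\matA\otimes\matB$ to $\matA\circ\matB$ via the selection matrix $\matJ$, invoking Davis--Choi--Jensen for the operator-convex power $t\mapsto t^{p}$ with $p\in[1,2]$, then applying L\"owner--Heinz monotonicity and telescoping along a dyadic chain---is correct for positive integers $s>r\ge 1$, and is in the same spirit as the standard proofs of such power-mean inequalities for Hadamard products. Two small remarks: (i) the negative and mixed-sign cases, which you only sketch, require $\matA,\matB$ to be positive \emph{definite} for $\matA^{r}$ to make sense when $r<0$, a hypothesis the proposition as stated does not carry; your outline there is plausible but would need the details written out. (ii) The paper only invokes the instance $r=1$, $s=2$ (in the proof of Theorem~\ref{thm_ric_bound_for_kr}), for which your single Jensen step with $s'/r'=2$ already suffices without any telescoping.
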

% \begin{proof}
% See \cite{MondPecaric1998}.
% \end{proof}
% \begin{proof}
%  Let $\matJ$ be the $n^{2} \times n$ selection matrix such that $\matA \circ \matB = \matJ^{T} (\matA \otimes \matB) \matJ$. 
%  Using proposition \eqref{corr_matrix_prod_pow_ineq}, with 
%  $\matV = \matJ$ and $\matC = \AkB$, we get
%  \begin{equation}
%   \lb \matJ^{T}(\AkB)^{s} \matJ \rb^{1/s} \ge \lb \matJ^{T} (\AkB)^{r} \matJ \rb^{1/r}
%   \nonumber
%  \end{equation}
%  for $s, r \in \Real \backslash (-1, 1)$, and $s > r$. Using property \eqref{kron_pow_prop} of kronecker product, we can write
% \begin{equation}
%  \lb \matJ^{T}(\matA^s \otimes \matB^s) \matJ \rb^{1/s} \ge \lb \matJ^{T} (\matA^r \otimes \matB^r) \matJ \rb^{1/r},
%    \nonumber
% \end{equation}
% which is equivalent to $ \lb \matA^s \circ \matB^s\rb^{1/s} \ge \lb \matA^r \circ \matB^r\rb^{1/r}$.
% \end{proof}

In Propositions \ref{corr_mat_sqrt_hadamard_is_double_stochastic} and \ref{prop_hadamard_corr_matrix_less_than_I}, we state some spectral 
properties of correlation matrices and their Hadamard products. Correlation matrices are symmetric positive semidefinite matrices 
with diagonal entries equal to one. Later on, we will exploit the fact that the singular values of the columnwise KR product are related 
to the singular values of the Hadamard product of certain correlation matrices.

\begin{proposition} \label{corr_mat_sqrt_hadamard_is_double_stochastic}
If $\matA$ is an $n \times n$ correlation matrix, then $\matA^{1/2} \circ \matA^{1/2}$ is a doubly stochastic matrix.
\end{proposition}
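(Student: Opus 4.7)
The plan is essentially a one-line unpacking of definitions, so I would organize it to make each required property of a doubly stochastic matrix fall out from the defining property of a correlation matrix.

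First I would set $\matB \triangleq \matA^{1/2}$, the unique symmetric positive semidefinite square root of $\matA$. Since $\matA$ is real and PSD, $\matB$ is real and symmetric, so the entries of the Hadamard product $\matA^{1/2} \circ \matA^{1/2}$ are simply $B_{ij}^{2}$, which are automatically nonnegative. This disposes of the nonnegativity requirement for double stochasticity.

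Next I would translate the correlation-matrix hypothesis $A_{ii} = 1$ into a statement about rows of $\matB$. Because $\matB$ is symmetric, $\matB^{2} = \matB\matB^{T}$, so
\begin{equation}
1 = A_{ii} = [\matB \matB^{T}]_{ii} = \sum_{j=1}^{n} B_{ij}^{2},
\end{equation}
which says exactly that the $i$-th row of $\matA^{1/2} \circ \matA^{1/2}$ sums to one. By the symmetry $B_{ij} = B_{ji}$, the same identity applied with $\matB^{2} = \matB^{T}\matB$ gives
\begin{equation}
1 = A_{jj} = [\matB^{T}\matB]_{jj} = \sum_{i=1}^{n} B_{ij}^{2},
\end{equation}
so the columns also sum to one. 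Combining the three observations (nonnegativity, unit row sums, unit column sums) yields that $\matA^{1/2} \circ \matA^{1/2}$ is doubly stochastic.

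There is no real obstacle here; the only point worth flagging is that the argument uses in an essential way the \emph{symmetric} PSD square root (so that $\matB^{T} = \matB$ and both row and column sums fall out of the single identity $\matB^{2} = \matA$), and the realness of $\matA$ (so that $B_{ij}^{2}$ is a genuine nonnegative real, rather than merely $|B_{ij}|^{2}$). Once these are noted, the proof is immediate.
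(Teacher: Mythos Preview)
Your proof is correct and is in fact more elementary than the paper's. The paper proceeds via the spectral decomposition $\matA = \matU \matLambda \matU^{T}$, expands $\matA^{1/2} \circ \matA^{1/2}$ as $\sum_{i,j} \lambda_{i}^{1/2}\lambda_{j}^{1/2} (\vecu_{i}\circ\vecu_{j})(\vecu_{i}\circ\vecu_{j})^{T}$, and then computes $(\matA^{1/2}\circ\matA^{1/2})\mathbf{1}$ by repeatedly invoking orthonormality of the $\vecu_{i}$ to collapse the double sum back to the diagonal of $\matA$. Your argument bypasses all of this by observing directly that the $(i,j)$ entry of $\matA^{1/2}\circ\matA^{1/2}$ is $B_{ij}^{2}$ with $\matB=\matA^{1/2}$ symmetric, so the $i$-th row sum is $[\matB\matB^{T}]_{ii}=[\matB^{2}]_{ii}=A_{ii}=1$. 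The spectral route has the merit of exhibiting the rank-one structure that the paper exploits elsewhere, but for the bare statement your entrywise computation is cleaner and also makes the nonnegativity of the entries (required for double stochasticity, and not addressed in the paper's argument) transparent.
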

\begin{proof}
See Appendix \ref{App:proof_corr_mat_sqrt_hadamard_is_double_stochastic}.
\end{proof}

\begin{proposition} [Werner \cite{Image26}] \label{prop_hadamard_corr_matrix_less_than_I}
For any correlation matrices $\matA$ and $\matB$ of the same size, we have $\matA^{1/2} \circ \matB^{1/2} \le \matI$,
where $\matA^{1/2}$ and $\matB^{1/2}$ are the positive square roots of $\matA$ and $\matB$, respectively. 
\end{proposition}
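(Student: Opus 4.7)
The plan is to reduce the claim to Proposition \ref{prop_emax_rman_cmax} (Schur's spectral bound for the Hadamard product) by exploiting the fact that the principal square root of a correlation matrix has unit-norm rows and columns.

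First I would unpack the hypothesis. Each correlation matrix $\matA$ is symmetric positive semidefinite with unit diagonal, so its principal square root $\matA^{1/2}$ is itself symmetric and satisfies $\matA^{1/2}\matA^{1/2} = \matA$. Reading the diagonal entries of the identity $\matA = (\matA^{1/2})^{T}\matA^{1/2}$ shows that every column of $\matA^{1/2}$ has squared $\ell_{2}$-norm equal to $1$; the symmetry of $\matA^{1/2}$ then gives the same conclusion for its rows. Hence $r_{\max}(\matA^{1/2}) = c_{\max}(\matA^{1/2}) = 1$, and the identical argument applies to $\matB^{1/2}$.

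Second, I would apply Proposition \ref{prop_emax_rman_cmax} directly to the Hadamard product $\matA^{1/2} \circ \matB^{1/2}$ to obtain
\begin{equation}
\sigma_{\max}\bigl(\matA^{1/2} \circ \matB^{1/2}\bigr) \le r_{\max}(\matA^{1/2})\, c_{\max}(\matB^{1/2}) = 1.
\nonumber
\end{equation}
Since $\matA^{1/2}$ and $\matB^{1/2}$ are both symmetric, their Hadamard product is symmetric as well, so its eigenvalues equal its singular values in absolute value. In particular $\lambda_{\max}(\matA^{1/2} \circ \matB^{1/2}) \le \sigma_{\max}(\matA^{1/2} \circ \matB^{1/2}) \le 1$, which is equivalent to $\matI - \matA^{1/2} \circ \matB^{1/2} \succeq 0$, i.e.\ the desired inequality $\matA^{1/2} \circ \matB^{1/2} \le \matI$.

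There is no serious obstacle here; the only point that requires a moment of care is the translation from the unit diagonal of $\matA$ to unit $\ell_{2}$-norms of both rows and columns of $\matA^{1/2}$, which hinges crucially on using the symmetric (principal) square root rather than an arbitrary factor such as a Cholesky root. Once that observation is in place, Proposition \ref{prop_emax_rman_cmax} does all the remaining work.
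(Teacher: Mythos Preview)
Your proof is correct and follows essentially the same route as the paper: establish that $r_{\max}(\matA^{1/2}) = c_{\max}(\matA^{1/2}) = 1$ (and likewise for $\matB^{1/2}$), then invoke Proposition~\ref{prop_emax_rman_cmax} to bound the spectrum of the Hadamard product. The only cosmetic difference is that the paper obtains the unit row/column norms of $\matA^{1/2}$ by appealing to Proposition~\ref{corr_mat_sqrt_hadamard_is_double_stochastic} (the doubly-stochastic characterization of $\matA^{1/2}\circ\matA^{1/2}$), whereas you read them off directly from the diagonal of $\matA = (\matA^{1/2})^{2}$---a slightly more elementary shortcut to the same fact.
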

\begin{proof}
Since $\matA$ is a correlation matrix, from Proposition \ref{corr_mat_sqrt_hadamard_is_double_stochastic}, it follows that
$\matA^{1/2} \circ \matA^{1/2}$ is doubly stochastic. Since the rows and columns of 
$\matA^{1/2} \circ \matA^{1/2}$ sum to unity, we have $r_{\text{max}}(\matA^{1/2}) = c_{\text{max}}(\matA^{1/2}) = 1$.
Similarly, $r_{\text{max}}(\matB^{1/2}) = c_{\text{max}}(\matB^{1/2}) = 1$. Then, from Proposition \ref{prop_emax_rman_cmax}, it follows 
that the largest eigenvalue of $\matA^{1/2} \circ \matB^{1/2}$ is at most unity.
\end{proof}

\subsection{Matrix Kantorovich Inequalities}
Matrix Kantorovich inequalities relate positive definite matrices 
by inequalities in the sense of the L\"{o}wner partial order.\footnote{The L\"{o}wner partial order here refers to the relation 
``$\le$". For positive definite matrices $\matA$ and $\matB$, $\matA \le \matB$ {if and only if} $\matB - \matA$ is a positive 
{semi-}definite matrix.} These inequalities can be used to extend the 
L\"{o}wner partial order to the Hadamard product of positive definite matrices.  
Our proposed RIC bound relies on the tightness of these Kantorovich inequalities and their extensions.

A matrix version of the Kantorovich inequality was first proposed by Marshall and Olkin in \cite{MarshallOlkin64}. It is 
stated below as Proposition~\ref{prop_narshall_n_olkin}.

\begin{proposition} [Marshall and Olkin \cite{MarshallOlkin64}] \label{prop_narshall_n_olkin}
Let $\matA$ be an $n \times n$ positive definite matrix. Let $\matA$ admit the Schur decomposition $\matA = \matU \matLambda \matU^{T}$ with unitary $\matU$ and 
$\matLambda = \diag{ \lb \lambda_{1}, \lambda_{2}, \dots, \lambda_{n} \rb}$ such that $\lambda_{i} \in [m, M]$. Then, we have
\begin{equation} \label{matrix_kantorovich_ineq}
\matA^{2} \le (M + m) \matA - m M \matI_{n}.
\end{equation}
\end{proposition}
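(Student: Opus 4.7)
The plan is to reduce the matrix inequality to a scalar inequality on each eigenvalue via the spectral decomposition, then lift back using unitary conjugation. The underlying scalar fact is the classical observation that if $\lambda \in [m, M]$, then $(\lambda - m)(M - \lambda) \ge 0$, which rearranges to $\lambda^{2} \le (M+m)\lambda - mM$. This is the Kantorovich-type inequality at the scalar level, and it is an equality at the endpoints $\lambda = m$ and $\lambda = M$.

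First I would note that since each diagonal entry $\lambda_{i}$ of $\matLambda$ lies in $[m, M]$, the scalar inequality above yields $\lambda_{i}^{2} \le (M+m)\lambda_{i} - mM$ for every $i$. Because both sides are simultaneously diagonalized in the standard basis, this gives the diagonal matrix inequality
\begin{equation}
\matLambda^{2} \le (M+m)\matLambda - mM \matI_{n},
\nonumber
\end{equation}
where $\le$ is understood in the L\"{o}wner order (here, simply entry-wise on the diagonal, which coincides with positive semidefiniteness since all matrices involved are diagonal).

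Next I would conjugate both sides by the unitary $\matU$ from the Schur decomposition $\matA = \matU \matLambda \matU^{T}$. The L\"{o}wner order is preserved under congruence by invertible matrices, so left-multiplying by $\matU$ and right-multiplying by $\matU^{T}$ preserves the inequality. Using $\matU \matLambda^{2} \matU^{T} = \matA^{2}$, $\matU \matLambda \matU^{T} = \matA$, and $\matU \matI_{n} \matU^{T} = \matI_{n}$, we obtain
\begin{equation}
\matA^{2} \le (M+m)\matA - mM \matI_{n},
\nonumber
\end{equation}
which is exactly \eqref{matrix_kantorovich_ineq}.

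There is essentially no obstacle in this argument; the only subtlety is justifying the passage from the scalar inequality on each eigenvalue to the matrix inequality, which rests on the fact that a self-adjoint matrix function defined through its spectrum respects scalar inequalities on the spectrum, and that unitary conjugation preserves the L\"{o}wner partial order. Both are standard consequences of the spectral theorem and will be invoked without further elaboration.
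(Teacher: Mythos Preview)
Your argument is correct and is the standard proof: reduce to the scalar inequality $(\lambda_i - m)(M - \lambda_i) \ge 0$ on each eigenvalue, obtain the diagonal inequality $\matLambda^{2} \le (M+m)\matLambda - mM\matI_{n}$, and conjugate by the unitary $\matU$ to recover the matrix statement. The paper itself does not supply a proof of this proposition; it simply cites it as a classical result of Marshall and Olkin \cite{MarshallOlkin64} and uses it as a starting point for the subsequent Kantorovich-type inequalities, so there is nothing to compare against beyond noting that your proof is the expected one.
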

% \begin{proof}
% See \cite{MarshallOlkin1990}. 
% \end{proof}
% \begin{proof}
%  From convexity of the scalar function $1/z$, we have the following inequality\footnote{The inequality considered here is a consequence of the fact that 
%  the chord connecting any two points of a convex curve always lies above the curve.} for $z \in [m, M]$ and $m > 0$:
%  \begin{equation} \label{eqn_scalar_kantorovitch}
%   z^{-1} \le \frac{M+m}{mM} - \frac{z}{mM}.
%  \end{equation}
%  The following matrix inequality follows directly from \eqref{eqn_scalar_kantorovitch} for any diagonal matrix $\matLambda = \diag{\lb \lambda_{1}, \lambda_{2},
%  \dots \lambda_{n} \rb}$ with $\lambda_{i} \in [m, M]$.
%  \begin{equation} \label{eqn_diag_kantorovitch}
%   \matLambda^{-1} \le \frac{M + m}{mM} \matI_{n} - \frac{1}{mM}\matLambda
%  \end{equation}
% Pre and post multiplying with $\matU$ and $\matU^{T}$, respectively such that $\matA = \matU \matLambda \matU^{T}$, we get:
%  \begin{equation} \label{eqn_matrix_kantorovitch_intr}
%   \matI_{n} \le \frac{M + m}{mM} \matA - \frac{1}{mM}\matA^{2}
%  \end{equation}
%  Rearranging terms in \eqref{eqn_matrix_kantorovitch_intr}, yields the stated proposition.
% \end{proof}

The above inequality \eqref{matrix_kantorovich_ineq} is the starting point for obtaining a variety of forward and reverse 
Kantorovich-type matrix inequalities for positive definite matrices. In Propositions~\ref{prop_matrix_kantorovitch_ineq_form2} 
and~\ref{prop_liu_neudecker_kantorovitch_with_inverse}, we state specific forward and reverse inequalities, respectively, 
which are relevant to us.
\begin{proposition} [Liu and Neudecker \cite{LiuNeudecker96}] \label{prop_matrix_kantorovitch_ineq_form2}
Let $\matA$ be an $n \times n$ positive definite Hermitian matrix, with eigenvalues in $[m, M]$. Let $\matV$ be an $n \times n$ 
matrix such that $\matV^{T} \matV = \matI_{n}$. Then, 
\begin{equation} \label{matrix_kantorovich_ineq_2}
\matV^{T} \matA^{2} \matV  - \lb \matV^{T} \matA \matV \rb^{2} \le \frac{1}{4} (M - m)^{2} \matI_{n}.
\end{equation}
\end{proposition}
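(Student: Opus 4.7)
The plan is to reduce the claim to Proposition~\ref{prop_narshall_n_olkin} (Marshall--Olkin) combined with a scalar completing-the-square identity. First, I would invoke Proposition~\ref{prop_narshall_n_olkin}, which delivers $\matA^{2} \le (M+m)\matA - mM\,\matI_{n}$. Since conjugation by a fixed matrix preserves the L\"{o}wner order, I would sandwich this inequality between $\matV^{T}$ on the left and $\matV$ on the right; using the hypothesis $\matV^{T}\matV = \matI_{n}$ to simplify the constant term on the right-hand side yields
\begin{equation}
\matV^{T}\matA^{2}\matV \;\le\; (M+m)\,\matV^{T}\matA\matV \;-\; mM\,\matI_{n}.
\end{equation}

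Introducing the shorthand $\matB \triangleq \matV^{T}\matA\matV$ and subtracting $\matB^{2}$ from both sides, the task reduces to the purely algebraic matrix inequality
\begin{equation}
(M+m)\matB - mM\,\matI_{n} - \matB^{2} \;\le\; \tfrac{1}{4}(M-m)^{2}\,\matI_{n}.
\end{equation}
Completing the square in $\matB$, the left-hand side can be rewritten as
\begin{equation}
-\Bigl(\matB - \tfrac{M+m}{2}\matI_{n}\Bigr)^{2} + \Bigl(\tfrac{(M+m)^{2}}{4} - mM\Bigr)\matI_{n}
\;=\; -\Bigl(\matB - \tfrac{M+m}{2}\matI_{n}\Bigr)^{2} + \tfrac{(M-m)^{2}}{4}\matI_{n}.
\end{equation}
The first summand on the right is negative semidefinite: $\matB$ is Hermitian (being a conjugate of the Hermitian $\matA$), so $\matB - \tfrac{M+m}{2}\matI_{n}$ is Hermitian and its square is positive semidefinite. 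Discarding that term yields exactly the desired bound.

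I do not anticipate any real obstacles; the argument is essentially a two-line reduction to Marshall--Olkin followed by scalar algebra. The only subtlety worth double-checking is the arithmetic identity $(M+m)^{2} - 4mM = (M-m)^{2}$, which is what makes the completing-the-square constant collapse to $\tfrac{1}{4}(M-m)^{2}$ and thus produces the sharp form stated in the proposition. The hypothesis $\matV^{T}\matV = \matI_{n}$ enters only once, namely to replace $\matV^{T}\matI_{n}\matV$ by $\matI_{n}$ after conjugation; no further spectral information about $\matV$ is needed, and in fact the same proof would go through for any (not necessarily square) $\matV$ with orthonormal columns.
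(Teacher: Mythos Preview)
Your proposal is correct and follows exactly the route the paper indicates: pre- and post-multiply the Marshall--Olkin inequality \eqref{matrix_kantorovich_ineq} by $\matV^{T}$ and $\matV$, then complete the square on the right-hand side. There is nothing to add; your observation that the argument extends to any $\matV$ with orthonormal columns is also correct.
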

\begin{proposition} [Liu and Neudecker \cite{LiuNeudecker97}] \label{prop_liu_neudecker_kantorovitch_with_inverse}
Let $\matA$ and $\matB$ be $n \times n$ positive definite matrices. Let $m$ and $M$ be the minimum and maximum eigenvalues of 
$\matB^{1/2} \matA^{-1} \matB^{1/2}$. Let $\matX$ be an $n \times p$ matrix.
%with rank $q$ $(n > p \ge q)$ 
Then, we have
\begin{equation} \label{matrix_kantorovich_ineq_4}
(\matX^{T} \matB \matX) (\matX^{T} \matA \matX)^{\dag} (\matX^{T} \matB \matX)
\ge 
\frac{4 m M}{(M + m)^{2}} 
 \matX^{T} \matB \matA^{-1} \matB \matX.
\end{equation}
\end{proposition}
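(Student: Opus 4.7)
The plan is to rewrite the inequality via a symmetric change of variables so that Marshall--Olkin (Proposition~\ref{prop_narshall_n_olkin}) applies, then close the residual gap by completing a square. Introduce $\matD := \matA^{-1/2}\matB\matA^{-1/2}$, $\matY := \matA^{1/2}\matX$, and $\mathbf{M} := \matB^{1/2}\matA^{-1/2}$. Since $\matD = \mathbf{M}^T\mathbf{M}$ while $\matB^{1/2}\matA^{-1}\matB^{1/2} = \mathbf{M}\mathbf{M}^T$, the two matrices share the same nonzero spectrum; consequently $m\matI \le \matD \le M\matI$. Under this substitution, $\matX^T\matA\matX = \matY^T\matY$, $\matX^T\matB\matX = \matY^T\matD\matY$, and $\matX^T\matB\matA^{-1}\matB\matX = \matY^T\matD^{2}\matY$, so the claim becomes
\[
(\matY^T\matD\matY)(\matY^T\matY)^{\dagger}(\matY^T\matD\matY) \;\ge\; \frac{4mM}{(M+m)^{2}}\,\matY^T\matD^{2}\matY.
\]

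Next I would peel off the pseudoinverse using the compact SVD $\matY = \matU\matSigma\matV^T$, where $\matU^T\matU = \matV^T\matV = \matI_r$ and $\matSigma \succ 0$. Writing $(\matY^T\matY)^{\dagger} = \matV\matSigma^{-2}\matV^T$, the two outer $\matSigma$'s absorb the middle $\matSigma^{-2}$ and one obtains
\[
(\matY^T\matD\matY)(\matY^T\matY)^{\dagger}(\matY^T\matD\matY) \;=\; \matV\matSigma\,(\matU^T\matD\matU)^{2}\,\matSigma\matV^T,
\]
while the right-hand side of the target inequality equals $\frac{4mM}{(M+m)^{2}}\,\matV\matSigma\,(\matU^T\matD^{2}\matU)\,\matSigma\matV^T$. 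Because both sides are congruent through the same matrix $\matV\matSigma$, by the congruence property of the L\"owner order it suffices to prove the isometry-only statement
\[
(\matU^T\matD\matU)^{2} \;\ge\; \frac{4mM}{(M+m)^{2}}\,\matU^T\matD^{2}\matU.
\]

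The last step combines Marshall--Olkin with a scalar-style completion of a square. Marshall--Olkin applied to $\matD$ gives $\matD^{2} \le (M+m)\matD - mM\matI$, and compressing by $\matU$ (using $\matU^T\matU = \matI_r$) yields $\matU^T\matD^{2}\matU \le (M+m)\matE - mM\matI_r$ with $\matE := \matU^T\matD\matU$. On the other side, symmetry of $\matE$ renders the trivial identity $\bigl(\matE - \tfrac{2mM}{M+m}\matI_r\bigr)^{2} \ge 0$ valid, which expands to $\matE^{2} \ge \tfrac{4mM}{M+m}\matE - \tfrac{4m^{2}M^{2}}{(M+m)^{2}}\matI_r$. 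Scaling the Marshall--Olkin bound by $\tfrac{4mM}{(M+m)^{2}}$ and chaining it against this square-completion bound produces exactly $(\matU^T\matD\matU)^{2} \ge \tfrac{4mM}{(M+m)^{2}}\matU^T\matD^{2}\matU$.

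The main obstacle I anticipate is the bookkeeping for rank-deficient $\matX$, where $\matU\matU^T \ne \matI_n$ makes naive manipulations with the pseudoinverse illegal; in particular $(\matU^T\matD\matU)^{2} \ne \matU^T\matD^{2}\matU$ in general, which is precisely the nontrivial content of the Kantorovich step. The compact-SVD reduction above circumvents this because the $\matSigma^{-2}$ is consumed by the surrounding $\matSigma$'s before ever coming into contact with any $\matU\matU^T$ factor. Once that reduction is in place, the remainder is a matrix-lifted form of the scalar Kantorovich identity and proceeds mechanically.
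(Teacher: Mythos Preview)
Your argument is correct. The change of variables $\matD = \matA^{-1/2}\matB\matA^{-1/2}$, $\matY = \matA^{1/2}\matX$ cleanly reduces the claim to the case $\matA = \matI$, and the compact-SVD reduction is the right way to handle rank-deficient $\matX$: the identity $\matV^{T}\matV = \matI_{r}$ allows the inner $\matSigma^{-2}$ to cancel without invoking $\matU\matU^{T}$. Your final step---compressing Marshall--Olkin by the isometry $\matU$ and chaining it against the expansion of $(\matE - \tfrac{2mM}{M+m}\matI_{r})^{2} \ge 0$---is exactly the kind of square-completion manoeuvre the paper sketches for Proposition~\ref{prop_matrix_kantorovitch_ineq_form2}.

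As for comparison: the paper does not actually prove this proposition; it simply refers to Liu and Neudecker~\cite{LiuNeudecker97}. So there is no in-paper argument to benchmark against. Your route is self-contained within the tools the paper already introduces (Proposition~\ref{prop_narshall_n_olkin} plus a congruence argument), which is a nice bonus: it avoids the reader having to chase the external reference.
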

Proposition~\ref{prop_matrix_kantorovitch_ineq_form2} can be proved using \eqref{matrix_kantorovich_ineq} by pre- and post-multiplying by 
$\matV^{T}$ and $\matV$, respectively, followed by completion of squares for the right hand side terms. 
The proof of Proposition~\ref{prop_liu_neudecker_kantorovitch_with_inverse} is given in \cite{LiuNeudecker97}.

% \begin{corollary}  [S. Liu \cite{LiuS2002EconometricTheory}] \label{corr_liu_neudecker_kantorovitch_with_inverse}
% Let $\matM, \matN$ be positive definite matrices of the same order with $p$ and $q$ being the largest and smallest eigenvalues of $\matN\matM$.
% Then, the following holds,
% \begin{equation}  \label{matrix_kantorovich_ineq_5}
% (\matY^{T} \matN \matY) (\matY^{T} \matM^{-1} \matY)^{-1} (\matY^{T} \matN \matY)
% \ge 
% \frac{4pq}{(p + q)^{2}} \matY^{T} \matN \matM \matN \matY
% \end{equation}
% for any matrix $\matY$ of appropriate dimensions.
% \end{corollary}
% \begin{proof}
% In Proposition \ref{prop_liu_neudecker_kantorovitch_with_inverse}, replace $\matX$ with $\matY$, $\matB$ with $\matN$, and $\matA$ with $\matM^{-1}$ 
% to obtain the desired result.
% \qedhere
% \end{proof}

\subsection{Kantorovich Matrix Inequalities for the Hadamard Products of Positive Definite Matrices}\label{sec:kantorovich_for_hadamard}
Lemmas~\ref{lem_matrix_kantorovich_ineq_2_hadamard_ver} and \ref{lem_reverse_kantorovich_hadamard_ver} stated below
extend the Kantorovich inequalities from the previous subsection to Hadamard products. 
%The key idea exploited while obtaining these Lemmas is to interpret the Hadamard matrix as a principal submatrix of the 
%Kronecker product, as described in \eqref{hadamard_as_a_kron}. 

\begin{lemma} [Liu and Neudecker \cite{LiuNeudecker96}] \label{lem_matrix_kantorovich_ineq_2_hadamard_ver}
Let $\matA$ and $\matB$ be $n \times n$ positive definite matrices, with $m$ and $M$ denoting the minimum and maximum 
eigenvalues of $\AkB$. Then, we have
\begin{equation} 
\matA^{2} \circ \matB^{2} \le (\matA \circ \matB)^{2} + \frac{1}{4}(M - m)^{2} \matI_{n}. 
\end{equation}
\end{lemma}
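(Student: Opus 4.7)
The plan is to reduce the Hadamard-product statement to the ordinary Kantorovich-type inequality in Proposition~\ref{prop_matrix_kantorovitch_ineq_form2} by viewing both sides through the selection matrix $\matJ$ that realizes $\AhB$ as a principal submatrix of $\AkB$. Concretely, because $\matA$ and $\matB$ are both positive definite, so is their Kronecker product $\matA\otimes\matB$, whose eigenvalues are (by Proposition~\ref{kron_svd}) all pairwise products of eigenvalues of $\matA$ and $\matB$, and hence lie in $[m,M]$ by hypothesis.

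First, I would take the Kronecker product $\matA\otimes\matB$ in the role of the positive definite matrix appearing in Proposition~\ref{prop_matrix_kantorovitch_ineq_form2}, and the $n^2\times n$ selection matrix $\matJ$ of \eqref{hadamard_as_a_kron} (which satisfies $\matJ^T\matJ=\matI_n$) in the role of $\matV$. Applying the inequality directly gives
\begin{equation*}
\matJ^T(\matA\otimes\matB)^2\matJ \;-\; \bigl(\matJ^T(\matA\otimes\matB)\matJ\bigr)^2 \;\le\; \tfrac14(M-m)^2 \matI_n.
\end{equation*}

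Second, I would simplify each term. Using the mixed-product identity $(\matA\otimes\matB)(\matA\otimes\matB)=\matA^2\otimes\matB^2$, and then invoking \eqref{hadamard_as_a_kron} twice, I obtain
\begin{equation*}
\matJ^T(\matA\otimes\matB)^2\matJ \;=\; \matJ^T(\matA^2\otimes\matB^2)\matJ \;=\; \matA^2\circ\matB^2,
\end{equation*}
and
\begin{equation*}
\bigl(\matJ^T(\matA\otimes\matB)\matJ\bigr)^2 \;=\; (\matA\circ\matB)^2.
\end{equation*}
Substituting these back into the previous display and rearranging gives exactly the claimed bound
\begin{equation*}
\matA^2\circ\matB^2 \;\le\; (\matA\circ\matB)^2 + \tfrac14(M-m)^2\matI_n.
\end{equation*}

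The whole proof is essentially a two-line translation of Proposition~\ref{prop_matrix_kantorovitch_ineq_form2}; there is no real obstacle beyond choosing the right ``base'' matrix, and the only ingredient one must be careful about is verifying that $\matA\otimes\matB$ is positive definite with its spectrum lying in the very interval $[m,M]$ that the lemma designates — which is immediate from Proposition~\ref{kron_svd} and the positive definiteness of $\matA$ and $\matB$. No appeal to Proposition~\ref{prop_hadamard_prod_pow_ineq} or to the correlation-matrix facts is needed for this direction; those will presumably be used for the complementary reverse inequality (Lemma~\ref{lem_reverse_kantorovich_hadamard_ver}).
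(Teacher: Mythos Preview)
Your proof is correct and follows essentially the same route as the paper: apply Proposition~\ref{prop_matrix_kantorovitch_ineq_form2} with $\AkB$ playing the role of the positive definite matrix and the selection matrix $\matJ$ playing the role of $\matV$, then use the mixed-product identity $(\AkB)^{2}=\matA^{2}\otimes\matB^{2}$ together with \eqref{hadamard_as_a_kron} to rewrite both sides in Hadamard form. Your extra remark about the spectrum of $\AkB$ lying in $[m,M]$ is immediate from the lemma's own hypothesis (which defines $m$ and $M$ that way), so no separate verification via Proposition~\ref{kron_svd} is needed.
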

\begin{proof}
Let $\matJ$ be the selection matrix such that $\matJ^{T}\matJ = \matI_{n}$ and $\AhB = \matJ^{T} (\AkB) \matJ$.
%Further, let $m$ and $M$ denote the minimum and maximum eigenvalues of $\AkB$.
Then, by applying Proposition \ref{prop_matrix_kantorovitch_ineq_form2} with $\matA$ replaced with $\AkB$, and $\matV$ replaced with $\matJ$, we obtain
\begin{equation}
 \matJ^{T}(\AkB)^{2}\matJ - (\matJ^{T}(\AkB)\matJ)^{2} \le \frac{1}{4}(M-m)^{2}\matI_{n}. 
 \nonumber
\end{equation}
Using Fact~$8.21.29$ in \cite{Bernstein09MatrixMath}, i.e., $(\AkB)^{2} = \matA^{2} \otimes \matB^{2}$, we get
\begin{equation}
  \matJ^{T}(\matA^{2} \otimes \matB^{2})\matJ - (\AhB)^{2} \le \frac{1}{4}(M-m)^{2}\matI_{n},
  \nonumber 
\end{equation}
Finally, by observing that $\matJ^{T} (\matA^{2} \otimes \matB^{2}) \matJ = \matA^{2} \circ\matB^{2}$, we obtain the desired result.
% \begin{equation}
%   \matA^{2} \circ \matB^{2} \le (\AhB)^{2} + \frac{1}{4}(M-m)^{2}\matI.
%   \nonumber 
% \end{equation}
\end{proof}

\begin{lemma} [Liu \cite{LiuS2002EconometricTheory}] \label{lem_reverse_kantorovich_hadamard_ver}
Let $\matA$, $\matB$ be $n \times n$ positive definite correlation matrices. Then,
\begin{equation} 
\matA^{1/2} \circ \matB^{1/2} \ge \frac{2 \sqrt{mM}}{ m + M} \matI_{n}
\end{equation}
where the eigenvalues of $\matA$ and $\matB$ lie inside $[m, M]$.
\end{lemma}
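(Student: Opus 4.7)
The plan is to lower-bound $\matA^{1/2} \circ \matB^{1/2}$ by a scalar multiple of the identity through a direct eigenvector-by-eigenvector argument, combining a scalar reverse-AM-GM inequality with the correlation-matrix normalization $\matA_{ii} = \matB_{ii} = 1$.

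First I would write the spectral decompositions $\matA = \sum_{k}\lambda_{k}^{A}\vecu_{k}^{A}(\vecu_{k}^{A})^{T}$ and $\matB = \sum_{\ell}\lambda_{\ell}^{B}\vecu_{\ell}^{B}(\vecu_{\ell}^{B})^{T}$, with $\{\vecu_{k}^{A}\}$ and $\{\vecu_{\ell}^{B}\}$ orthonormal bases of $\Real^{n}$ and eigenvalues in $[m, M]$. Exploiting the entrywise-verifiable identity $(\vecu\vecu^{T}) \circ (\vecv\vecv^{T}) = (\vecu \circ \vecv)(\vecu \circ \vecv)^{T}$ together with the bilinearity of the Hadamard product, I obtain the positive expansion
\[
\matA^{1/2} \circ \matB^{1/2} \;=\; \sum_{k,\ell}\sqrt{\lambda_{k}^{A}\lambda_{\ell}^{B}}\,(\vecu_{k}^{A} \circ \vecu_{\ell}^{B})(\vecu_{k}^{A} \circ \vecu_{\ell}^{B})^{T}.
\]

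Next I would invoke the scalar reverse-AM-GM bound $\sqrt{xy} \ge \frac{2\sqrt{mM}}{m+M}\cdot\frac{x+y}{2}$ valid for all $x, y \in [m, M]$; this follows by setting $t = \sqrt{x/y}$ and maximizing $(t + t^{-1})/2$ over $t \in [\sqrt{m/M},\sqrt{M/m}]$, where the maximum $(m+M)/(2\sqrt{mM})$ is attained at either endpoint. Applying this pointwise to every weight $\sqrt{\lambda_{k}^{A}\lambda_{\ell}^{B}}$ and using the fact that each outer product $(\vecu_{k}^{A} \circ \vecu_{\ell}^{B})(\vecu_{k}^{A} \circ \vecu_{\ell}^{B})^{T}$ is positive semidefinite yields
\[
\matA^{1/2} \circ \matB^{1/2} \;\ge\; \frac{\sqrt{mM}}{m+M}\sum_{k,\ell}(\lambda_{k}^{A} + \lambda_{\ell}^{B})\,(\vecu_{k}^{A} \circ \vecu_{\ell}^{B})(\vecu_{k}^{A} \circ \vecu_{\ell}^{B})^{T}.
\]

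Finally I would collapse the two halves of the double sum on the right. For the $\lambda_{k}^{A}$-piece, summing over $\ell$ first with $\sum_{\ell}\vecu_{\ell}^{B}(\vecu_{\ell}^{B})^{T} = \matI_{n}$ reduces $\sum_{\ell}(\vecu_{k}^{A} \circ \vecu_{\ell}^{B})(\vecu_{k}^{A} \circ \vecu_{\ell}^{B})^{T}$ to $(\vecu_{k}^{A}(\vecu_{k}^{A})^{T}) \circ \matI_{n}$, a diagonal matrix with $i$-th entry $(\vecu_{k}^{A})_{i}^{2}$. Summing over $k$ weighted by $\lambda_{k}^{A}$ then recovers $\diag{(\matA_{ii})} = \matI_{n}$ since $\matA$ is a correlation matrix; the $\lambda_{\ell}^{B}$-piece is handled identically, so the two pieces together give $\matA^{1/2} \circ \matB^{1/2} \ge \frac{2\sqrt{mM}}{m+M}\matI_{n}$. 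The main subtlety is recognizing that the correlation-matrix hypothesis is \emph{precisely} what makes these double sums telescope down to $\matI_{n}$; without unit diagonals, only a diagonal lower bound would emerge, and the clean scalar constant $2\sqrt{mM}/(m+M)$ would not be recovered.
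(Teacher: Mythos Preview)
Your argument is correct, but it is genuinely different from the paper's. The paper derives Lemma~\ref{lem_reverse_kantorovich_hadamard_ver} as a specialization of the Liu--Neudecker reverse Kantorovich inequality (Proposition~\ref{prop_liu_neudecker_kantorovitch_with_inverse}): one substitutes $\matA \leftarrow \matI_{n} \otimes \matB$, $\matB \leftarrow \matA^{1/2} \otimes \matB^{1/2}$, $\matX \leftarrow \matJ$ (the selection matrix with $\matJ^{T}(\cdot\otimes\cdot)\matJ = \cdot\circ\cdot$), and the correlation-matrix identity $\matA^{1/2}\circ\matB^{1/2}\le\matI_{n}$ from Proposition~\ref{prop_hadamard_corr_matrix_less_than_I} is used to simplify the left side. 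Your route instead bypasses the Kantorovich machinery entirely: you expand $\matA^{1/2}\circ\matB^{1/2}$ spectrally into a positive combination of rank-one terms, apply the scalar reverse AM--GM bound $\sqrt{xy}\ge\tfrac{2\sqrt{mM}}{m+M}\cdot\tfrac{x+y}{2}$ coefficient-by-coefficient, and then collapse the resulting double sums to $\matI_{n}$ using $\sum_{\ell}\vecu_{\ell}^{B}(\vecu_{\ell}^{B})^{T}=\matI_{n}$ together with $\matA_{ii}=\matB_{ii}=1$. Your approach is more elementary and self-contained, making the role of the unit-diagonal hypothesis completely explicit; the paper's approach has the advantage of fitting the lemma into a single Kantorovich-type framework alongside the companion forward inequality in Lemma~\ref{lem_matrix_kantorovich_ineq_2_hadamard_ver}.
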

Lemma~\ref{lem_reverse_kantorovich_hadamard_ver} follows from Proposition \ref{prop_liu_neudecker_kantorovitch_with_inverse}, by 
replacing $\matA$ with $\matI_{n} \otimes \matB$, $\matB$ with $\matA^{1/2} \otimes \matB^{1/2}$, and $\matX$ with $\matJ$, 
where $\matJ$ is the $n^{2} \times n$ binary selection matrix such that $\matJ^{T}\matJ = \matI$ and $\AhB = \matJ^{T} (\AkB) \matJ$.

\section{Proof of the Deterministic $k$-RIC Bound (Theorem~\ref{thm_ric_bound_for_kr})} \label{sec:proof_deterministic} 
The key idea used in bounding the $k$-RIC of the columnwise KR product $\matA \odot \matB$ 
is the observation that the Gram matrix of $\AkrB$ can 
be interpreted as a Hadamard product between the two correlation matrices 
$\matA^{T}\matA$ and $\matB^{T}\matB$, as mentioned in the following Proposition.
\begin{proposition}[Rao and Rao \cite{RaoNRao}] \label{prop_gramian_kr_as_hadamard}
For $\matA, \matB \in \Real^{m \times n}$, 
\begin{equation} \label{kr_as_hadamard}
(\AkrB)^{T}(\AkrB) = (\matA^{T}\matA) \circ (\matB^{T}\matB)
\end{equation}
\end{proposition}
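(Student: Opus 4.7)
The plan is to verify the identity entrywise, exploiting the well-known mixed-product property of the Kronecker product on vectors, namely $(\veca \otimes \vecb)^{T}(\vecc \otimes \vecd) = (\veca^{T}\vecc)(\vecb^{T}\vecd)$. First I would write the column partitions $\matA = [\veca_{1},\ldots,\veca_{n}]$ and $\matB = [\vecb_{1},\ldots,\vecb_{n}]$, so that by the definition of the columnwise Khatri-Rao product the $j^{\text{th}}$ column of $\AkrB$ is $\veca_{j}\otimes\vecb_{j}$.

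Next I would compute a generic $(i,j)$ entry of the Gram matrix on the left-hand side:
\begin{equation*}
\bigl[(\AkrB)^{T}(\AkrB)\bigr]_{i,j} = (\veca_{i}\otimes\vecb_{i})^{T}(\veca_{j}\otimes\vecb_{j}).
\end{equation*}
Invoking the mixed-product identity for Kronecker products of vectors, this collapses to the scalar product $(\veca_{i}^{T}\veca_{j})(\vecb_{i}^{T}\vecb_{j})$. I would then identify each factor: $\veca_{i}^{T}\veca_{j} = [\matA^{T}\matA]_{i,j}$ and $\vecb_{i}^{T}\vecb_{j} = [\matB^{T}\matB]_{i,j}$, so the product is exactly the $(i,j)$ entry of the elementwise (Hadamard) product $(\matA^{T}\matA)\circ(\matB^{T}\matB)$. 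Since the two matrices agree entrywise, the identity follows.

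There is no real obstacle here; the main point is simply to cite the mixed-product property of Kronecker products applied to vectors (a standard consequence of equation \eqref{defn_kron_prod}) and to recognize that elementwise multiplication of the two Gram matrices is precisely what the Khatri-Rao structure produces. If one preferred a coordinate-free derivation, an alternative would be to start from the selection-matrix representation $\AkrB = (\AkB)\matJ$ mentioned in the introduction, use Fact $7.1.6$-type identities of \cite{Bernstein09MatrixMath} to write $(\AkB)^{T}(\AkB) = (\matA^{T}\matA)\otimes(\matB^{T}\matB)$, and then apply the selection identity $\matJ^{T}(\matM\otimes\matN)\matJ = \matM\circ\matN$ from \eqref{hadamard_as_a_kron}; but the entrywise route is the most transparent and shortest.
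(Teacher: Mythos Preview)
Your entrywise argument is correct and complete: the mixed-product identity $(\veca\otimes\vecb)^{T}(\vecc\otimes\vecd)=(\veca^{T}\vecc)(\vecb^{T}\vecd)$ immediately yields the result, and your alternative selection-matrix route via \eqref{hadamard_as_a_kron} is also valid. The paper itself does not supply a proof at all---it simply cites Proposition~6.4.2 of \cite{RaoNRao}---so your proposal in fact gives more detail than the paper does.
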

\begin{proof}
See Proposition $6.4.2$ in \cite{RaoNRao}. 
\end{proof}

Then, by using the forward and reverse Kantorovich matrix inequalities,
we obtain the proposed upper bound for $k$-RIC of $\AkrB$ in Theorem~\ref{thm_ric_bound_for_kr} 
as explained in the following arguments. 

Without loss of generality, let $S \subset [n]$ be an arbitrary index set representing the nonzero 
support of $\vecz$ in \eqref{rip_of_kr}, with $|S| \le k$. Let $\matA_{S}$ denote the $m \times |S|$  submatrix of $\matA$, constituting $|S|$ 
columns of $\matA$ indexed by the set $S$. Let $\matB_{S}$ be constructed similarly.
Since $\delta_{k}(\matA)$, $\delta_{k}(\matB) < 1$, both $\matA_{S}$, $\matB_{S}$ have full column rank, and 
consequently the associated Gram matrices $\matA_{S}^{T}\matA_{S}$, $\matB_{S}^{T}\matB_{S}$ are positive definite. Further, 
since $\matA$ and $\matB$ have unit norm columns, both $\matA_{S}^{T}\matA_{S}$ and $\matB_{S}^{T}\matB_{S}$ are 
correlation matrices with unit diagonal entries. Using Proposition \ref{prop_gramian_kr_as_hadamard}, we can write
\begin{equation} \label{kr_as_hadamard_submat_form}
 (\matA_{S} \odot \matB_{S})^{T}(\matA_{S} \odot \matB_{S}) = \matA_{S}^{T}\matA_{S} \circ \matB_{S}^{T}\matB_{S}. 
\end{equation}
%By observing that $(\matA_{S} \odot \matB_{S})^{T}(\matA_{S} \odot \matB_{S})$ can be interpreted as a hadamard product 
%between two correlation matrices, and by using results from section \ref{sec:kantorovich_for_hadamard}, 
%we can derive upper and lower bounds for the maximum and minimum eigenvalues of $(\matA_{S} \odot \matB_{S})^{T}(\matA_{S} \odot \matB_{S})$, 
%respectively. The $k$-RIC of $\AkrB$ is subsequently characterized by using these upper and lower bounds.

Next, for $k \le m$, by applying Lemma \ref{lem_matrix_kantorovich_ineq_2_hadamard_ver} to the positive definite matrices 
$(\matA_{S}^{T}\matA_{S})^{1/2}$ and $(\matB_{S}^{T}\matB_{S})^{1/2}$, we get,
\begin{eqnarray}
 \matA_{S}^{T}\matA_{S} \circ \matB_{S}^{T}\matB_{S} \!\!\!\!&\le& \!\!\!\!
 \lb \! (\matA_{S}^{T}\matA_{S})^{\frac{1}{2}} \! \circ\! (\matB_{S}^{T}\matB_{S})^{\frac{1}{2}} \! \rb^{2} 
 \!\! + \! \frac{1}{4}(M\!-\!m)^{2}\matI_{k}
 \nonumber \\
 &\le&   \matI_{k} + \frac{1}{4}(M-m)^{2}\matI_{k} \label{eq_MKI2H},
\end{eqnarray}
where the second inequality is a consequence of the unity bound on the spectral radius of the Hadamard product between correlation matrices, shown 
in Proposition \ref{prop_hadamard_corr_matrix_less_than_I}.
In \eqref{eq_MKI2H}, $M$ and $m$ are upper and lower bounds for the maximum and minimum eigenvalues of 
$(\matA_{S}^{T}\matA_{S})^{1/2} \otimes (\matB_{S}^{T}\matB_{S})^{1/2}$, respectively. 
From the restricted isometry of $\matA$ and $\matB$, and by application of Proposition \ref{kron_svd}, 
the minimum and maximum eigenvalues of $(\matA_{S}^{T}\matA_{S})^{1/2} \otimes (\matB_{S}^{T}\matB_{S})^{1/2}$ 
are lower and upper bounded by $\sqrt{(1 - \delta_{k}^{\matA})(1 - \delta_{k}^{\matB})}$ and $\sqrt{(1 + \delta_{k}^{\matA})(1 + \delta_{k}^{\matB})}$, respectively.
By introducing $\delta \triangleq \text{max}(\delta_{k}^{\matA}, \delta_{k}^{\matB})$, it is easy to check that the eigenvalues of 
$(\matA_{S}^{T}\matA_{S})^{1/2} \otimes (\matB_{S}^{T}\matB_{S})^{1/2}$ also lie inside the interval $\ls 1 - \delta, 1 + \delta \rs$.
Plugging $m = 1 - \delta$ and $M = 1+ \delta$ in \eqref{eq_MKI2H}, and by using \eqref{kr_as_hadamard_submat_form}, we get
% \begin{eqnarray}
%  && (\matA_{S} \odot \matB_{S})^{T}(\matA_{S} \odot \matB_{S})
%  \le 
%  (1 + \delta_{k}^{\matA}\delta_{k}^{\matB} + \eta) \matI_{k} 
%  \\
%  %-----------------------------
% && \hspace{-1.4cm} \text{where } \eta = \frac{1}{2} \lb 1 - \delta_{k}^{\matA}\delta_{k}^{\matB} - \sqrt{ (1 - (\delta_{k}^{\matA})^{2})(1 - (\delta_{k}^{\matB})^{2}) } \rb .  
%  \nonumber 
% \end{eqnarray}
% By introducing $\delta \triangleq \text{max}(\delta_{k}^{\matA}, \delta_{k}^{\matB})$, we can obtain a loose but neat matrix inequality, as shown below.
\begin{equation} \label{thm_kr_rip_seed1}
 (\matA_{S} \odot \matB_{S})^{T}(\matA_{S} \odot \matB_{S})
 \le 
 \lb 1 + \delta^{2} \rb \matI_{k}.
\end{equation}
Similarly, by applying Lemma \ref{lem_reverse_kantorovich_hadamard_ver} to $\matA_{S}^{T}\matA_{S}$ and 
$\matB_{S}^{T}\matB_{S}$ with $m = 1 - \delta$ and $M = 1 + \delta$, we obtain
\begin{equation}  
 (\matA_{S}^{T}\matA_{S})^{1/2} \circ (\matB_{S}^{T}\matB_{S})^{1/2} \ge \lb \sqrt{1 - \delta^{2}} \rb \matI_{k}.
 \nonumber
\end{equation}
From Proposition~\ref{prop_hadamard_prod_pow_ineq}, we have $\matA_{S}^{T}\matA_{S} \circ \matB_{S}^{T}\matB_{S} \ge \lb (\matA_{S}^{T}\matA_{S})^{1/2} \circ (\matB_{S}^{T}\matB_{S})^{1/2} \rb^{2}$.
Therefore, we can write 
\begin{equation}  
 \matA_{S}^{T}\matA_{S} \circ \matB_{S}^{T}\matB_{S} \ge \lb 1 - \delta^{2} \rb \matI_{k}.
 \nonumber
\end{equation}
Further, using \eqref{kr_as_hadamard_submat_form}, we get
\begin{equation} \label{thm_kr_rip_seed2}
 (\matA_{S} \odot \matB_{S})^{T}(\matA_{S} \odot \matB_{S})
 \ge 
 \lb 1 - \delta^{2} \rb \matI_{k}.
\end{equation}
Finally, Theorem \ref{thm_ric_bound_for_kr}'s statement follows from \eqref{thm_kr_rip_seed1} and \eqref{thm_kr_rip_seed2}.

\section{Preliminaries for Probabilistic $k$-RIC Bound} \label{sec:bkgd_probabilistic_bnd}
In this section, we briefly discuss some concentration results for functions of 
subgaussian random variables which will appear in the proofs of 
Theorems~\ref{thm_kric_bound_X_neq_Y} and~\ref{thm_kric_bound_X_eq_Y}. 

%\subsection{Subgaussian Random Variables}
%A random variable $\vecx$ is called subgaussian, if its tail probability is dominated by that of a Gaussian random variable. 
%In other words, there exists a constant~$K > 0$ such that 
%\begin{equation}
% \PP(|\vecx| \ge t) \le e^{-t^2/K^2}.
%\end{equation}
%For a subgaussian random variable, its $p^{\text{th}}$ order moment grows only as fast as 
%$O(p^{p/2})$ \cite{Vershynin10RMT}. In other words, 
%there exists $K_{1} > 0$ such that
%\begin{equation}
%\lb \EE \lv \vecx \rv^{p} \rb^{\frac{1}{p}} \le K_{1} \sqrt{p}, \;\; p \ge 1. 
%\end{equation}
%The minimum such $K_{1}$ is called the sub-gaussian or $\psi_{2}$ norm of the random variable $\vecx$, i.e.,
%\begin{equation}
% \lvv \vecx \rvv_{\psi_{2}} = \sup_{p \ge 1} p^{-1/2} \lb \EE \lv \vecx \rv^{p} \rb^{\frac{1}{p}}.
%\end{equation}
%Gaussian, Bernoulli and all bounded random variables fall in the category of subgaussian random variables. 

\subsection{A Tail Probability for Subgaussian Vectors}
The theorem below presents the \textit{Hanson-Wright} inequality \cite{adamczak2015,Rudelson13HansonWrightIneq}, 
a tail probability for a quadratic form 
constructed using independent subgaussian random variables. 
\begin{theorem}[\textcolor{black}{Rudelson and Vershynin\cite{Rudelson13HansonWrightIneq}}] \label{thm_hanson_wright_subgaussian_conc}
Let $\vecx = (\vecx_{1}, \vecx_{2}, \ldots, \vecx_{n}) \in \Real^{n}$ be a random vector with independent components 
$\vecx_{i}$ satisfying $\EE \vecx_{i} = 0$ and $\lvv \vecx_{i} \rvv_{\psi_{2}} \le K$. Let $\matA$ be an $n \times n$ matrix. 
Then, for every $t \ge 0$,
\begin{eqnarray} \label{eqn_hanson_wright_ineq}
 && \hspace{-0.5cm} \PP \lc \lv \vecx^{T} \matA \vecx - \EE \vecx^{T} \matA \vecx \rv > t  \rc 
 \nonumber \\
 && \hspace{1cm} \le
 2 \exp{ \ls -c \min \lb \frac{t^2}{K^4 \lvv \matA \rvv_{HS}^{2}}, \frac{t}{K^2 \lvvv \matA \rvvv_{2}}  \rb \rs}
 \nonumber
\end{eqnarray}
where $c$ is a universal positive constant. 
\end{theorem}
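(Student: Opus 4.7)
The plan is to follow the Rudelson--Vershynin proof strategy, combining decoupling with conditional subgaussian MGF bounds. First I would split the centered quadratic form as
\[
\vecx^{T}\matA\vecx - \EE\,\vecx^{T}\matA\vecx = D + O,
\]
with diagonal part $D = \sum_{i}A_{ii}(\vecx_{i}^{2}-\EE\vecx_{i}^{2})$ and off-diagonal part $O = \sum_{i\neq j}A_{ij}\vecx_{i}\vecx_{j}$. A union bound reduces the theorem to proving the stated two-regime Bernstein inequality for each of $|D|$ and $|O|$ separately, which is legal because $\sum_{i}A_{ii}^{2}\le \lvv\matA\rvv_{HS}^{2}$ and $\max_{i}|A_{ii}|\le \lvvv\matA\rvvv_{2}$, with analogous dominations for the off-diagonal part. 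The diagonal contribution is then easy: the summands $\vecx_{i}^{2}-\EE\vecx_{i}^{2}$ are independent, centered, and subexponential with $\psi_{1}$-norm at most $CK^{2}$ (a standard squaring estimate), so the classical Bernstein inequality for sums of independent subexponentials yields the claim directly.

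The off-diagonal term $O$ is the heart of the argument. I would apply the de la Pe\~na--Montgomery-Smith decoupling inequality: for every convex $f$,
\[
\EE f(O) \le \EE f\Big(4\sum_{i,j}A_{ij}\vecx_{i}\vecx_{j}'\Big),
\]
where $\vecx'$ is an independent copy of $\vecx$. Taking $f(u)=e^{\lambda u}$ and invoking Chernoff reduces the task to bounding the MGF of the decoupled bilinear form $\langle\vecx,\matA\vecx'\rangle$. Conditioning first on $\vecx'$, this sum is a linear combination of independent centered subgaussians, so
\[
\EE_{\vecx}\exp\big(\lambda\langle\vecx,\matA\vecx'\rangle\big) \le \exp\big(C\lambda^{2}K^{2}\lvv\matA\vecx'\rvv_{2}^{2}\big).
\]
The surviving task is to control $\EE_{\vecx'}\exp(\mu\lvv\matA\vecx'\rvv_{2}^{2})$ for $\mu = C\lambda^{2}K^{2}$.

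For this residual piece I would use the Gaussian linearization identity $\exp(\mu\lvv u\rvv_{2}^{2}) = \EE_{g}\exp(\sqrt{2\mu}\,\langle u,g\rangle)$, where $g$ is a standard Gaussian vector independent of $\vecx'$. Fubini followed by one more subgaussian MGF estimate in $\vecx'$ gives
\[
\EE_{\vecx'}\exp\big(\mu\lvv\matA\vecx'\rvv_{2}^{2}\big) \le \EE_{g}\exp\big(2C\mu K^{2}\lvv\matA^{T} g\rvv_{2}^{2}\big),
\]
a purely Gaussian quadratic form whose Laplace transform equals $\det(\matI - 2s\,\matA\matA^{T})^{-1/2}$. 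The elementary bound $-\log(1-x)\le 2x$ on $[0,1/2]$ then yields $\EE_{g}\exp(s\lvv\matA^{T} g\rvv_{2}^{2})\le\exp(2s\lvv\matA\rvv_{HS}^{2})$ for $s\le 1/(4\lvvv\matA\rvvv_{2}^{2})$. Chaining the three MGF estimates produces $\log\EE\,e^{\lambda' O}\le C'\lambda'^{2}K^{4}\lvv\matA\rvv_{HS}^{2}$ valid on $|\lambda'|\le c_{0}/(K^{2}\lvvv\matA\rvvv_{2})$, and the Chernoff bound delivers the two-regime tail for $|O|$.

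The main obstacle is this residual estimate on $\EE_{\vecx'}\exp(\mu\lvv\matA\vecx'\rvv_{2}^{2})$: because it is itself a positive semidefinite subgaussian quadratic form, any direct appeal to the inequality being proven would be circular. The Gaussian-linearization trick sidesteps this by trading the square for one extra independent Gaussian expectation, thereby transferring the problem to a Gaussian chaos whose determinantal Laplace transform is explicit. Bookkeeping the absolute constants so that they combine through the two decoupling/MGF layers into precisely the advertised exponent pair $t^{2}/(K^{4}\lvv\matA\rvv_{HS}^{2})$ and $t/(K^{2}\lvvv\matA\rvvv_{2})$ is the only delicate remaining calculation.
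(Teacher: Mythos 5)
The paper does not prove this statement---it is quoted verbatim as the Hanson--Wright inequality with a citation to Rudelson and Vershynin---so there is no in-paper argument to compare against. Your outline (diagonal/off-diagonal split with Bernstein for the subexponential diagonal sum, de la Pe\~na--Montgomery-Smith decoupling for the off-diagonal chaos, conditional subgaussian MGF bounds, and the Gaussian-linearization/determinantal step to control $\EE\exp\bigl(\mu\lvv\matA\vecx'\rvv_{2}^{2}\bigr)$ without circularity) is a faithful and correct reconstruction of the cited Rudelson--Vershynin proof, so no gap needs to be flagged.
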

%\begin{proof}
%See \cite{Rudelson13HansonWrightIneq}. 
%\end{proof}

The following corollary of Hanson-Wright inequality discusses the concentation of 
weighted inner product between two subgaussian vectors.
\begin{corollary} \label{corr_weighted_inner_product_tail}
Let $\vecu = (\vecu_{1}, \vecu_{2}, \ldots, \vecu_{n}) \in \Real^{n}$ and $\vecv = (\vecv_{1}, \vecv_{2}, \ldots, \vecv_{n}) \in \Real^{n}$ be independent random vectors with independent subgaussian components satisfying $\EE \vecu_{i} = \EE \vecv_{i} = 0$ and $\lvv \vecu_{i} \rvv_{\psi_{2}} \le K$, 
$\lvv \vecv_{i} \rvv_{\psi_{2}} \le K$. Let $\matD$ be an $n \times n$ matrix. 
Then, for every $t \ge 0$,
\begin{eqnarray} 
 && \hspace{-0.5cm} \PP \lc \lv \vecu^{T} \matD \vecv \rv > t  \rc 
 \nonumber \\
 && \hspace{1cm} \le
 2 \exp{ \ls -c \min \lb \frac{t^2}{K^4 \lvv \matD \rvv_{HS}^{2}}, \frac{t}{K^2 \lvvv \matD \rvvv_{2}}  \rb \rs}
 \nonumber
\end{eqnarray}
where $c$ is a universal positive constant. 
\end{corollary}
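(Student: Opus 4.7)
The plan is to reduce Corollary~\ref{corr_weighted_inner_product_tail} to a direct application of the Hanson--Wright inequality (Theorem~\ref{thm_hanson_wright_subgaussian_conc}) by constructing a single subgaussian vector and a single matrix whose associated quadratic form reproduces the bilinear form $\vecu^{T} \matD \vecv$. Specifically, I would stack $\vecu$ and $\vecv$ into a vector $\vecw = \lb \vecu^{T}, \vecv^{T} \rb^{T} \in \Real^{2n}$, and define the $2n \times 2n$ block matrix
\begin{equation}
\tilde{\matD} \;=\; \frac{1}{2} \begin{pmatrix} \mathbf{0} & \matD \\ \matD^{T} & \mathbf{0} \end{pmatrix}.
\nonumber
\end{equation}
A direct calculation gives $\vecw^{T} \tilde{\matD} \vecw = \tfrac{1}{2}\lb \vecu^{T} \matD \vecv + \vecv^{T} \matD^{T} \vecu \rb = \vecu^{T} \matD \vecv$, since both terms are scalars and equal by transposition.

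Next I would verify the hypotheses of Theorem~\ref{thm_hanson_wright_subgaussian_conc} for $\vecw$ and $\tilde{\matD}$. The components of $\vecw$ are independent (this uses both the assumed independence within $\vecu$, within $\vecv$, and the independence of $\vecu$ from $\vecv$), each has zero mean, and each has subgaussian norm at most $K$. The independence of $\vecu$ and $\vecv$ also yields $\EE\, \vecw^{T} \tilde{\matD} \vecw = \EE\, \vecu^{T} \matD \vecv = 0$, so no centering correction is needed on the left-hand side.

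The remaining step is to relate the Hilbert--Schmidt and operator norms of $\tilde{\matD}$ to those of $\matD$. The block structure yields $\lvv \tilde{\matD} \rvv_{HS}^{2} = \tfrac{1}{4}\lb \lvv \matD \rvv_{HS}^{2} + \lvv \matD^{T} \rvv_{HS}^{2} \rb = \tfrac{1}{2} \lvv \matD \rvv_{HS}^{2}$, and the standard identity for anti-block symmetric matrices gives $\lvvv \tilde{\matD} \rvvv_{2} = \tfrac{1}{2} \lvvv \matD \rvvv_{2}$. Plugging these into the Hanson--Wright bound applied to $\vecw$ and $\tilde{\matD}$ produces exactly the claimed inequality, after absorbing the constant factors $\tfrac{1}{2}$ into the universal constant $c$.

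I do not anticipate any substantive obstacle: the only care needed is in the bookkeeping of the factor $\tfrac{1}{2}$ (so that the final constant $c$ differs from the one in Theorem~\ref{thm_hanson_wright_subgaussian_conc} but is still universal) and in noting explicitly that the joint independence of $\vecu$ and $\vecv$, together with their mean-zero property, makes $\EE\, \vecw^{T}\tilde{\matD}\vecw = 0$ so that Hanson--Wright applies directly to $|\vecu^{T}\matD\vecv|$ without a centering term.
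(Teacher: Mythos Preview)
Your proposal is correct and essentially identical to the paper's own argument: the paper also stacks $\vecx = [\vecu^{T},\vecv^{T}]^{T}$ and applies Hanson--Wright, the only cosmetic difference being that the paper takes the asymmetric block matrix $\matA = \begin{pmatrix}\mathbf{0} & \matD\\ \mathbf{0} & \mathbf{0}\end{pmatrix}$ (so $\lvv \matA\rvv_{HS}=\lvv \matD\rvv_{HS}$ and $\lvvv\matA\rvvv_{2}=\lvvv\matD\rvvv_{2}$ exactly, with no factor $\tfrac12$ to absorb) rather than your symmetrized $\tilde{\matD}$.
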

\begin{proof}
The desired tail bound is obtained by using the Hanson-Wright inequality in Theorem~\ref{thm_hanson_wright_subgaussian_conc} with 
$\vecx = \ls \vecu^{T} \vecv^{T} \rs^{T}$ and $\matA = \ls \mathbf{0}_{n\times n}\; \vert \;\matD; 
\mathbf{0}_{n\times n}\; \vert \;\mathbf{0}_{n\times n} \rs$. 
\end{proof}

\section{Proof of the Probabilistic $k$-RIC Bound (Theorem~\ref{thm_kric_bound_X_neq_Y})} \label{sec:proof_probabilistic}
The proof of Theorem~\ref{thm_kric_bound_X_neq_Y} starts with a variational definition of the $k$-RIC, 
$\delta_{k}\lb\frac{\matA}{\sqrt{m}} \odot \frac{\matB}{\sqrt{m}} \rb$ given below. 
\begin{equation} \label{variational_k_ric_char}
\delta_{k}\lb \frac{\matA}{\sqrt{m}} \odot \frac{\matB}{\sqrt{m}} \rb = 
\!\!\!\!\!\!\!\! \sup_{\substack{\vecz \in \Real^{n}, \\ \lvv \vecz \rvv_{2} = 1, \lvv \vecz \rvv_{0} \le k}} 
\!\!\lv \lvv \lb \frac{\matA}{\sqrt{m}} \odot \frac{\matB}{\sqrt{m}} \rb \vecz \rvv_{2}^{2} -1   \rv.
\end{equation}
In order to find a probabilistic upper bound for $\delta_{k}$, we intend to find a constant $\delta \in (0,1)$ such that 
$\PP(\delta_{k}\lb \frac{\matA}{\sqrt{m}} \odot \frac{\matB}{\sqrt{m}} \rb \ge \delta)$ 
is arbitrarily close to zero. We therefore consider the tail event 
\begin{equation}
\mathcal{E} \triangleq
 \lc  
 \sup_{\substack{\vecz \in \Real^{n}, \\ \lvv \vecz \rvv_{2} = 1, \lvv \vecz \rvv_{0} \le k}} 
 \lv  \lvv \lb \frac{\matA}{\sqrt{m}} \odot \frac{\matB}{\sqrt{m}} \rb \vecz \rvv_{2}^{2} - 1 \rv \ge \delta
 \rc,
 \label{var_k_char_2}
\end{equation}
and show that for $m$ sufficiently large, $\PP(\mathcal{E})$ can be driven arbitrarily close to zero. 
In other words, the constant $\delta$ serves as a probabilistic upper bound for 
$\delta_{k} \lb \frac{\matA}{\sqrt{m}} \odot \frac{\matB}{\sqrt{m}} \rb$.
Let $\setU_{k}$ denote the set of all $k$ or less sparse unit norm vectors in $\Real^{n}$. Then, using Proposition~\ref{prop_gramian_kr_as_hadamard}, the tail event in \eqref{var_k_char_2} can be rewritten as 
\begin{eqnarray}
\PP(\mathcal{E}) &=&  \PP \lb  
\underset{\vecz \in \setU_{k}}{\sup} \lv \vecz^{T} \lb \matA \odot \matB \rb^{T} \lb \matA \odot \matB \rb \vecz  - m^2 \rv \ge \delta m^2
\rb 
\nonumber \\
%-------------------------------------------- 
&& \hspace{-1.5cm} =  \PP \lb  
 \underset{\vecz \in \setU_{k}}{\sup} \lv \vecz^{T} \lb \matA^{T} \matA \circ \matB^{T} \matB \rb  \vecz  - m^2 \rv \ge \delta m^2
 \rb 
 \nonumber \\
%------------------- 
&& \hspace{-1.5cm} =
\PP \lb  \underset{\vecz \in \setU_{k}}{\sup} \lv \sum_{i=1}^{n} \sum_{j=1}^{n} z_{i} z_{j} \lb \veca_{i}^{T}\veca_{j} \rb \lb \vecb_{i}^{T}\vecb_{j} \rb  - m^2 \rv \ge \delta m^2
  \rb, 
\end{eqnarray}
where $\veca_{i}$ and $\vecb_{i}$ denote the $i$th column of $\matA$ and $\matB$, respectively. Further, by applying the triangle inequality and the union bound, the above tail probability splits as
\begin{eqnarray}
\PP(\mathcal{E})
&\le&  
 \PP \lb  
\underset{\vecz \in \setU_{k}}{\sup} \lv \sum_{i=1}^{n} z_{i}^{2} \lvv \veca_{i} \rvv^{2}_{2} \lvv \vecb_{i} \rvv_{2}^{2} - m^2 \rv \ge \alpha \delta m^2
\rb 
\nonumber \\
&& \hspace{-1.8cm} + \PP \lb  
\underset{\vecz \in \setU_{k}}{\sup} \lv \sum_{i=1}^{n} \sum_{\substack{j=1, j \neq i}}^{n} 
 z_{i} z_{j}  \veca_{i}^{T}\veca_{j}  \vecb_{i}^{T}\vecb_{j} \rv \ge (1-\alpha) \delta m^2 \!
 \rb. 
\label{eqn_first_split}
\end{eqnarray}
In the above, $\alpha \in (0,1)$ is a variational union bound parameter which can be optimized at a later stage. 
We now proceed to find separate upper bounds for each of the two probability 
terms in \eqref{eqn_first_split}.

The first probability term in \eqref{eqn_first_split} admits the following sequence of relaxations. 
\begin{eqnarray}
&&  \hspace{-0.5cm} \PP \lb  
\underset{\vecz \in \setU_{k}}{\sup} \lv \sum_{i=1}^{n} z_{i}^{2} \lvv \veca_{i} \rvv^{2}_{2} \lvv \vecb_{i} \rvv_{2}^{2} - m^2 \rv \ge \alpha \delta m^2
\rb 
\nonumber \\
% %--------------------------------------
% &=&  \PP \lb  
% \underset{\vecz \in \setU_{k}}{\sup} \lv \sum_{i=1}^{n} z_{i}^{2} \lb \lvv \vecx_{i} \rvv^{2}_{2} \lvv \vecy_{i} \rvv_{2}^{2} - m^2 \rb \rv \ge \alpha \delta m^2
% \rb 
% \nonumber \\
%--------------------------------------
&\stackrel{(a)}{\le}&  \PP \lb  
\underset{\vecz \in \setU_{k}}{\sup} \sum_{i=1}^{n} z_{i}^{2} \lv \lvv \veca_{i} \rvv^{2}_{2} \lvv \vecb_{i} \rvv_{2}^{2} - m^2 \rv \ge \alpha \delta m^2
\rb 
\nonumber \\
%--------------------------------------
&\stackrel{(b)}{\le}&  \PP \lb  
\underset{1 \le i \le n}{\max} \lv \lvv \veca_{i} \rvv^{2}_{2} \lvv \vecb_{i} \rvv_{2}^{2} - m^2 \rv \ge \alpha \delta m^2
\rb 
\nonumber \\
%--------------------------------------
&\stackrel{(c)}{=}& \PP \lb  
\bigcup_{1 \le i \le n} \lc \lv \lvv \veca_{i} \rvv^{2}_{2} \lvv \vecb_{i} \rvv_{2}^{2} - m^2 \rv \ge \alpha \delta m^2 \rc
\rb 
\nonumber \\
%--------------------------------------
%&\stackrel{(d)}{\le}& 
%\sum_{i=1}^{n} 
%\PP \lb \lv \lvv \veca_{i} \rvv^{2}_{2} \lvv \vecb_{i} \rvv_{2}^{2} - m^2 \rv \ge \alpha \delta m^2 \rb 
%\nonumber \\
%--------------------------------------
& \stackrel{(d)}{=} & 
n \PP \lb \lv \lvv \veca_{1} \rvv^{2}_{2} \lvv \vecb_{1} \rvv_{2}^{2} - m^2 \rv \ge \alpha \delta m^2 \rb
\nonumber \\
%-----------------------------
&\stackrel{(e)}{\le}& 
n \PP \lb \lv \lvv \veca_{1} \rvv^{2}_{2} - m \rv \lv \lvv \vecb_{1} \rvv^{2}_{2} - m \rv \ge \alpha \beta \delta m^2 \rb 
\nonumber \\
&& 
+ 2n \PP \lb \lv  \lvv \veca_{1} \rvv^{2}_{2} - m \rv \ge \frac{\alpha (1 - \beta) \delta m}{2} \rb
\nonumber \\
%----------------------------
&\stackrel{(f)}{\le}& 
2n \PP \lb \lv \lvv \veca_{1} \rvv^{2}_{2} - m \rv \ge \sqrt{\alpha \beta \delta} m \rb 
\nonumber \\
&& 
+ 2n \PP \lb \lv  \lvv \veca_{1} \rvv^{2}_{2} - m \rv \ge \frac{\alpha (1 - \beta) \delta m}{2} \rb.
\nonumber \\
%-----------------------
&\stackrel{(g)}{\le}& 
4n \PP \lb \lv \lvv \veca_{1} \rvv^{2}_{2} - m \rv \ge \frac{\alpha \delta m}{2} \lb 1 - \frac{\alpha \delta}{4} \rb \rb 
\nonumber \\
%-----------------------------
&\stackrel{(h)}{\le}& 
 8n  e^{-c m \frac{\alpha^{2} \delta^{2}}{4 \kappa_{o}^{4}} (1 - \alpha \delta/4)^{2} } 
 \nonumber \\
 %----------------
 &=& 8 n^{-\lb   \frac{c m \alpha^{2} \delta^{2} (1 - \alpha \delta/4)^{2} }{4 \kappa_{o}^{4} \log n}  -1 \rb}.
\label{eqn_intr1}
\end{eqnarray}
In the above, step ($a$) follows from the triangle inequality combined with the fact that $z_{i}^{2}$'s sum to one. 
The inequality in step ($b$) is a consequence of the fact that any nonnegative convex combination of $n$ arbitrary numbers 
is at most the maximum among the $n$ numbers. Step ($c$) is obtained by simply rewriting the tail event for the maximum of $n$ random variables 
as the union of tail events for the individual random variables. Step ($d$) is the application of 
the union bound over values of index $i \in [n]$ and exploiting the i.i.d. nature of the columns of $\matA$ and $\matB$. Step ($e$) is the union bound combined with the fact that for any two vectors $\veca, \vecb \in \Real^{m}$, the following triangle inequality holds: 
\begin{eqnarray}
 \lv \lvv \veca \rvv^{2}_{2} \lvv \vecb \rvv_{2}^{2} - m^2 \rv 
&\le& \lv \lb \lvv \veca \rvv^{2}_{2} - m \rb \lb \lvv \vecb \rvv^{2}_{2} - m \rb \rv 
\nonumber \\
%---------------
&& + m \lv  \lvv \veca \rvv^{2}_{2} - m \rv  
 + m \lv  \lvv \vecb \rvv^{2}_{2} - m \rv.
 \nonumber
\end{eqnarray}
In step ($e$), $\beta \in (0,1)$ is a variational union bound parameter. 
Step ($f$) is once again the union bound which exploits the fact that the columns 
$\veca_{1}$ and $\vecb_{1}$ are identically distributed. Step ($g$) is obtained 
by setting $\beta = \alpha \delta /4$. 
Lastly, step ($h$) is the Hanson-Wright inequality (Theorem~\ref{thm_hanson_wright_subgaussian_conc}) 
applied to the subgaussian vector~$\veca_{1}$.

Next, we turn our attention to the second tail probability term in \eqref{eqn_first_split}. We note that
\begin{eqnarray}
&& \hspace{-0.7cm}
 \underset{\vecz \in \setU_{k}}{\sup} \lv \sum_{i=1}^{n} \sum_{j=1, j \neq i}^{n} z_{i} z_{j} \veca_{i}^{T}\veca_{j} \vecb_{i}^{T}\vecb_{j}  \rv 
\nonumber \\
%------------------
&& \hspace{-0.1cm} \le 
\underset{\vecz \in \setU_{k}}{\sup} \sum_{i=1}^{n} 
 \sum_{j =1, j \neq i}^{n} \lv z_{i} z_{j} \rv \lv \veca_{i}^{T}\veca_{j} \rv \lv \vecb_{i}^{T}\vecb_{j} 
 \rv 
\nonumber \\
%----------------------
&& \hspace{-0.1cm} \le 
\underset{\vecz \in \setU_{k}}{\sup} \! \lb 
\sum_{i=1}^{n} \sum_{j =1, j \neq i}^{n} \! \lv z_{i} z_{j} \rv \rb \!
\lb \max_{\substack{i, j \in \text{supp}(\vecu), \\  i \neq j}}  \!
\lv \veca_{i}^{T}\veca_{j} \rv \lv \vecb_{i}^{T}\vecb_{j} \rv \rb
\nonumber \\
%-------------------------
&& \hspace{-0.1cm} \le 
k \lb \max_{\substack{i, j \in [n], \\  i \neq j}}  \!
\lv \veca_{i}^{T}\veca_{j} \rv \lv \vecb_{i}^{T}\vecb_{j} \rv \rb,
\label{eqn_intr2}
\end{eqnarray}
where the second step is the application of the H\"{o}lders inequality. The last step uses the 
fact that $\lvv \vecz \rvv_{1} \le \sqrt{k}$ for $\vecz \in \setU_{k}$. 
Using \eqref{eqn_intr2}, and by applying the union bound over $\binom{n}{2}$ possible distinct $(i, j)$ pairs, the second probability term in \eqref{eqn_first_split} can be bounded as
\begin{eqnarray}
 && \PP \lb  
\underset{\vecz \in \setU_{k}}{\sup} \lv \sum_{i=1}^{n} \sum_{j=1, j \neq i}^{n} z_{i} z_{j} \veca_{i}^{T}\veca_{j} \vecb_{i}^{T}\vecb_{j} \rv \ge (1-\alpha) \delta m^2
\rb
\nonumber \\
% %--------------------------
&& \hspace{1cm} \le 
\frac{n^2}{2} \PP \lb \lv \veca_{1}^{T}\veca_{2} \rv \lv \vecb_{1}^{T}\vecb_{2} \rv  \ge \frac{(1-\alpha)\delta m^2}{k} \rb 
\nonumber \\
% %--------------------------
&& \hspace{1cm} \le 
n^2 \PP \lb \lv \veca_{1}^{T}\veca_{2} \rv \ge \frac{\sqrt{(1-\alpha)\delta} m}{\sqrt{k}} \rb
\nonumber \\
%----------------------------
&& \hspace{1cm} \le 
2n^2 e^{-\frac{c (1- \alpha) \delta m}{\kappa_{o}^4 k}} = 2 n^{-\lb \frac{c (1-\alpha) \delta m}{\kappa_{o}^4 k \log n}  - 2\rb}.    
\label{eqn_intr4}
\end{eqnarray}
The last inequality in the above is obtained by using the tail bound for $|\veca_{1}^{T}\veca_{2}|$ 
from Corollary~\ref{corr_weighted_inner_product_tail}.
Finally, by combining \eqref{eqn_first_split}, \eqref{eqn_intr1} and \eqref{eqn_intr4}, and setting 
$\alpha = 1/2$, we obtain the following simplified tail bound, 
%\begin{equation}
%\PP(\mathcal{E}) \le 
%8 n^{- \lb \frac{c m \alpha^2 \delta^2 (1 - \alpha \delta/4)^2}{4 K^{4} \log n} -1 \rb}
%+ 2n^{- \lb \frac{c (1- \alpha) \delta m}{ K^2 k \log n} - 2\rb}.
%\end{equation}
%By setting $\alpha = 1/2$, the above tail bound simplifies to
\begin{equation}
\PP(\mathcal{E}) \le 
8 n^{- \lb \frac{c m \delta^2 (1 - \delta/8)^2}{16 \kappa_{o}^{4} \log n} -1 \rb}
+ 2 n^{- \lb \frac{c \delta m}{ 2 \kappa_{o}^4 k \log n} - 2\rb}.
\label{eqn_intr144}
\end{equation}
From \eqref{eqn_intr144}, for $m > 
\max{ \lb \frac{4 \gamma \kappa_{o}^2 k \log n}{c \delta} , \frac{32 \gamma \kappa_{o}^4 \log n}{c \delta^2 (1 - \delta/8)^2} \rb} $ and any $\gamma > 1$, we have $\PP(\mathcal{E}) < 10 n^{-2(\gamma -1)}$. Note that, in terms of $k$ and $n$, the
first term in the inequality for $m$ scales as
$k \log n$; it dominates the second term, which scales as $\log{n}$. This ends our proof.

\section{Conclusions} \label{sec:concluding_remarks}
%play the role of a sensing matrix for generating measurements 
%in many fundamental linear inverse problems. 
%Thee
%$$The necessary and sufficient conditions for successful signal recovery 
%in many structured signal recovery problems are reported in terms of 
%the restricted isometry constant of a certain Khatri-Rao product like matrix. 
In this work, we have analyzed the restricted isometry property of the columnwise 
Khatri-Rao product matrix in terms of its restricted isometry constants. 
We gave two upper bounds for the $k$-RIC of a generic columnwise Khatri-Rao product matrix.
The first $k$-RIC bound, a deterministic bound, is valid for the Khatri-Rao product of an 
arbitrary pair of input matrices of the same size with normalized columns. It is conveniently 
computed in terms of the $k$-RICs of the input matrices. %This bound confirms 
%that for any arbitrary pair of RIP satisfying matrices with normalized columns, their 
%columnwise Khatri-Rao product is an improved restricted isometry, with even smaller restricted isometry constant. 
%This bound is however useful only when the RIP order $k$ is smaller than $m$, 
%the number of rows in each input matrix. 
We also gave a probabilistic RIC bound for the columnwise KR product of a 
pair of random matrices with i.i.d. subgaussian entries. The probabilistic RIC 
bound is one of the key components needed for computing sample complexity bounds 
for several machine learning algorithms. 

The analysis of the RIP of Khatri-Rao product matrices in this article can be extended in multiple ways. 
The current RIC bounds can be extended to the Khatri-Rao product of three or more matrices.
More importantly, in order to relate the RICs to the dimensions of the input matrices, we had to resort to the randomness in their entries. Removing this randomness aspect of our results 
could be an interesting direction for future work.
%Future work should address the following:
%\begin{enumerate}
	%\item Verify the tightness of the proposed RIC bounds. 
	%\item Derive better deterministic RIC bounds for $k \ge m$ case.
	%\item Generalize the current RIC bounds to the Khatri-Rao product of three or more matrices.
%\end{enumerate}	

%\nocite{Hansen1981}
%\nocite{TAndo79ConcaveMapsOfPSDMat}
%\nocite{GPHStyan73}
%\nocite{Chen2014}

\appendix

\subsection{Proof of Proposition \ref{corr_mat_sqrt_hadamard_is_double_stochastic}}
\label{App:proof_corr_mat_sqrt_hadamard_is_double_stochastic}
\begin{proof}
Since $\matA$ is a correlation matrix, it admits the Schur decomposition, $\matA = \matU \mathbf{\Lambda} \matU^{T}$, 
with unitary $\matU$ and eigenvalue matrix $\mathbf{\Lambda} = \diag{\lb \lambda_{1}, \lambda_{2}, \ldots, \lambda_{n}\rb}$. 
Since $\matA$ is positive semi-definite, its nonnegative square-root exists and 
is given by $\matA^{1/2} = \matU \Lambda^{1/2} \matU^{T}$. Consider
\begin{eqnarray}
&& \hspace{-0.9cm}
\matA^{1/2} \circ \matA^{1/2} = 
%(\matU \mathbf{\Lambda}^{1/2} \matU^{T}) \circ (\matU \mathbf{\Lambda}^{1/2} \matU^{T})
%\nonumber \\
%%------------------------------------
%&& \hspace{-3cm} =
 \sum_{i = 1}^{n} \lambda_{i}^{1/2}\vecu_{i} \vecu_{i}^{T}  
\circ
 \sum_{j = 1}^{n} \lambda_{j}^{1/2}\vecu_{j} \vecu_{j}^{T}  
\nonumber \\
%------------------------------------
&& \hspace{1cm} =
\sum_{i = 1}^{n} \sum_{j = 1}^{n} \lambda_{i}^{1/2} \lambda_{j}^{1/2} \lb \vecu_{i} \vecu_{i}^{T} \rb 
\circ \lb \vecu_{j} \vecu_{j}^{T} \rb
\nonumber \\
%------------------------------------
&& \hspace{1cm} =
\sum_{i = 1}^{n} \sum_{j = 1}^{n} \lambda_{i}^{1/2} \lambda_{j}^{1/2} \lb \vecu_{i} \circ \vecu_{j} \rb 
\lb \vecu_{i} \circ \vecu_{j} \rb^{T}.
\label{eqn_sqrt_hadamard_expr}
\end{eqnarray}

The second equality above follows from the distributive property of the Hadamard product and the last step follows from Fact~$7.6.2$ in
\cite{Bernstein09MatrixMath}.
Using \eqref{eqn_sqrt_hadamard_expr}, we can show that the rows and columns of $\matA^{1/2} \circ \matA^{1/2}$ sum to one,
as follows:
\begin{eqnarray}
&& \hspace{-1cm} 
(\matA^{1/2} \circ \matA^{1/2}) \mathbf{1} = 
 \sum_{i = 1}^{n} \sum_{j = 1}^{n} \lambda_{i}^{1/2} \lambda_{j}^{1/2} \lb \vecu_{i} \circ \vecu_{j} \rb 
\lb \vecu_{i} \circ \vecu_{j} \rb^{T} \mathbf{1}
\nonumber \\
%---------------------------------
&& \hspace{1.5cm} =
 \sum_{i = 1}^{n} \sum_{j = 1}^{n} \lambda_{i}^{1/2} \lambda_{j}^{1/2} \lb \vecu_{i} \circ \vecu_{j} \rb 
\lb \vecu_{i} \circ \mathbf{1} \rb^{T} \vecu_{j}
\nonumber \\
%-----------------------------------
&& \hspace{1.5cm} =
 \sum_{i = 1}^{n} \sum_{j = 1}^{n} \lambda_{i}^{1/2} \lambda_{j}^{1/2} \lb \vecu_{i} \circ \vecu_{j} \rb 
\vecu_{i}^{T} \vecu_{j}
\nonumber \\
%-----------------------------------
&& \hspace{1.5cm} =
 \sum_{i = 1}^{n} \lambda_{i} \lb \vecu_{i} \circ \vecu_{i} \rb = \vecd \;\; (\text{say}). 
\nonumber
\end{eqnarray}

The above arguments follow from the orthonormality of the columns of $\matU$, and 
repeated application of Fact~$7.6.1$ in \cite{Bernstein09MatrixMath}.
Note that for $k \in [n]$, $\vecd(k) = \sum_{i=1}^{n} \lambda(i) \lb\vecu_{i}(k)\rb^{2} = [\matU \mathbf{\Lambda} \matU^{T}]_{kk} 
= \matA_{kk} = 1$. Thus, we have shown that $(\matA^{1/2} \circ \matA^{1/2}) \mathbf{1} = \mathbf{1}$. 
Likewise, it can be shown that $ \mathbf{1}^{T} (\matA^{1/2} \circ \matA^{1/2}) = \mathbf{1}^{T}$.
Thus, $\matA^{1/2} \circ \matA^{1/2}$ is doubly stochastic.
\end{proof}

% Using the above bound for $\EE \vecz$, we can derive a non centered tail probability for $\vecz$ as 
% \begin{equation}
%  \PP(\vecz \ge 2\sqrt{2 \nu} \eta) \le e^{-\eta^{2}}
% \end{equation}
% for $\eta \ge \sqrt{\pi}$.

% \subsection{Proof of Lemma \ref{lem_subgaussian_matrix_perentry_bnd}}
% \label{App:proof_lem_subgaussian_matrix_perentry_bnd}
% \begin{proof}
% Let $\matX$ and $\matY$ be random matrices with i.i.d. $\mathcal{N}(0, 1)$ entries. Then, 
% \begin{eqnarray}
%  \PP(\matX \notin \setQ) 
%  &=& \PP( \bigcup_{i \in [m], j \in [n]} \lv \matX_{ij} \rv \ge \gamma \sqrt{\log{n}} )
% \nonumber \\
% %------------------------
%  &\stackrel{U.B.}{\le}& m n \PP( \lv \matX_{ij} \rv \ge \gamma \sqrt{\log{n}} )
% \nonumber \\
% %------------------------
%  &\stackrel{m < n}{\le}& n^{2} e^{-\gamma^{2}\log{n}} =  \frac{2}{n^{(\gamma^{2}-2)}}.
% \nonumber 
% \end{eqnarray}
% \end{proof}

\subsection{Proof of Theorem \ref{thm_kric_bound_X_eq_Y}} \label{App:proof_thm_kric_bound_X_eq_Y}
\begin{proof}
The proof of Theorem~\ref{thm_kric_bound_X_eq_Y} follows along similar lines as that of Theorem~\ref{thm_kric_bound_X_neq_Y}. We consider the tail event
\begin{equation}
\mathcal{E}_{1} \triangleq
 \lc  
 \sup_{\vecz \in \setU_{k}} 
 \lv  \lvv \lb \frac{\matA}{\sqrt{m}} \odot \frac{\matA}{\sqrt{m}} \rb \vecz \rvv_{2}^{2} - 1 \rv \ge \delta
 \rc
 \label{var_k_rip_char},
\end{equation}
and show that for sufficiently large $m$, $\PP(\mathcal{E}_{1})$ can be driven arbitrarily close to zero, thereby 
implying that $\delta$ is a probabilistic upper bound for $\delta_{k}\lb (\matA/\sqrt{m} \odot \matA/\sqrt{m}) \rb$. Once again, $\setU_{k}$ denotes the set of all $k$ or less sparse unit norm vectors in $\Real^{m}$. We note that $\PP(\mathcal{E}_{1})$ admits the following union bound:
\begin{eqnarray}
\PP(\mathcal{E}_{1}) \!\!\! &=& \!\!\! \PP \lb  
\underset{\vecz \in \setU_{k}}{\sup} \lv \vecz^{T} \lb \matA \odot \matA \rb^{T} \lb \matA \odot \matA \rb \vecz  - m^2 \rv \ge \delta m^2
\rb 
\nonumber \\
%-------------------------------------------- 
&& \hspace{-1cm} = 
\PP \lb  
 \underset{\vecz \in \setU_{k}}{\sup} \lv \vecz^{T} \lb \matA^{T} \matA \circ \matA^{T} \matA \rb  \vecz  - m^2 \rv \ge \delta m^2
 \rb 
 \nonumber \\
%---------------------------------------------------
&& \hspace{-1cm} =
 \PP \lb  
 \underset{\vecz \in \setU_{k}}{\sup} \lv \sum_{i=1}^{n} \sum_{j=1}^{n} z_{i} z_{j} \lb \veca_{i}^{T}\veca_{j} \rb^{2}  - m^2 \rv \ge \delta m^2
 \rb 
 \nonumber \\ 
%---------------------------------------------------
&& \hspace{-1cm} \le 
 \PP \lb  
\underset{\vecz \in \setU_{k}}{\sup} \lv \sum_{i=1}^{n} z_{i}^{2} \lvv \veca_{i} \rvv^{4}_{2} - m^2 \rv \ge \alpha \delta m^2
\rb 
\nonumber \\
&&  \hspace{-1cm}
+ \PP \lb  
\underset{\vecz \in \setU_{k}}{\sup} \lv \sum_{i=1}^{n} \sum_{\substack{j=1, j \neq i}}^{n} \!\!\!\! z_{i} z_{j} \lb \veca_{i}^{T}\veca_{j} \rb^{2} \rv \ge (1-\alpha) \delta m^2
\rb. 
\label{eqn_first_split_2}
\end{eqnarray}
In the above, the second identity follows from Proposition~\ref{prop_gramian_kr_as_hadamard}.
The last inequality uses the triangle inequality followed by the union bound, with $\alpha \in (0,1)$ being a variational parameter 
to be optimized later. Similar to the proof of Theorem~\ref{thm_kric_bound_X_neq_Y}, we now derive separate upper bounds for each of the two probability 
terms in \eqref{eqn_first_split_2}.

The first term in \eqref{eqn_first_split_2} admits the following series of relaxations.
\begin{eqnarray}
&&  \hspace{-1.5cm} \PP \lb  
\underset{\vecz \in \setU_{k}}{\sup} \lv \sum_{i=1}^{n} z_{i}^{2} \lvv \veca_{i} \rvv^{4}_{2}  - m^2 \rv \ge \alpha \delta m^2
\rb 
\nonumber \\
%--------------------------------------
&\stackrel{(a)}{\le}&  \PP \lb  
\underset{\vecz \in \setU_{k}}{\sup} \sum_{i=1}^{n} z_{i}^{2} \lv \lvv \veca_{i} \rvv^{4}_{2} - m^2 \rv \ge \alpha \delta m^2
\rb 
\nonumber \\
%--------------------------------------
&\stackrel{(b)}{\le}&  \PP \lb  
\underset{1 \le i \le n}{\max} \lv \lvv \veca_{i} \rvv^{4}_{2} - m^2 \rv \ge \alpha \delta m^2
\rb 
\nonumber \\
%--------------------------------------
& \stackrel{(c)}{\le} & 
n \PP \lb \lv \lvv \veca_{1} \rvv^{4}_{2} - m^2 \rv \ge \alpha \delta m^2 \rb. 
\nonumber \\
%-------------------------------------
& \stackrel{(d)}{\le}&
n \PP \lb \lv \lvv \veca_{1} \rvv^{2}_{2} - m \rv^{2}  \ge \alpha \beta \delta m^2 \rb 
\nonumber \\
&&
+ n \PP \lb \lv  \lvv \veca_{1} \rvv^{2}_{2} - m \rv \ge \frac{\alpha (1 - \beta) \delta m}{2} \rb
\nonumber \\
%----------------------------
&\stackrel{(e)}{\le}& 
n \PP \lb \lv \lvv \veca_{1} \rvv^{2}_{2} - m \rv \ge \sqrt{\alpha \beta \delta} m \rb 
\nonumber \\
&&
+ n \PP \lb \lv  \lvv \veca_{1} \rvv^{2}_{2} - m \rv \ge \frac{\alpha (1 - \beta) \delta m}{2} \rb.
\nonumber \\
%----------------------------
&\stackrel{(f)}{\le}& 
2n \PP \lb \lv \lvv \veca_{1} \rvv^{2}_{2} - m \rv \ge \frac{\alpha \delta m}{2} \lb 1 - \frac{\alpha \delta}{4} \rb \rb 
\nonumber \\
%-----------------------------
&\stackrel{(g)}{\le}& 
 4n  e^{-c m \frac{\alpha^{2} \delta^{2}}{4 \kappa_{o}^4} (1 - \alpha \delta/4)^{2} }
 \nonumber \\
%------------------ 
 &=& 4 n^{-\lb \frac{c m \alpha^{2} \delta^{2} (1 - \alpha \delta/4)^2 }{4 \kappa_{o}^4  \log n} -1\rb}.
\label{eqn_intr12}
\end{eqnarray}
In the above, step ($a$) is the triangle inequality. The inequality 
in step ($b$) follows from the fact that nonnegative convex combination of 
$n$ arbitrary numbers is at most the maximum among the $n$ numbers. 
Step ($c$) is a union bound. Step ($d$) is also a union bounding argument with $\beta \in (0,1)$ as 
a variational parameter, combined 
with the fact that for any vector $\veca$, the triangle inequality 
%\begin{equation}  \label{eqn_second_split_2}
$ \lv \lvv \veca \rvv^{4}_{2} - m^2 \rv \le  \lv \lvv \veca \rvv^{2}_{2} - m \rv^{2} + 2m \lv  \lvv \veca \rvv^{2}_{2} - m \rv$
% \nonumber
%\end{equation}
is always true. 
Step ($f$) is obtained by choosing the union bound parameter $\beta = \alpha \delta /4$.
Finally, step ($g$) is the Hanson-Wright inequality 
(Theorem~\ref{thm_hanson_wright_subgaussian_conc}) applied to the subgaussian vector $\veca_{1}$.

Next, we derive an upper bound for the second tail probability term in \eqref{eqn_first_split}. We observe that
\begin{eqnarray}
\underset{\vecz \in \setU_{k}}{\sup} \lv \sum_{i=1}^{n} \sum_{j=1, j \neq i}^{n} z_{i} z_{j} \lb \veca_{i}^{T}\veca_{j} \rb^2  \rv && 
\nonumber \\
%--------------
&& \hspace{-4cm} \le 
\underset{\vecz \in \setU_{k}}{\sup} \sum_{i=1}^{n} 
 \sum_{j =1, j \neq i}^{n} \lv z_{i} z_{j} \rv \lb \veca_{i}^{T}\veca_{j} \rb^2 
\nonumber \\
%----------------------
&& \hspace{-4cm} \le 
\underset{\vecz \in \setU_{k}}{\sup} \lb \sum_{i=1}^{n} \sum_{j =1, j \neq i}^{n} \lv z_{i} z_{j} \rv \rb
\lb \max_{\substack{i, j \in \text{supp}(\vecu), \\  i \neq j}}  
\lv \veca_{i}^{T}\veca_{j} \rv^2 \rb
\nonumber \\
%-------------------------
&& \hspace{-4cm} \le 
k \lb \max_{\substack{i, j \in [n], \\  i \neq j}}  
\lv \veca_{i}^{T}\veca_{j} \rv^2 \rb,
\label{eqn_intr2_2}
\end{eqnarray}
where the second step is the application of the H\"{o}lders inequality. The last step uses the 
fact that $\lvv \vecz \rvv_{1} \le \sqrt{k}$ for $\vecz \in \setU_{k}$. 
Using \eqref{eqn_intr2_2}, and by applying the union bound over $\binom{n}{2}$ possible distinct $(i, j)$ pairs, the second probability term in \eqref{eqn_first_split} can be bounded as
\begin{eqnarray}
 && \PP \lb  
\underset{\vecz \in \setU_{k}}{\sup} \lv \sum_{i=1}^{n} \sum_{j=1, j \neq i}^{n} z_{i} z_{j} \lb \veca_{i}^{T}\veca_{j} \rb^2 \rv \ge (1-\alpha) \delta m^2 
\rb 
\nonumber \\
% %--------------------------
&& \hspace{1cm} \le 
\frac{n^2}{2} \PP \lb \lv \veca_{1}^{T}\veca_{2} \rv^2  \ge \frac{(1-\alpha)\delta m^2}{k} \rb 
\nonumber \\
% %--------------------------
&& \hspace{1cm} \le 
n^2 \PP \lb \lv \veca_{1}^{T}\veca_{2} \rv \ge \frac{\sqrt{(1-\alpha)\delta} m}{\sqrt{k}} \rb
\nonumber \\
%----------------------------
&& \hspace{1cm} \le 
n^2 e^{-\frac{c (1- \alpha) \delta m}{\kappa_{o}^4 k}} 
= n^{-\lb \frac{c (1-\alpha) \delta m}{\kappa_{o}^4 k \log n}  - 2\rb}.    
\label{eqn_intr60}
\end{eqnarray}
The last inequality in the above is obtained by using the tail bound for $|\veca_{1}^{T}\veca_{2}|$ 
from Corollary~\ref{corr_weighted_inner_product_tail}.
Finally, by combining \eqref{eqn_first_split_2}, \eqref{eqn_intr12} and \eqref{eqn_intr60}, and setting 
$\alpha = 1/2$, we obtain the following tail bound. 
%\begin{equation}
%\PP(\mathcal{E}) \le 
%8 n^{- \lb \frac{c m \alpha^2 \delta^2 (1 - \alpha \delta/4)^2}{4 K^{4} \log n} -1 \rb}
%+ 2n^{- \lb \frac{c (1- \alpha) \delta m}{ K^2 k \log n} - 2\rb}.
%\end{equation}
%By setting $\alpha = 1/2$, the above tail bound simplifies to
\begin{equation}
\PP(\mathcal{E}) \le 
4 n^{- \lb \frac{c m \delta^2 (1 - \delta/8)^2}{16 \kappa_{o}^{4} \log n} -1 \rb}
+ n^{- \lb \frac{c \delta m}{ 2 \kappa_{o}^4 k \log n} - 2\rb}.
\label{eqn_intr44}
\end{equation}
From \eqref{eqn_intr44}, we can conclude that for $m > 
\max{ \lb \frac{4 \gamma \kappa_{o}^4 k \log n}{c \delta} , \frac{32 \gamma \kappa_{o}^4 \log n}{c \delta^2 (1 - \delta/8)^2} \rb} $ and any $\gamma > 1$, we have $\PP(\mathcal{E}) < 5 n^{-2(\gamma -1)}$. Note that, in terms of $k$ and $n$, the
first term in the inequality for $m$ scales as
$\frac{k \log n}{\delta}$; it dominates the second term, which scales as $\log{n}$.
\end{proof}

%\nocite{adamczak2015}
%\nocite{Bhaskara14Tensor}
%\nocite{Adamczak11NeighborlyPolytopes}
%\nocite{Pfetsch14RIPCharSDP}

%\clearpage
\bibliographystyle{IEEEtran}
\bibliography{IEEEabrv,bibJournalList,Prob_KR_RIC_bounds}

% Generated by IEEEtran.bst, version: 1.13 (2008/09/30)
\begin{thebibliography}{10}
\providecommand{\url}[1]{#1}
\csname url@samestyle\endcsname
\providecommand{\newblock}{\relax}
\providecommand{\bibinfo}[2]{#2}
\providecommand{\BIBentrySTDinterwordspacing}{\spaceskip=0pt\relax}
\providecommand{\BIBentryALTinterwordstretchfactor}{4}
\providecommand{\BIBentryALTinterwordspacing}{\spaceskip=\fontdimen2\font plus
\BIBentryALTinterwordstretchfactor\fontdimen3\font minus
  \fontdimen4\font\relax}
\providecommand{\BIBforeignlanguage}[2]{{%
\expandafter\ifx\csname l@#1\endcsname\relax
\typeout{** WARNING: IEEEtran.bst: No hyphenation pattern has been}%
\typeout{** loaded for the language `#1'. Using the pattern for}%
\typeout{** the default language instead.}%
\else
\language=\csname l@#1\endcsname
\fi
#2}}
\providecommand{\BIBdecl}{\relax}
\BIBdecl

\bibitem{KhatriRao68}
C.~G. Khatri and C.~R. Rao, ``Solutions to some functional equations and their
  applications to characterization of probability distributions,''
  \emph{Sankhya: The Indian Journal of Statistics, Series A (1961-2002)},
  vol.~30, no.~2, pp. 167--180, 1968.

\bibitem{Bernstein09MatrixMath}
D.~S. Bernstein, \emph{Matrix Mathematics: Theory, Facts, and Formulas (Second
  Edition)}.\hskip 1em plus 0.5em minus 0.4em\relax Princeton University Press,
  2009.

\bibitem{LiuTrenkler}
S.~Liu and G.~Trenkler, ``Hadamard, {K}hatri-{R}ao, {K}ronecker and other
  matrix products,'' \emph{International Journal of Information and System
  Sciences}, vol.~4, no.~1, pp. 160--177, 2007.

\bibitem{Duarte12KroneckerCS}
M.~F. Duarte and R.~G. Baraniuk, ``Kronecker compressive sensing,''
  \emph{{IEEE} Trans. Image Process.}, vol.~21, no.~2, pp. 494--504, Feb. 2012.

\bibitem{PPal15SrcLocalization}
A.~Koochakzadeh and P.~Pal, ``Sparse source localization in presence of
  co-array perturbations,'' in \emph{International Conference on Sampling
  Theory and Applications (SampTA)}, May 2015, pp. 563--567.

\bibitem{Romero16CovSense}
D.~Romero, D.~D. Ariananda, Z.~Tian, and G.~Leus, ``Compressive covariance
  sensing: Structure-based compressive sensing beyond sparsity,'' \emph{{IEEE}
  Signal Process. Mag.}, vol.~33, no.~1, pp. 78--93, Jan. 2016.

\bibitem{Dasarathy15CovMatEst}
G.~Dasarathy, P.~Shah, B.~N. Bhaskar, and R.~D. Nowak, ``Sketching sparse
  matrices, covariances, and graphs via tensor products,'' \emph{{IEEE} Trans.
  Inf. Theory}, vol.~61, no.~3, pp. 1373--1388, Mar. 2015.

\bibitem{Ma10DoaEstViaKR}
W.~K. Ma, T.~H. Hsieh, and C.~Y. Chi, ``{DOA} estimation of quasi-stationary
  signals with less sensors than sources and unknown spatial noise covariance:
  A {K}hatri-{R}ao subspace approach,'' \emph{{IEEE} Trans. Signal Process.},
  vol.~58, no.~4, pp. 2168--2180, Apr. 2010.

\bibitem{Sidiropoulos12CSTensors}
N.~D. Sidiropoulos and A.~Kyrillidis, ``Multi-way compressed sensing for sparse
  low-rank tensors,'' \emph{{IEEE} Signal Process. Lett.}, vol.~19, no.~11, pp.
  757--760, Nov. 2012.

\bibitem{Donoho06StableRecovery}
D.~L. Donoho, M.~Elad, and V.~N. Temlyakov, ``Stable recovery of sparse
  overcomplete representations in the presence of noise,'' \emph{{IEEE} Trans.
  Inf. Theory}, vol.~52, no.~1, pp. 6--18, Jan. 2006.

\bibitem{FoucartRauhut13CSBook}
S.~Foucart and H.~Rauhut, \emph{A Mathematical Introduction to Compressive
  Sensing}.\hskip 1em plus 0.5em minus 0.4em\relax Birkh{\"a}user Basel, 2013.

\bibitem{CandesTaoRomberg06StableRecovery}
E.~J. Cand\`{e}s, J.~K. Romberg, and T.~Tao, ``Stable signal recovery from
  incomplete and inaccurate measurements,'' \emph{Communications on Pure and
  Applied Mathematics}, vol.~59, no.~8, pp. 1207--1223, 2006.

\bibitem{CandesTao05}
E.~J. Candes and T.~Tao, ``Decoding by linear programming,'' \emph{{IEEE}
  Trans. Inf. Theory}, vol.~51, no.~12, pp. 4203--4215, Dec 2005.

\bibitem{Baraniuk08ASimpleProof}
R.~Baraniuk, M.~Davenport, R.~DeVore, and M.~Wakin,
  ``\BIBforeignlanguage{English}{A simple proof of the restricted isometry
  property for random matrices},''
  \emph{\BIBforeignlanguage{English}{Constructive Approximation}}, vol.~28,
  no.~3, pp. 253--263, 2008.

\bibitem{Cai10RICbound}
T.~T. Cai, L.~Wang, and G.~Xu, ``New bounds for restricted isometry
  constants,'' \emph{{IEEE} Trans. Inf. Theory}, vol.~56, no.~9, pp.
  4388--4394, Sept 2010.

\bibitem{PPal15CovMMV}
P.~Pal and P.~P. Vaidyanathan, ``Pushing the limits of sparse support recovery
  using correlation information,'' \emph{{IEEE} Trans. Signal Process.},
  vol.~63, no.~3, pp. 711--726, Feb. 2015.

\bibitem{WipfRao07MSBL}
D.~P. Wipf and B.~D. Rao, ``An empirical {B}ayesian strategy for solving the
  simultaneous sparse approximation problem,'' \emph{IEEE Transactions on
  Signal Processing}, vol.~55, no.~7, pp. 3704--3716, Jul. 2007.

\bibitem{KhannaCRM17MSBLSuffCond}
\BIBentryALTinterwordspacing
S.~Khanna and C.~R. Murthy, ``On the support recovery of jointly sparse
  {G}aussian sources using sparse {B}ayesian learning,'' \emph{CoRR}, vol.
  abs/1703.04930. [Online]. Available: \url{http://arxiv.org/abs/1703.04930}
\BIBentrySTDinterwordspacing

\bibitem{Chepuri17GraphSamp}
S.~P. Chepuri and G.~Leus, ``Graph sampling for covariance estimation,''
  \emph{IEEE Transactions on Signal and Information Processing over Networks},
  vol.~3, no.~3, pp. 451--466, Sept 2017.

\bibitem{Sidiropoulous12PARAFAC}
N.~D. Sidiropoulos and A.~Kyrillidis, ``Multi-way compressed sensing for sparse
  low-rank tensors,'' \emph{IEEE Signal Processing Letters}, vol.~19, no.~11,
  pp. 757--760, Nov. 2012.

\bibitem{Sidiropolos09MIMORadar}
D.~Nion and N.~D. Sidiropoulos, ``A {PARAFAC}-based technique for detection and
  localization of multiple targets in a {MIMO} radar system,'' in \emph{Proc.\
  {ICASSP}}, 2009, pp. 2077--2080.

\bibitem{Pfetsch14RIPCalcNPHard}
A.~M. Tillmann and M.~E. Pfetsch, ``The computational complexity of the
  restricted isometry property, the nullspace property, and related concepts in
  compressed sensing,'' \emph{{IEEE} Trans. Inf. Theory}, vol.~60, no.~2, pp.
  1248--1259, Feb. 2014.

\bibitem{HornAndJohnson}
R.~A. Horn and C.~R. Johnson, Eds., \emph{Matrix Analysis}.\hskip 1em plus
  0.5em minus 0.4em\relax Cambridge University Press, 1986.

\bibitem{SadeghJokar10KCS}
S.~Jokar, ``Sparse recovery and {K}ronecker products,'' in \emph{44th Annual
  Conference on Information Sciences and Systems}, Mar. 2010, pp. 1--4.

\bibitem{BhaskaraMoitra14SmoothedAna}
A.~Bhaskara, M.~Charikar, A.~Moitra, and A.~Vijayaraghavan, ``Smoothed analysis
  of tensor decompositions,'' in \emph{Proceedings of the 46th Annual ACM
  Symposium on Theory of Computing}, 2014, pp. 594--603.

\bibitem{AndersonBelinGoyal14}
J.~Anderson, M.~Belkin, N.~Goyal, L.~Rademacher, and J.~R. Voss, ``The more,
  the merrier: the blessing of dimensionality for learning large {G}aussian
  mixtures,'' in \emph{27th Annual Conference on Learning Theory}, Jun. 2014,
  pp. 1135--1164.

\bibitem{Sidiropoulos2000Kruskal}
N.~D. Sidiropoulos and R.~Bro, ``On the uniqueness of multilinear decomposition
  of {N}-way arrays,'' \emph{Journal of Chemometrics}, vol.~14, no.~3, pp.
  229--239, 2000.

\bibitem{KoiranZouzias14RIPCert}
P.~Koiran and A.~Zouzias, ``Hidden cliques and the certification of the
  restricted isometry property,'' \emph{{IEEE} Trans. Inf. Theory}, vol.~60,
  no.~8, pp. 4999--5006, Aug. 2014.

\bibitem{Khanna17RDCMP}
S.~Khanna and C.~R. Murthy, ``R\'{e}nyi divergence based covariance matching
  pursuit of joint sparse support,'' in \emph{Sig.\ Process.\ Advances in
  Wireless Commun.\ (SPAWC)}, Jul. 2017, pp. 1 -- 5.

\bibitem{Vershynin10RMT}
R.~Vershynin, ``Introduction to the non-asymptotic analysis of random
  matrices,'' 2010.

\bibitem{Visick2000HadamardAsSubMat}
G.~Visick, ``A quantitative version of the observation that the {H}adamard
  product is a principal submatrix of the {K}ronecker product,'' \emph{Linear
  Algebra and its Applications}, vol. 304, no. 1–3, pp. 45 -- 68, 2000.

\bibitem{horn94}
R.~A. Horn and C.~R. Johnson, \emph{Topics in Matrix Analysis}.\hskip 1em plus
  0.5em minus 0.4em\relax Cambridge; New York: Cambridge University Press,
  1994.

\bibitem{MondPecaric1998}
B.~Mond and J.~E. Pe\u{c}ari\'{c}, ``Inequalities for the {H}adamard product of
  matrices,'' \emph{SIAM J. Matrix Anal. Appl.}, vol.~19, no.~1, pp. 66--70,
  Jan. 1998.

\bibitem{Image26}
H.~J. Werner, ``Hadamard product of square roots of correlation matrices,''
  \emph{Image}, vol.~26, pp. 1--32, Apr. 2001.

\bibitem{MarshallOlkin64}
A.~W. Marshall and I.~Olkin, ``Reversal of the {L}yapunov, {H}{\"o}lder, and
  {M}inkowski inequalities and other extensions of the {K}antorovich
  inequality,'' \emph{Journal of Mathematical Analysis and Applications},
  vol.~8, no.~3, pp. 503 -- 514, 1964.

\bibitem{LiuNeudecker96}
S.~Liu and H.~Neudecker, ``Several matrix {K}antorovich-type inequalities,''
  \emph{Journal of Mathematical Analysis and Applications}, vol. 197, no.~1,
  pp. 23 -- 26, 1996.

\bibitem{LiuNeudecker97}
------, ``Kantorovich inequalities and efficiency comparisons for several
  classes of estimators in linear models,'' \emph{Statistica Neerlandica},
  vol.~51, no.~3, pp. 345--355, 1997.

\bibitem{LiuS2002EconometricTheory}
S.~Liu, ``On the {H}adamard product of square roots of correlation matrices,''
  \emph{Econometric Theory}, vol.~18, no.~4, p. 1007, Aug. 2002.

\bibitem{RaoNRao}
C.~R. Rao and M.~B. Rao, \emph{Matrix Algebra and Its Applications to
  Statistics and Econometrics}.\hskip 1em plus 0.5em minus 0.4em\relax
  Singapore: World Scientific, 1998.

\bibitem{adamczak2015}
R.~Adamczak, ``A note on the {H}anson-{W}right inequality for random vectors
  with dependencies,'' \emph{Electron. Commun. Probab.}, vol.~20, pp. 72--84,
  2015.

\bibitem{Rudelson13HansonWrightIneq}
M.~Rudelson and R.~Vershynin, ``Hanson-{W}right inequality and sub-gaussian
  concentration,'' \emph{Electron. Commun. Probab.}, vol.~18, pp. 82--91, 2013.

\end{thebibliography}

\ifdefined \SKIPTEMP
\begin{IEEEbiography}[{\includegraphics[width=1in, height=2.25in,clip,keepaspectratio]{Saurabh_Khanna_photo.eps}}]{Saurabh Khanna} received 
the B.\ Tech.\ degree in Electrical Engineering from the Indian Institute of Technology, Kanpur in 2007. 
From 2007 to 2016, he was with Texas Instruments, Bangalore working on firmware and algorithm design for WLAN transceivers, Global Navigation Satellite System (GNSS) based user localization and FMCW radars. He is currently pursuing Ph.\ D.\ degree in Electrical 
Communication Engineering at Indian Institute of Science, Bangalore, India. His research interests are in the areas of structured signal processing, inverse problems and statistical learning theory.  
\end{IEEEbiography}

\begin{IEEEbiography}[{\includegraphics[width=1in,height=2.25in,clip,keepaspectratio]{Chandra_Murthy_photo.eps}}]{Chandra R. Murthy} (S'03--M'06--SM'11) received 
the B.\ Tech. degree in Electrical Engineering from the Indian Institute of Technology, Madras in 1998, the M.\ S.\ and Ph.\ D.\ degrees in 
Electrical and Computer Engineering from Purdue University and the University of California, San Diego, in 2000 and 2006, respectively.
From 2000 to 2002, he worked as an engineer for Qualcomm Inc., where he worked on WCDMA baseband transceiver design and 802.11b baseband 
receivers. From Aug. 2006 to Aug. 2007, he worked as a staff engineer at Beceem Communications Inc.\ on advanced receiver architectures 
for the 802.16e Mobile WiMAX standard. In Sept. 2007, he joined the Department of Electrical Communication Engineering at the Indian 
Institute of Science, Bangalore, India, where he is currently working as an Associate Professor. 

His research interests are in the areas of energy harvesting communications, multiuser MIMO systems, and sparse signal recovery techniques 
applied to wireless communications. 
His paper won the best paper award in the communications track in the National Conference on Communications 2014. 
He has coauthored 47 journal and 82 conference papers.
He was an associate 
editor for the IEEE Signal Processing Letters during 2012-16. He is an elected member of the IEEE SPCOM Technical Committee for the 
years 2014-2016, and has been reelected for the years  2017-2019.
He is a past Chair of the IEEE Signal Processing Society, Bangalore Chapter, and is currently serving as 
an associate editor for the IEEE Transactions on Signal Processing, the Sadhana journal, and the IEEE Transactions on Communications.
\end{IEEEbiography}
\fi

%\end{spacing}
\end{document}